\newtheorem{theorem}{Theorem}
\newtheorem{lemma}{Lemma}
\begin{document}
\title{Achievable Diversity Order of HARQ-Aided Downlink NOMA Systems}
\author{Zheng~Shi,
        Chenmeng~Zhang,
        Yaru~Fu,
        Hong~Wang,
        Guanghua~Yang,
        and Shaodan~Ma
\thanks{Copyright (c) 2015 IEEE. Personal use of this material is permitted. However, permission to use this material for any other purposes must be obtained from the IEEE by sending a request to pubs-permissions@ieee.org.}
\thanks{Manuscript received April 23, 2019; revised September 16, 2019; accepted October 10, 2019. This work was supported in part by the National Key Research and Development Program of China under Grant 2017YFE0120600, in part by National Natural Science Foundation of China under Grants 61801192, 61801246 and 61601524, in part by Guangdong Basic and Applied Basic Research Foundation under Grant 2019A1515012136, in part by the Science and Technology Planning Project of Guangdong Province under Grant 2018B010114002, in part by the Science and Technology Development Fund, Macau SAR (File no. 0036/2019/A1 and File no. SKL-IOTSC2018-2020), in part by Open Research Foundation of National Mobile Communications Research Laboratory of Southeast University under Grant 2018D09, and in part by the Research Committee of University of Macau under Grant MYRG2018-00156-FST. The associate editor coordinating the review of this paper and approving it for publication was Daniel Benevides da Costa. (Corresponding author: Guanghua Yang.)}
\thanks{Zheng Shi and Guanghua Yang are with the Institute of Physical Internet and the School of Intelligent Systems Science and Engineering, Jinan University, Zhuhai 519070, China (e-mails:shizheng0124@gmail.com, ghyang@jnu.edu.cn).}
\thanks{Chenmeng~Zhang is with the School of Telecommunications Engineering, Xidian University, Xi'an 710000, China (e-mail:zcm\_stu@163.com).}
\thanks{Yaru Fu is with the Department of Information Systems Technology and Design, Singapore University of Technology and Design, Singapore (e-mail:yaru\_fu@sutd.edu.sg).}
\thanks{H. Wang is with the School of Communication and Information Engineering, Nanjing University of Posts and Telecommunications, Nanjing 210003, China, and also with the National Mobile Communications Research Laboratory, Southeast University, Nanjing 210096, China (e-mail: wanghong@njupt.edu.cn).}
\thanks{Shaodan Ma is with the State Key Laboratory of Internet of Things for Smart City and the Department of Electrical and Computer Engineering, University of Macau, Macao, China (e-mail: shaodanma@um.edu.mo).}
%\thanks{Guanghua Yang is with Institute of Physical Internet, Jinan University (Zhuhai Campus), Zhuhai, China (e-mail: ).}
%\thanks{Mohamed-Slim Alouini is with the Computer, Electrical, and Mathematical Science and Engineering (CEMSE) Division, King Abdullah University of Science and Technology (KAUST) Thuwal, Makkah Province, Saudi Arabia (e-mail:slim.alouini@kaust.edu.sa).}
%\thanks{The corresponding author is Guanghua Yang.}
}
\maketitle
\begin{abstract}
The combination between non-orthogonal multiple access (NOMA) and hybrid automatic repeat request (HARQ) is capable of realizing ultra-reliability, high throughput and many concurrent connections particularly for emerging communication systems. This paper focuses on characterizing the asymptotic scaling law of the outage probability of HARQ-aided NOMA systems with respect to the transmit power, i.e., diversity order. The analysis of diversity order is carried out for three basic types of HARQ-aided downlink NOMA systems, including Type I HARQ, HARQ with chase combining (HARQ-CC) and HARQ with incremental redundancy (HARQ-IR). The diversity orders of three HARQ-aided downlink NOMA systems are derived in closed-form, where an integration domain partition trick is developed to obtain the bounds of the outage probability specially for HARQ-CC and HARQ-IR-aided NOMA systems. The analytical results show that the diversity order is a decreasing step function of transmission rate, and full time diversity can only be achieved under a sufficiently low transmission rate. It is also revealed that HARQ-IR-aided NOMA systems have the largest diversity order, followed by HARQ-CC-aided and then Type I HARQ-aided NOMA systems. Additionally, the users' diversity orders follow a descending order according to their respective average channel gains. Furthermore, we expand discussions on the cases of power-efficient transmissions and imperfect channel state information (CSI). Monte Carlo simulations finally confirm our analysis.

\end{abstract}
% Note that keywords are not normally used for peerreview papers.
\begin{IEEEkeywords}
Chase combining, diversity order, NOMA, HARQ, incremental redundancy.
\end{IEEEkeywords}
\IEEEpeerreviewmaketitle
\hyphenation{HARQ}
\section{Introduction}\label{sec:int}
\subsection{Related Works}
Non-orthogonal multiple access (NOMA) has been recently identified as a promising candidate for future cellular multiple access \cite{ding2017application,fu2019mode,wang2018precoding,do2018improving,zhao2019joint}. With the paradigms of Internet of Things (IoT), ultra-dense heterogeneous networks (HetNets), etc., many potential NOMA related technologies have been developed for various new scenarios \cite{an2017achieving,gao2019auction,yang2018non}. Unlike the traditional orthogonal multiple access (OMA) techniques, the NOMA technique capitalizes on the superposition coding to accommodate multiple users simultaneously with different power levels yet at a sharing physical resource (frequency/time/code) \cite{dai2015non,islam2016power,Yaru_TVT,masaracchia2019pso,zhao2019beamforming}. Furthermore, the successive interference cancellation (SIC) is performed at the decoding stage to differentiate the desired signal from interfering signals. NOMA not only exploits the multi-user diversity to increase the spectral efficiency, but also seeks to strike a balance between the throughput and user fairness. The notable advantages of NOMA inspire us to integrate it with other potential 5G techniques to fulfill unprecedent challenges posed by new requirements of quality of service (QoS) \cite{Tullberg2016METIS,wang2016throughput}. In particular, the provision of the ultra-reliability has attracted enormous interest in emerging communication systems. %, e.g., mission-critical machine-type communications (cMTC). It is worth mentioning that the ever-increasing demands for the ultra-reliability in many 5G applications require a urgent high reliability \cite{Tullberg2016METIS}, \cite{chen2018ultra}.
For instance, to accommodate many critical ultra-reliable and low-latency communications (uRLLC) services (e.g., industrial control and traffic safety), the outage probability is expected to be as low as $10^{-9}$ \cite{chen2018ultra}.
%For instance, the ultra-reliable low-latency communications (uRLLC) is anticipated to support a reliability over 99.999\%.
To this end, reliable communications for NOMA systems can be realized by adopting the retransmission mechanism \cite{choi2016re}. The representative retransmission technique is hybrid automatic repeat request (HARQ), which leverages the forward error correction (FEC) at the physical layer and the automatic repeat request (ARQ) at the link layer together\cite{caire2001throughput}. According to whether the retransmitted packets keep unchanged and which combining technique (diversity/code combining) is applied for decoding, HARQ scheme can be further divided into three basic categories, including Type I HARQ, HARQ with chase combining (HARQ-CC) and HARQ with incremental redundancy (HARQ-IR). Specifically, Type I HARQ utilizes the currently received packet for decoding, while HARQ-CC and HARQ-IR employ maximal ratio combining (MRC) and code combining for joint decoding with the erroneously received packets, respectively. To assure the reception reliability of new communication paradigms, it is of practical significance to evaluate the outage performance of different types of HARQ-aided NOMA systems.

%to  have attracted enormous interest further substantially improve
%the NOMA technique uses power-domain multiplexing to accommodate multiple users simultaneously by sharing the same frequency/time/code resource.
%It is well known that the NOMA technique benefits from the difference between users' channel condition.
%
%The decoding order of NOMA users are sorted according to the average channel gains.
%1. [\textbf{Single and Multiple Relay Selection Schemes and their Achievable Diversity Orders}] How the error rate scales with $P$.

The performance of HARQ-assisted NOMA systems has recently received considerable research attention in previous literature. More specifically, the system-level simulations were conducted to demonstrate potential gains of the NOMA scheme over the conventional OMA scheme by considering the link adaptation functionality of HARQ \cite{saito2013system}. % link adaptation functionalities, e.g., HARQ and adaptive modulation and coding (AMC) \cite{saito2013system}.
Moreover, in order to ensure the improvement in the combined signal-to-interference-plus-noise ratio (SINR) in retransmissions, an opportunistic HARQ strategy was further proposed to assist NOMA in \cite{li2015investigation}. The simulated results disclosed that the opportunistic HARQ combination can bring NOMA gain improvement in cell throughput around 4\%$\sim$5\%. % along with more flexibility
Moreover, once the near user succeeds to cancel out the NOMA interference from the far user, the interference-cancelled data can be recycled to support the far user through cooperative communications. For instance, D. Roh \emph{et al.} in \cite{roh2016improvement} assumed that the near user is capable of cooperating with the base station to resend the far user's signal with HARQ-CC in the presence of the decoding failure at the far user. The proposed cooperative retransmission method improves the outage probability, symbol error rate and throughput of the far user. In addition, the outage performance was further investigated for cooperative Type I HARQ-aided NOMA systems by accounting for the effect of aggregate network interference in \cite{zheng2018cooperative}. The proposed scheme attains an outage performance gain by 21\% compared to the non-cooperative one. In \cite{choi2016harq}, the concept of the error exponent was developed by using large deviations for the equivalent assessment of the outage performance of downlink NOMA-HARQ-IR systems. A lower bound was then derived for the error exponent. Based on this result, power allocation was carried out to guarantee a tolerable outage probability. Unfortunately, the analysis of the outage probability for HARQ-aided NOMA systems is rather involved because of the incorporation of multiple fractional random variables. Besides, most of the prior literature conduct the outage analyses for HARQ-aided NOMA systems based on either simulations or approximations \cite{saito2013system,li2015investigation,choi2016harq}. Particularly for HARQ-CC and HARQ-IR-aided NOMA schemes, the relevant outage probabilities are consequently transformed to % boil down to
distributions of the summation/product of multiple fractional random variables. Nevertheless, there are only several numerical approaches available to tackle these problems. To name a few, %It thus boils down to determining the distribution of a summation of multiple random variables with fractional form, which rather complicates the derivation of the compact expression for outage probability...
%challenges the analysis...
%There is no available methods, approximations have been developed to investigate the outage performance. ....
the outage performance of two-user HARQ-CC-assisted NOMA systems was approximated by using Gaver-Stehfest procedure in \cite{cai2018performance}. The analytical results then enabled the investigation of the tradeoff between retransmission times and power allocation coefficient. The similar methodology was afterwards employed to study the impact of time correlation upon the outage probability for two-user partial HARQ-assisted NOMA systems, where HARQ-CC and HARQ-IR are partially implemented in \cite{cai2019impact}. It was found in both \cite{cai2018performance} and \cite{cai2019impact} that full time diversity is achievable for HARQ-assisted NOMA systems by assuming perfect SIC, that is, the interfering signals can be completely eliminated when performing SIC. However, since the decoding failures for the farther user's message were ignored under the assumption of perfect SIC at the near user in \cite{cai2018performance,cai2019impact}, this may result in an overestimated performance of time diversity.

% in any circumstances
\subsection{Scope of the Paper}
In order to satisfy the stringent requirement of the reliability, the condition of high signal-to-noise ratio (SNR) is usually indispensable for the fulfillment of an exceedingly low target outage probability (e.g., less than $10^{-9}$). This motivates us to investigate the asymptotic characteristics of the outage probability for HARQ-aided NOMA systems in high SNR regime. In particular, the diversity order is a fundamental asymptotic reliability metric to characterize the degree of freedom for a communication system. More specifically, the diversity order describes how the outage probability scales with the transmit power under high SNR. Although it was proved in \cite{shi2016optimal,shi2017asymptotic} that full time diversity can be achieved by any type of the three conventional HARQ schemes, this result may not carry over to the HARQ-aided NOMA systems. Hence, a natural question arises here: \emph{Can full diversity be achieved by HARQ-aided NOMA systems?} To answer this crucial question, it is mandatory to study the diversity order for HARQ-aided NOMA systems. %Therein, the diversity order characterizes the degree of freedom for a communication system, which describes how the outage probability scales with the transmit power under high SNR.

This paper thoroughly examines the achievable diversity orders for three types of HARQ-aided downlink NOMA systems over Rayleigh fading channels, including Type I HARQ, HARQ-CC and HARQ-IR schemes. First, the diversity order of two-user HARQ-aided NOMA systems is analyzed. Towards this goal, we have to obtain the outage probabilities for various HARQ-aided NOMA systems. Specifically, the outage probability of the Type I HARQ-aided NOMA system can be expressed in closed-form. However, it is nearly impossible to provide compact expressions for outage probabilities of the HARQ-CC and HARQ-IR-aided NOMA systems. This is due to the difficulty of dealing with the summation/product of multiple fractional random variables. Alternatively, the upper and lower bounds of the outage probabilities for the HARQ-CC and HARQ-IR-aided NOMA systems are got by recursively using the integration domain partition approach. Then combining the bounds with squeeze theorem yields the diversity order. Later, we go on to extend our results to the cases with an arbitrary number of users. Besides, the influences of power-efficient transmission strategy and imperfect channel state information (CSI) are investigated in depth. In the end, the analytical results are verified by undertaking Monte Carlo simulations. %The diversity order is proved to be a decreasing step function of transmission rate given the ratio between the transmit powers allocated to the two users. It is shown that full time diversity can be achieved only if the transmission rate is less than a threshold related to the transmit powers and HARQ type. Moreover, the analytical results reveal that HARQ-IR-aided NOMA systems have the largest diversity order, followed by HARQ-CC-aided and then Type I HARQ-aided NOMA systems. Later, we go on to extend our results to the cases with more than two users. We find that the users' diversity orders obey a descending order according to their respective average channel gains. %We thus conclude that the diversity order of HARQ-aided NOMA systems is restricted by the users with worse channel condition.

\subsection{Contributions}
The main contributions of this paper are listed as follows.
\begin{enumerate}
  \item The diversity order of HARQ-aided downlink NOMA systems is proved to be a decreasing step function of transmission rate given the ratio between the transmit powers allocated to the two users. Moreover, the analytical results reveal that HARQ-IR-aided NOMA systems have the largest diversity order, followed by HARQ-CC-aided and then Type I HARQ-aided NOMA systems.
  \item As opposed to the conventional HARQ systems that full time diversity is achievable irrespective of the value of transmission rate, HARQ-aided downlink NOMA systems are capable of attaining full diversity given the maximum allowable transmission rate below a threshold. The threshold of the transmission rate depends on HARQ type and the ratio between transmit powers.
  \item Likewise, the analytical results can be readily extended to a general case with more than two users. In addition, it is interesting to note that the users' diversity orders obey a descending order according to their respective average channel gains.
  \item We then propose an advanced power-efficient transmission strategy, which decreases the outage probability of the farther user. However, the resultant diversity order keeps unchanged as under the simple strategy. Besides, our analysis is applicable to the case of imperfect CSI.
            %...Furthermore, we have expanded some discussions concerning the impacts of both power-efficient transmission strategy and imperfect CSI. The analytical results are also applicable to the case with an arbitrary number of users.
  %\item . Besides, the corresponding diversity orders will keep unchanged because the ratios between transmit powers allocated to different users remain fixed.
\end{enumerate}

\subsection{Outline}
The remaining part of this paper is organized as follows. Section \ref{sec:sys_mod} presents the model of HARQ-aided NOMA systems. The diversity order is derived for various types of HARQ-aided NOMA systems in Section \ref{sec:div}. Section \ref{sec:num} validates the theoretical analysis. Eventually, some conclusions are outlined in the last section. Moreover, Table \ref{tab:list_symb} provides a list of notations used in the paper. % \ref{sec:con}.

%\begin{figure*}[!t]
%\begin{center}
%\captionsetup{type=table} % here we want to caption a table
\begin{table}[h]
  \centering
    \caption{List of Notations.}
\label{tab:list_symb}%
\begin{tabular}{p{1cm}||p{7cm}}
 \hline
 Notation& Definition\\
 \hline
  $\alpha_{i,k}$ & The channel gain between the source and user $i$ in the $k$-th HARQ round, i.e., $\alpha_{i,k} = |h_{i,k}|^2$.\\
 $\bar \alpha_i$ & The average channel gain between the source and user $i$.\\
   $c$, ${c_j}$, $c_1$ & $c=P_2/P_1$, ${c_j} = {{{P_j}}}/{{\sum\nolimits_{l = 1}^{j - 1} {{P_l}} }}$ and $c_1=\infty$.\\
  $d_i$, $\tilde d_i$ & The diversity order of user $i$, the approximate diversity order.\\
 $d_i^{\rm eff}$ & The diversity order of user $i$ under power-efficient transmission strategy.\\
  $d_{i\to j}$ & The diversity order associated with $\Pr \left\{ {{{I_{i \to j,K}} < {R_j}} } \right\}$.\\
    $E_\ell$& The event of the successful decoding of user $1$ after $\ell$ rounds.\\
  $\bar E$& The outage event at user 1.\\
   $\epsilon_{i,k}$  & Channel estimate error of $h_{i,k}$.\\
  $\delta$, $\Delta$& An power increment, an positive number in $0 < \Delta \le c$.\\
  $h_{i,k}$& Channel coefficient between the source and user $i$ in the $k$-th HARQ round.\\
  $\hat h_{i,k}$ & Channel estimate of $h_{i,k}$.\\
  $I_{i\to j,K}$ & The mutual information accumulated by user $i$ for decoding message $j$ after $K$ HARQ rounds.\\
 $I_{2\to 2,K,\ell}$ & The mutual information accumulated by user $2$ for decoding message $2$ after $K$ HARQ rounds conditioned on $E_\ell$.\\
 $K$& The maximum allowable number of transmissions of HARQ.\\
 $L$ & The length of the subcodeword.\\
 $M$& The number of users.\\
 $O(\cdot), \Theta(\cdot)$ & Big O notation, big Theta notation.\\
    $P_i$ & The average transmit power for the signal ${\bf x}_i$.\\
 $p_{i,K}^{out}$ & The outage probability of user $i$ after $K$ HARQ rounds.\\
 $p_{i,K}^{{\rm{eff}},out}$& The outage probability of user $i$ after $K$ HARQ rounds under power-efficient transmission strategy.\\
 $R_i$& The transmission rate for the signal ${\bf x}_i$.\\
 ${\bf w}_{i,k}$ & Additive Gaussian white noise with unity variance.\\
 ${\tilde{\bf w}_{i,k}}$ & Gaussian noise with variance $1+\sigma_i^2\sum\nolimits_{i=1}^MP_i$.\\
 ${\bf x}_i$& The signal intended for user $i$.\\
 ${{\bf y}_{i,k}}$& The received signal at user $i$ in the $k$-th HARQ round.\\
 $\lfloor \cdot \rfloor$, $\left[ \cdot \right]^+$ & Floor operator, the projection onto the nonnegative orthant.\\  %such that $\left[ x \right]^+ = \max\{x,0\}$.
  %$F_\gamma(\gamma)$ & The CDF of $\gamma$.\\
 \hline
\end{tabular}
\end{table}

%\end{center}
%%\hrulefill
%%\vspace*{4pt}
%\end{figure*}

\section{System Model}\label{sec:sys_mod}
A downlink NOMA system consisting of one source and multiple destination nodes with the aid of HARQ is considered in this paper. To ease understanding and subsequent analyses, we first examine the diversity order of the HARQ-aided NOMA system with only two destination users. Nevertheless, the similar analytical results can be readily extended to more general scenarios with an arbitrary number of deployed destination users, which will be briefly discussed in Subsection \ref{sec:ext}. To begin with, this section introduces the signal and channel models, the preliminary on HARQ-aided NOMA schemes and the definitions of outage probability and diversity order.
\subsection{Signal and Channel Models}
Denote by ${\bf x}_i$ the signal intended for destination user $i$, where $i=1,2$. The signals ${\bf x}_1$ and ${\bf x}_2$ are drawn from randomly generated Gaussian codebooks, and the transmission rate for ${\bf x}_i$ is denoted by $R_i$. To multiplex the two users at power-domain, the source employs NOMA to superimpose the signals ${\bf x}_1$ and ${\bf x}_2$ with different power levels. The average transmit powers allocated to the signals ${\bf x}_1$ and ${\bf x}_2$ are denoted by ${\mathbb E}(|{\bf x}_i|^2)=P_i$ for $i=1, 2$. %Accordingly, the superimposed message can be written as ${\bf s} = \sqrt{P_1}{\bf x}_1+\sqrt{P_2}{\bf x}_2$.
To conserve the bandwidth and energy consumed on frequent feedbacks of instantaneous CSI, this paper considers that only the statistical CSI knowledge (i.e., the average channel gains) is known to the source. Meanwhile, the HARQ scheme is adopted to resend the superimposed message under bad channel condition to guarantee the transmission reliability. The maximum number of transmissions for each message is allowed up to $K$ to avoid network congestion. We denote the channel gain between the source and user $i$ in the $k$-th HARQ round by $\alpha_{i,k}$. By assuming independent Rayleigh fading channels, the channel gain $\alpha_{i,k}$ follows an exponential distribution with the probability density function (PDF) given by
\begin{equation}\label{eqn:alpha_dis}
f_{\alpha_{i,k}}(x) = \frac{1}{\bar \alpha_i}e^{-\frac{x}{\bar \alpha_i}},
\end{equation}
where ${\bar \alpha_i}$ stands for the average channel gain between the source and user $i$, i.e., ${\bar \alpha_i} = \mathbb E({\alpha_{i,k}})$. For simplicity of analysis, we assume that the channel gains are normalized such that the additive Gaussian white noise (AWGN) has unit variance.

\subsection{HARQ-Aided NOMA Schemes}
%The NOMA technique benefits from the difference between fading channels.
The benefit of using NOMA originates from the fact that the difference between fading channels can be exploited to substantially enhance the spectral efficiency. Without loss of generality, we assume that user 1 is closer to the source than user 2. In other words, user 2 has a larger path loss than user 1, i.e., ${\bar \alpha_1} > {\bar \alpha_2}$. Following the principle of NOMA, more transmission power (relative to the target transmission rate) will be allocated to the farther user (user 2) to ensure user fairness. As a result, user 1 has a lower SNR than user 2. After receiving the superimposed message, user 1 capitalizes on the SIC to subtract the interfering signal ${\bf x}_2$ before decoding its own message ${\bf x}_1$. However, user 2 directly decodes its own message ${\bf x}_2$ by treating user 1's message ${\bf x}_1$ as interference.

Due to the involvement of multiple transmissions into the HARQ-aided NOMA scheme, the message ${\bf x}_2$ might be recovered before ${\bf x}_1$ by both users\footnote{By considering the decoding order of SIC, the delivery of the message ${\bf x}_2$ is deemed to be a success if and only if both users succeed to decode ${\bf x}_2$.}. %Suppose that the message $x_i$ has been successfully decoded by users $1,\cdots,i$. According to the principle of NOMA, this implies that the messages $x_{i+1},\cdots, x_{M}$ have already been decoded.
Clearly, it is unnecessary to resend the message ${\bf x}_2$ which becomes redundant in the following HARQ rounds. If positive acknowledgement (ACK) messages for ${\bf x}_2$ are fed back from both users to the source, ${\bf x}_2$ can be eliminated from the superposition coded signal in the succeeding retransmissions. In other words, only ${\bf x}_1$ needs to be retransmitted, which consequently yields the reduction of the power consumption. Nonetheless, since ${\bf x}_2$ is transparent to both users in the circumstances, the outage probability would not be influenced no matter whether ${\bf x}_1$ or the superposition coded signal is subsequently retransmitted. On the other hand, if user 1 successfully decodes its own message ${\bf x}_1$ prior to user 2, %${\bf x}_1$ is recovered by user 1 before user 2,
the natural idea is that only ${\bf x}_2$ needs to be resent instead of superposition coded signal. However, it will complicate the outage analysis due to the cumbersome transmission procedure. Thereafter, to simplify the system model as well as ease the analysis, the transmissions of superposition coded signal are therefore assumed in all HARQ rounds, which is also applicable to the scenario with an arbitrary number of NOMA users. In fact, this assumption can be treated as the worst case to some extent. Besides, it is believed that complicating the analysis by straightforwardly taking into account more complex transmission procedure will dilute the paper's contribution, but it will be briefly discussed in Subsection \ref{sec:power_eff}.
%\subsubsection{Power Control Based on the Feedback of the Indices of Successfully Decoded Messages}, because this assumption does not influence the outage probability
% are retransmitted in the circumstance

%\cite{zheng2018cooperative}
By applying different encoding and decoding operations at the transceiver, HARQ scheme can be classified into three types, including Type I HARQ, HARQ-CC and HARQ-IR. More specifically, the conventional Type I HARQ recovers the message by relying on the currently received packet, which amounts to using selection combining (SC). The erroneously received packets will be discarded directly without any necessity of saving them. However, it undoubtedly yields the waste of the useful information embedded in these packets. To combat this issue, this paper assumes the HARQ-aided NOMA system is equipped with a dedicated buffer to store the erroneous packets for subsequent decodings. The buffer not only supports the employment of SC for Type I HARQ-aided NOMA system, but also is necessary to store the messages that should be decoded prior to its own message. %Besides, the idea of SC behind Type I HARQ is inapplicable to the NOMA system due to the involvement of multiple users.
To further improve the outage performance, HARQ-CC and HARQ-IR adopt more advanced combining techniques at the transceiver. In particular, the techniques of MRC and code combining are applied to HARQ-CC and HARQ-IR, respectively. It is worth mentioning that HARQ-IR surpasses both Type I HARQ and HARQ-CC in terms of the outage probability notwithstanding the extra complexity cost\footnote{It is noteworthy that the analysis of the computational complexity not only depends on the type of HARQ scheme but also the kinds of FEC code, where the FEC code could be Reed-Solomon code, convolutional code, turbo code, polar code, etc. The computational complexity of HARQ schemes has been reported in \cite{choi2001class,chen2013survey,chen2013hybrid,chen2014polar}. By taking the polar coded HARQ schemes as examples, the decoding operations of Type I HARQ, HARQ-CC and HARQ-IR schemes at the $K$-th transmission require complexities on the order of $O(L\log L)$, $O(L(K+\log L))$ and $O(KL\log(KL) )$, respectively, where $O(\cdot)$ refers to the big-O notation and $L$ denotes the length of the subcodeword transmitted in each HARQ round \cite{trifonov2012efficient}.}. In addition, the outage performance of HARQ-CC and -IR schemes essentially benefits from the combining techniques rather than the equipment of buffer. Nevertheless, the buffer is an indispensable part of HARQ-CC and -IR.% its higher computation complexity. % on reconstruction of a concatenating code
\subsection{Outage Probability and Diversity Order}
From the information-theoretical perspective, the accumulated mutual information achieved by user $i$ for decoding ${\bf x}_2$ after $K$ HARQ rounds can be expressed as \cite{caire2001throughput,shi2018energy}% add a reference about NOMA here
% In this paper,
% It has been revealed in  that the conventional Type I HARQ-aided NOMA systems
%HARQ-CC and HARQ-IR are equipped with buffers to store the previously received packets for concatenating the subsequently retransmitted packets.
% without
%by using selection combining.
\begin{align}\label{eqn:noma_harq_inf_i2}
&{I_{i \to 2,K}} = \notag\\
&\left\{ {\begin{array}{*{20}{c}}
{\max {{\left\{ {{{\log }_2}\left( {1 + \frac{{{\alpha _{i,k}}{P_2}}}{{{\alpha _{i,k}}{P_1} + 1}}} \right)}:{k \in \left[ {1,K} \right]} \right\}}},}&{\rm{I}}\\
{{{\log }_2}\left( {1 + \sum\nolimits_{k = 1}^K {\frac{{{\alpha _{i,k}}{P_2}}}{{{\alpha _{i,k}}{P_1} + 1}}} } \right),}&{{\rm{CC}}}\\
{\sum\nolimits_{k = 1}^K {{{\log }_2}\left( {1 + \frac{{{\alpha _{i,k}}{P_2}}}{{{\alpha _{i,k}}{P_1} + 1}}} \right)}, }&{{\rm{IR}}}
\end{array}} \right.,
\end{align}
where the occurrence of the term ${\alpha _{i,k}}{P_1}$ in the denominator is due to the fact that user 1's message is treated as Gaussian noise whilst decoding ${\bf x}_2$.

In addition, only if user 1 successfully reconstructs user 2's message such that ${I_{1 \to 2,K}} \ge R_2$, user 1 is able to decode its own message ${\bf x}_1$ by perfectly eliminating the interfering signal ${{\bf x}_2}$ via SIC. Accordingly, the accumulated mutual information achieved by user 1 for decoding its own message after $K$ HARQ rounds is given by
\begin{align}\label{eqn:noma_harq_inf_1}
&{I_{1\to 1,K}} = \notag\\
&\left\{ {\begin{array}{*{20}{c}}
{\max \left\{ {{{\log }_2}\left( {1 + {\alpha _{1,k}}{P_1}} \right):k \in \left[ {1,K} \right]} \right\},}&{\rm{I}}\\
{{{\log }_2}\left( {1 + \sum\nolimits_{k = 1}^K {{\alpha _{1,k}}{P_1}} } \right),}&{{\rm{CC}}}\\
{\sum\nolimits_{k = 1}^K {{{\log }_2}\left( {1 + {\alpha _{1,k}}{P_1}} \right)} ,}&{{\rm{IR}}}
\end{array}} \right..
\end{align}
%The SIC should be used at user 1 by subtracting ${\bf x}_1$ before decoding its own message ${\bf x}_2$.

%After receiving the superimposed message, the decoding order of the two messages ${\bf x}_1$ and ${\bf x}_2$ at each user depends on the average channel gains.
By using the Gaussian codes and typical-set decoding, the outage event occurs at user 1 after $K$ HARQ rounds if user 1 fails to either subtract the interfering signal ${\bf x}_2$ or recover its own message. On the basis of \eqref{eqn:noma_harq_inf_i2} and \eqref{eqn:noma_harq_inf_1}, the outage probability of user 1, $p_{1,K}^{out}$, can be obtained as
\begin{equation}\label{eqn:out_user1}
p_{1,K}^{out} = \Pr \left\{ {{{I_{1\to 1,K}}} < {R_1}\bigcup {{I_{1 \to 2,K}} < {R_2}} } \right\}.
\end{equation}
Likewise, the outage event happens at user 2 if and only if user 2 fails to decode its own message. Hence, the outage probability of user 2, $p_{2,K}^{out}$, can be written as
\begin{equation}\label{eqn:out_user2}
p_{2,K}^{out} = \Pr \left\{ {{{I_{2 \to 2,K}} < {R_2}} } \right\}.
\end{equation}
Unfortunately, it is generally intractable to derive closed-form expressions for the outage probabilities of HARQ-aided NOMA systems due to the incorporation of multiple fractional random variables. Instead, we turn to examine the asymptotic characteristics of the outage performance. In this paper, we are interested in the diversity order of HARQ-aided NOMA systems. Specifically, the diversity order is an important performance metric to characterize the asymptotic scaling law of the outage probability with respect to the average transmit SNR. %another important performance metric to measure the degree of freedom of the communication system.
More precisely, the diversity order associated with user $i$ is explicitly given by\footnote{\label{ft:def_d}In the numerical analysis, we can define $\tilde d_i$ as \[{\tilde d_i} =  10\frac{{{{\log }_{10}}p_{i,K}^{out}\left( {{{[{P_i}]}_{{\rm{dB}}}}} \right) - {{\log }_{10}}p_{i,K}^{out}\left( {{{[{P_i}]}_{{\rm{dB}}}} + {{\left[ \delta  \right]}_{{\rm{dB}}}}} \right)}}{{{{\left[ \delta  \right]}_{{\rm{dB}}}}}}\] and use it to approximate the diversity order $d$, where $\delta$ is a power increment, $p_{i,K}^{out}\left( {{{[{P_i}]}_{{\rm{dB}}}}} \right)$ denotes the outage probability given ${P_i}$ in decibels. This is due to the fact that ${\tilde d_i}$ commonly has a faster convergence than the definition in \eqref{eqn:diver_order_def} as $P_i$ approaches to infinity.}
\begin{equation}\label{eqn:diver_order_def}
   d_i = -\mathop {\lim }\limits_{P_i  \to \infty } \frac{{\log {p_{i,K}^{out}}}}{{\log P_i }}.
\end{equation}
From \eqref{eqn:diver_order_def}, the diversity order quantifies the decreasing slope of the outage curves against the average transmit SNR on a log-log scale. More specifically, the outage probability at high SNR behaves asymptotically as $p_{i,K}^{out} \simeq {\left( \mathcal C  \cdot P_i  \right)^{ - d_i }}$ \cite[eq.(3.158)]{tse2005fundamentals}, \cite[eq.(1)]{wang2003simple}, where $\mathcal C $ is termed as the coding gain. Although the authors in \cite{shi2016optimal,shi2017asymptotic} proved that the three conventional HARQ schemes can achieve the same and full diversity order, this result is inapplicable to the HARQ-aided NOMA systems. Therefore, it is mandatory to study the achievable diversity order of the HARQ-aided NOMA systems. Moreover, to simplify the analysis as well as ensure the user fairness, as the transmit powers increase, the ratio between them is assumed to be fixed such that $P_2 = c P_1$.
\section{Analysis of Diversity Order}\label{sec:div}
The diversity order is studied for user 2 first due to the simple form of its outage probability. The analytical results are then applied to derive the diversity order of user 1.
\subsection{Diversity Order of User 2}\label{sec:diver_order_user2}
Recalling that different types of HARQ schemes yield distinct expressions of accumulated mutual information, it is necessary to analyze the diversity order of user 2 for each type of HARQ-aided NOMA scheme individually.
\subsubsection{Type I HARQ}
If Type I HARQ is applied to assist the NOMA transmission, the outage probability of user 1 can be obtained by substituting \eqref{eqn:noma_harq_inf_i2} into \eqref{eqn:out_user2} as
\begin{align}\label{eqn:outage_typeI}
{p_{2,K}^{out,I}} &= \Pr \left\{ {\bigcap\nolimits_{k = 1}^K {{{\log }_2}\left( {1 + \frac{{{\alpha _{2,k}}{P_2}}}{{{\alpha _{2,k}}{P_1} + 1}}} \right) < {R_2}} } \right\}\notag\\
%&=\prod\limits_{k = 1}^K {\Pr \left\{ {{{\log }_2}\left( {1 + \frac{{{\alpha _{2,k}}{P_2}}}{{{\alpha _{2,k}}{P_1} + 1}}} \right) < {R_2}} \right\}}\notag \\
&= \prod\nolimits_{k = 1}^K {{{{\Pr \left\{ {{\alpha _{2,k}} < \frac{{{2^{{R_2}}} - 1}}{{{P_2} - \left( {{2^{{R_2}}} - 1} \right){P_1}}}} \right\}} }}}, %\notag\\
 %&= {\left( {\frac{1}{{\bar \alpha_2 }}\int\nolimits_0^{\frac{{{2^{R_2}} - 1}}{{{P_2} - \left( {{2^{R_2}} - 1} \right){P_1}}}} {{e^{ - \frac{x}{{\bar \alpha_2 }}}}dx} } \right)^K}\notag\\
% & = {\left( {1 - {e^{ - \frac{{{2^{R_2}} - 1}}{{\bar \alpha_2 \left( {{P_2} - \left( {{2^{R_2}} - 1} \right){P_1}} \right)}}}}} \right)^K},
\end{align}
where the first step holds by using the independence between fading channels across all HARQ rounds. By putting \eqref{eqn:alpha_dis} into \eqref{eqn:outage_typeI}, ${p_{2,K}^{out,I}}$ can be expressed in closed-form as
\begin{align}\label{eqn:outage_typeI_fin}
&{p_{2,K}^{out,I}}= \notag\\
&\left\{ {\begin{array}{*{20}{c}}
{{{\left( {1 - {e^{ - \frac{{{2^{{R_2}}} - 1}}{{\bar \alpha_2 \left( {{P_2} - \left( {{2^{{R_2}}} - 1} \right){P_1}} \right)}}}}} \right)}^K},}&{\frac{{{2^{{R_2}}} - 1}}{c} < 1}\\
{1,}&{\rm else}
\end{array}} \right..
\end{align}
%where \eqref{eqn:outage_typeI_fin} holds if ${P_2} - \left( {{2^{R_2}} - 1} \right){P_1} \ge 0$, ${p_{2,K}^{out}}=1$ otherwise.

By plugging \eqref{eqn:outage_typeI_fin} into \eqref{eqn:diver_order_def} together with the Maclaurin series of the exponential function, we have %omitting the high order infinitesimal leads to
%It is not hard find that the diversity order is
\begin{equation}\label{eqn:diver_order_I}
  d_2^{I} = K\left[1-\left\lfloor {\frac{2^{R_2}-1}{c}} \right\rfloor \right]^+,\,
\end{equation}
where $\lfloor \cdot \rfloor$ is the floor operator and $\left[ \cdot \right]^+$ denotes the projection onto the nonnegative
orthant such that $\left[ x \right]^+ = \max\{x,0\}$. It is shown in \eqref{eqn:diver_order_I} that the diversity order equals to $K$ if ${2^{R_2}-1}<{c}$ and equals to zero otherwise.

%Although the outage probability of user 2 for Type I HARQ-aided NOMA system can be readily derived, the outage probability of user 2 for the other two types of HARQ aided NOMA system is rather difficult to obtain.
%Unfortunately, from \eqref{eqn:out_user1} and \eqref{eqn:out_user2},
\subsubsection{HARQ-CC}
Similarly, the outage probability of user 2 for HARQ-CC-aided NOMA is rewritten by inserting \eqref{eqn:noma_harq_inf_i2} into \eqref{eqn:out_user2} as
%Although the outage probability of user 2 for Type I HARQ-aided NOMA system can be readily derived
%For HARQ-CC,
\begin{align}\label{eqn:outage_HARQ-CC}
{p_{2,K}^{out,CC}} %&= \Pr \left\{ {{{\log }_2}\left( {1 + \sum\limits_{k = 1}^K {\frac{{{\alpha _{2,k}}{P_2}}}{{{\alpha _{2,k}}{P_1} + 1}}} } \right) < R_2} \right\} \notag\\
 &= \Pr \left\{ {\underbrace {\sum\nolimits_{k = 1}^K {\frac{{{\alpha _{2,k}}{P_2}}}{{{\alpha _{2,k}}{P_1} + 1}}} }_{\gamma}  < {2^{R_2}} - 1} \right\}.
\end{align}
It thus boils down to determining the distribution of a summation of multiple fractional random variables, which extremely impedes the derivation of the compact expression for ${p_{2,K}^{out,CC}}$. %Unfortunately, the outage probability of user 2 for HARQ-CC-aided NOMA system is rather difficult to obtain .
Hence, simple lower and upper bounds of the outage probability are developed to ease the analysis of the diversity order. To do so, the method of moment generating function (MGF) is applied here to derive the distribution of $\gamma$. The MGF of $\gamma$ is precisely given by
\begin{align}\label{eqn:out_cc_mgf}
&{\mathbb{E}}\left\{ {{e^{t\gamma }}} \right\} = \prod\nolimits_{k = 1}^K {{\mathbb{E}}\left\{ {{e^{\frac{{{\alpha _{2,k}}{P_2}}}{{{\alpha _{2,k}}{P_1} + 1}}t}}} \right\}} \notag\\
 &= \prod\nolimits_{k = 1}^K {\frac{1}{{\bar \alpha_2 }}\int\nolimits_0^\infty  {{e^{\frac{{x_k{P_2}}}{{x_k{P_1} + 1}}t}}{e^{ - \frac{x_k}{{\bar \alpha_2 }}}}dx_k} }  \notag\\
 %&= \prod\limits_{k = 1}^K {\frac{1}{{\bar \alpha_2 }}\int\nolimits_0^{c} {{e^{yt}}{e^{ - \frac{1}{{\bar \alpha_2 }}\frac{y}{{{P_2} - y{P_1}}}}}\left( {\frac{{{P_2} - y{P_1} + {P_1}y}}{{{{\left( {{P_2} - y{P_1}} \right)}^2}}}} \right)dy} } \notag \\
 &= \prod\nolimits_{k = 1}^K \frac{{c{e^{ct + \frac{1}{{\bar \alpha_2 {P_1}}}}}}}{{{P_1{\bar \alpha }_2}}}\int\nolimits_{0}^{c} {{e^{ - {t}{z_k} - \frac{c}{{{P_1{\bar \alpha }_2}}}\frac{1}{z_k}}}\frac{1}{{z_k}^2}dz_k},%\prod\limits_{k = 1}^K {\frac{1}{{\bar \alpha_2 }}\int\nolimits_0^{c} {{e^{yt - \frac{1}{{\bar \alpha_2 }}\frac{y}{{{P_2} - y{P_1}}}}}\frac{{{P_2}}}{{{{\left( {{P_2} - y{P_1}} \right)}^2}}}dy} }.
\end{align}
where the last step holds by making the change of variable $z_k = c/(x_kP_1+1)$. \eqref{eqn:out_cc_mgf} can be further written in closed-form by using the definition of the upper incomplete Fox's H function in \cite{yilmaz2009productshifted}. The result is omitted here due to space limitations. %Parseval's theorem of Mellin transform \cite[eq.8.3.22]{debnath2010integral} together with
%, it follows that
%\begin{align}\label{eqn:mgf_cc_re}
%&{\mathbb{E}}\left\{ {{e^{t\sum\limits_{k = 1}^K {\frac{{{\alpha _k}{P_2}}}{{{\alpha _k}{P_1} + 1}}} }}} \right\} = \prod\limits_{k = 1}^K {\frac{1}{{\bar \alpha }}\int\limits_0^{{P_2}} {{e^{\frac{{{P_2} - z}}{{{P_1}}}t - \frac{1}{{\bar \alpha }}\frac{{{P_2} - z}}{{z{P_1}}}}}\frac{{{P_2}}}{{{z^2}{P_1}}}dz} } \notag\\
% &= \prod\limits_{k = 1}^K {\frac{{{P_2}{e^{\frac{{{P_2}}}{{{P_1}}}t + \frac{1}{{\bar \alpha {P_1}}}}}}}{{\bar \alpha {P_1}}}\int\limits_0^{{P_2}} {{e^{ - \frac{t}{{{P_1}}}z - \frac{{{P_2}}}{{\bar \alpha {P_1}}}\frac{1}{z}}}\frac{1}{{{z^2}}}dz} }
%\end{align}
%Through defining $c = \frac{{{P_2}}}{{{P_1}}}$, it produces
%\begin{equation}\label{eqn:mgf_cc_fin}
%{\mathbb{E}}\left\{ {{e^{t\sum\limits_{k = 1}^K {\frac{{{\alpha _k}{P_2}}}{{{\alpha _k}{P_1} + 1}}} }}} \right\} = \prod\limits_{k = 1}^K {\frac{{c{e^{ct + \frac{1}{{\bar \alpha {P_1}}}}}}}{{\bar \alpha }}\int\limits_0^{{P_2}} {{e^{ - \frac{t}{{{P_1}}}z - \frac{c}{{\bar \alpha }}\frac{1}{z}}}\frac{1}{{{z^2}}}dz} }
%\end{equation}
%Defining $\gamma  = \sum\limits_{k = 1}^K {\frac{{{\alpha _k}{P_2}}}{{{\alpha _k}{P_1} + 1}}} $,
Furthermore, the cumulative distribution function (CDF) of $\gamma$ is obtained by using inverse Laplace transform as
\begin{align}\label{eqn:cdf_gamma_sum}
{F_\gamma }\left( \gamma  \right) &= \frac{1}{2\pi \rm i}\int_{a-{\rm i}\infty}^{a+{\rm i}\infty} {\frac{{{e^{t\gamma }}}}{t}} {\mathbb{E}}\left\{ {{e^{-t\gamma }}} \right\} dt.
\end{align}
where $a > 0$ and ${\rm i} = \sqrt{-1}$. The representation of the contour integral in (\ref{eqn:cdf_gamma_sum}) can be numerically evaluated by using the popular software packages, such as Mathematica and Matlab. By substituting \eqref{eqn:out_cc_mgf} into \eqref{eqn:cdf_gamma_sum} and interchanging the order of integrations, it follows that
%\begin{figure*}[!t]
\begin{align}\label{eqn:cdf_gamma_sum1}
{F_\gamma }\left( \gamma  \right)
&= {\left( {\frac{c}{{{P_1}{{\bar \alpha }_2}}}} \right)^K}{e^{\frac{K}{{{{\bar \alpha }_2}{P_1}}}}}\notag\\
&\times \int_0^c { \cdots \int_0^c {{e^{ - \frac{c}{{{P_1}{{\bar \alpha }_2}}}\sum\nolimits_{k = 1}^K {\frac{1}{{{z_k}}}} }}\prod\nolimits_{k = 1}^K {\frac{1}{{{z_k}^2}}} d{z_1} \cdots d{z_K}}}\notag\\
  &\times \frac{1}{{2\pi \rm i}}\int_{a - {\rm i}\infty }^{a + {\rm i}\infty } {\frac{1}{t}{e^{t\left( {\gamma  + \sum\nolimits_{k = 1}^K {{z_k}}  - Kc} \right)}}} dt.%&= {\left( {\frac{c}{{\bar \alpha }}} \right)^K}{e^{\frac{K}{{\bar \alpha {P_1}}}}}\int {\frac{{{e^{t\gamma }}}}{t}} {e^{ - Kct}}\int\limits_{\frac{1}{{{P_2}}}}^\infty  { \cdots \int\limits_{\frac{1}{{{P_2}}}}^\infty  {{e^{\frac{t}{{{P_1}}}\sum\limits_{k = 1}^K {\frac{1}{{{x_k}}}}  - \frac{c}{{\bar \alpha }}\sum\limits_{k = 1}^K {{x_k}} }}d{x_1} \cdots d{x_K}} dt} \notag\\
% &= {\left( {\frac{c}{{\bar \alpha }}} \right)^K}{e^{\frac{K}{{\bar \alpha {P_1}}}}}\int\limits_{\frac{1}{{{P_2}}}}^\infty  { \cdots \int\limits_{\frac{1}{{{P_2}}}}^\infty  {{e^{ - \frac{c}{{\bar \alpha }}\sum\limits_{k = 1}^K {{x_k}} }}d{x_1} \cdots d{x_K}} } \int {\frac{1}{t}} {e^{t\left( {\gamma  + \frac{1}{{{P_1}}}\sum\limits_{k = 1}^K {\frac{1}{{{x_k}}}}  - Kc} \right)}}dt\notag\\
% &= {\left( {\frac{c}{{\bar \alpha }}} \right)^K}{e^{\frac{K}{{\bar \alpha {P_1}}}}}\int\limits_{\frac{1}{{{P_2}}}}^\infty  { \cdots \int\limits_{\frac{1}{{{P_2}}}}^\infty  {{e^{ - \frac{c}{{\bar \alpha }}\sum\limits_{k = 1}^K {{x_k}} }}u\left( {\gamma  + \frac{1}{{{P_1}}}\sum\limits_{k = 1}^K {\frac{1}{{{x_k}}}}  - Kc} \right)d{x_1} \cdots d{x_K}} } \notag\\
% &= {\left( {\frac{c}{{\bar \alpha }}} \right)^K}{e^{\frac{K}{{\bar \alpha {P_1}}}}}\int\limits_0^{{P_2}} { \cdots \int\limits_0^{{P_2}} {{e^{ - \frac{c}{{\bar \alpha }}\sum\limits_{k = 1}^K {\frac{1}{{{y_k}}}} }}u\left( {\gamma  + \frac{1}{{{P_1}}}\sum\limits_{k = 1}^K {{y_k}}  - Kc} \right)\prod\limits_{k = 1}^K {\frac{1}{{{y_k}^2}}} d{y_1} \cdots d{y_K}} }\notag \\
% &= \frac{{{{\left( {\frac{c}{{\bar \alpha }}} \right)}^K}{e^{\frac{K}{{\bar \alpha {P_1}}}}}}}{{{P_1}^K}}\int\limits_0^{\frac{{{P_2}}}{{{P_1}}}} { \cdots \int\limits_0^{\frac{{{P_2}}}{{{P_1}}}} {{e^{ - \frac{c}{{\bar \alpha {P_1}}}\sum\limits_{k = 1}^K {\frac{1}{{{z_k}}}} }}u\left( {\gamma  + \sum\limits_{k = 1}^K {{z_k}}  - Kc} \right)\prod\limits_{k = 1}^K {\frac{1}{{{z_k}^2}}} d{z_1} \cdots d{z_K}} }
\end{align}
%\hrulefill
%\vspace*{4pt}
%\end{figure*}

By using the inverse Laplace transform of the unit step function, the CDF ${F_\gamma }\left( \gamma  \right)$ can be rewritten as \eqref{eqn:cdf_gamma_sum2}, as shown at the top of the next page, where $u(x)$ denotes the unit step function.
\begin{figure*}[!t]
\begin{align}\label{eqn:cdf_gamma_sum2}
{F_\gamma }\left( \gamma  \right)  = {\left( {\frac{c}{{{P_1}{{\bar \alpha }_2}}}} \right)^K}{e^{\frac{K}{{{{\bar \alpha }_2}{P_1}}}}}\underbrace {\int_0^c { \cdots \int_0^c {{e^{ - \frac{c}{{{P_1}{{\bar \alpha }_2}}}\sum\nolimits_{k = 1}^K {\frac{1}{{{z_k}}}} }}u\left( {\gamma  + \sum\nolimits_{k = 1}^K {{z_k}}  - Kc} \right)\prod\nolimits_{k = 1}^K {\frac{1}{{{z_k}^2}}} d{z_1} \cdots d{z_K}} } }_{{\phi _K}\left( \gamma  \right)}.
\end{align}
\hrulefill
%\vspace*{4pt}
\end{figure*}
Unfortunately, it is nearly impossible to obtain a closed-form expression for ${F_\gamma }\left( \gamma  \right)$ due to the representation of the intractable multiple integral of ${\phi _K}\left( \gamma  \right)$.
%We define ${\phi_K}\left( \gamma  \right) \triangleq \int\limits_0^{\frac{{{P_2}}}{{{P_1}}}} { \cdots \int\limits_0^{\frac{{{P_2}}}{{{P_1}}}} {{e^{ - \frac{c}{{\bar \alpha {P_1}}}\sum\limits_{k = 1}^K {\frac{1}{{{z_k}}}} }}u\left( {\gamma  + \sum\limits_{k = 1}^K {{z_k}}  - Kc} \right)\prod\limits_{k = 1}^K {\frac{1}{{{z_k}^2}}} d{z_1} \cdots d{z_K}} } $.
By plugging \eqref{eqn:cdf_gamma_sum2} into \eqref{eqn:outage_HARQ-CC}, the outage probability of HARQ-CC-aided NOMA system is rewritten as
\begin{equation}\label{eqn:out_prob_cc_phi_xp}
{p_{2,K}^{out,CC}} = {\left( {\frac{c}{{{P_1}{{\bar \alpha }_2}}}} \right)^K}{e^{\frac{K}{{{{\bar \alpha }_2}{P_1}}}}}{{\phi _K}\left( 2^{R_2}-1  \right)}.
\end{equation}
 To extract further insights on the outage probability as well as obtain the diversity order, it obliges us to study the asymptotic behaviour of ${\phi _K}\left( \gamma  \right)$ in high SNR regime. On the basis of \eqref{eqn:cdf_gamma_sum2}, ${\phi _K}\left( \gamma  \right)$ is proved in Appendix \ref{app:harq_cc_boundphi} to be lower and upper bounded by applying integration domain partition trick as
%
%On the one hand, it follows from (\ref{eqn:g_red_approx}) that
%\begin{align}\label{eqn:g_red_approx_lower}
%{\phi_K}\left( \gamma  \right)& \ge \frac{{\bar \alpha {P_1}}}{c}{e^{ - \frac{c}{{\bar \alpha {P_1}}}c}}{\phi_{K - 1}(\gamma-c)}
%\end{align}
%
%Then (\ref{eqn:g_redefine}) can be derived as
%In summary, we conclude that
\begin{multline}\label{eqn:g_red_fin}
\frac{{{{\bar \alpha }_2}{P_1}}}{c}{e^{ - \frac{1}{{\bar \alpha_2 {P_1}}}}}{\phi _{K - 1}}(\gamma  - c) \le {\phi _K}\left( \gamma  \right)\le {e^{ - \frac{1}{{{{\bar \alpha }_2}{P_1}}}}}\\
 \times\left( {\frac{{c - \Delta }}{{c\Delta }}{\phi _{K - 1}}(\gamma ) + \frac{{{{\bar \alpha }_2}{P_1}}}{c}{\phi _{K - 1}}\left( {\gamma  - c + \Delta } \right)} \right),
 %\le \frac{{{{\bar \alpha }_2}{P_1}}}{c}\left( {{e^{ - \frac{1}{{{{\bar \alpha }_2}{P_1}}}}} - {e^{ - \frac{1}{{{{\bar \alpha }_2}{P_1}}}\frac{c}{\Delta }}}} \right){\phi _{K - 1}}\left( \gamma  \right) \\
 %+ \frac{{{{\bar \alpha }_2}{P_1}}}{c}{e^{ - \frac{1}{{{{\bar \alpha }_2}{P_1}}}}}{\phi _{K - 1}}\left( {\gamma  - c + \Delta } \right),
\end{multline}
where $\Delta$ could be any positive number smaller than or equal to $c$, i.e., $0 < \Delta \le c$. In fact, \eqref{eqn:g_red_fin} uncovers a recursive relationship for ${\phi _K}\left( \gamma  \right)$. By combining \eqref{eqn:g_red_fin} with the following lemma concerning the asymptotic behaviour of ${\phi _K}\left( \gamma  \right)$ as $P_1$ approaches to $\infty$, the diversity order of user 2 for HARQ-CC-aided NOMA system can be obtained.
\begin{lemma}\label{the:cc_phi}
${\phi _K}\left( \gamma  \right)$ is an increasing function of $P_1$ and $\gamma$. If $0<\gamma < c$, the lower and upper bounds of ${\phi _K}\left( \gamma  \right)$ are given by
\begin{multline}\label{eqn:phi_K_cc_bounds_c1}
{e^{ - \frac{{Kc}}{{\bar \alpha_2 {P_1}\left( {c - \frac{\gamma }{K}} \right)}}}}{\left( {\frac{1}{{c - {\gamma }/{K}}} - \frac{1}{c}} \right)^K}
\le {\phi _K}\left( \gamma  \right)  \\ \le {e^{ - \frac{K}{{{{\bar \alpha }_2}{P_1}}}}}{\left( {\frac{1}{{c - \gamma }} - \frac{1}{c}} \right)^K}.
\end{multline}
Additionally, ${\phi _K}\left( \gamma  \right) = {{\left( {{{\bar \alpha_2 }{P_1}}/{c}} \right)}^K}{\exp\left({ - {K}/({{\bar \alpha_2 {P_1}}})}\right)}$ for $\gamma \ge Kc$. As $P_1$ approaches to $\infty$, ${\phi _K}\left( \gamma  \right)$ is asymptotic to %the limit of ${\phi_K}\left( \gamma  \right) = \rm const$ if $\gamma \le c$, and ${\phi_K}\left( \gamma  \right) = 0$ if $\gamma < 0$.
\begin{equation}\label{eqn:phi_K_gamma_ge}
%\mathop {\lim }\limits_{{P_1} \to \infty }
 {\phi _K}\left( \gamma  \right) = \left\{ {\begin{array}{*{20}{c}}
0,&{\gamma  = 0}\\
{\Theta(1)\ne 0},&{0 < \gamma  < c}\\
{\Theta(P_1^k)},&{kc \le \gamma  < (k+1)c}\\
\Theta(P_1^K),&{\gamma  \ge Kc}
\end{array}} \right.,
\end{equation}
where $k\in[1,K-1]$, $\Theta(\cdot)$ denotes the big-Theta notation and we say $f(x)= \Theta(g(x))$ if there exist positive constants $k_1$, $k_2$ and $x_0$ such that $k_1 g(x) \le f(x) \le k_2 g(x)$ for all $x>x_0$. %the equality holds for ${0 < \gamma  \le c}$ as proved in Appendix \ref{app:proof_phi_gener}, and ${\phi _K} = {{{\left( {\frac{{\bar \alpha }}{c}{P_1}} \right)}^K}{e^{ - \frac{K}{{\bar \alpha {P_1}}}}}}$ for $\gamma  \ge Kc$. In addition, ${\phi _K}\left( \gamma  \right) = {{\left( {\frac{{\bar \alpha_2 }{P_1}}{c}} \right)}^K}{e^{ - \frac{K}{{\bar \alpha_2 {P_1}}}}}$ for $\gamma  \ge Kc$.
\end{lemma}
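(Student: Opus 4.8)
The plan is to split the range of $\gamma$ into the three regimes appearing in \eqref{eqn:phi_K_gamma_ge} and to let the recursion \eqref{eqn:g_red_fin} carry the intermediate one. Monotonicity comes for free from the integrand in \eqref{eqn:cdf_gamma_sum2}: raising $P_1$ only increases the factor $e^{-\frac{c}{P_1\bar\alpha_2}\sum_k 1/z_k}$ pointwise on the cube $(0,c)^K$, while raising $\gamma$ only enlarges the set on which the unit step $u(\gamma+\sum_k z_k-Kc)$ equals one; since the remaining factor $\prod_k z_k^{-2}$ is positive, $\phi_K$ is nondecreasing in both arguments. First I would dispose of the two extreme regimes. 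For $\gamma\ge Kc$ the step argument is positive throughout $(0,c)^K$, so the integral factorizes into $K$ identical one-dimensional integrals; the substitution $u=1/z$ evaluates each as $\int_{1/c}^{\infty} e^{-\frac{c}{P_1\bar\alpha_2}u}\,du=\frac{\bar\alpha_2 P_1}{c}e^{-1/(\bar\alpha_2 P_1)}$, which gives the claimed closed form and the order $\Theta(P_1^K)$.

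For $0<\gamma<c$ I would change variables by $w_k=c-z_k$, turning the constraint $\sum_k z_k>Kc-\gamma$ into the simplex $\sum_k w_k<\gamma$; since $\gamma<c$, each $w_k<\gamma<c$ automatically, so the cube walls are inactive. The polynomial part is then squeezed by enclosing the simplex between the sub-box $[0,\gamma/K]^K$ and the box $[0,\gamma]^K$, using $\int_0^{a}(c-w)^{-2}\,dw=\frac{1}{c-a}-\frac{1}{c}$; the exponential part is bounded on the same two regions via $\frac{K}{c}\le\sum_k 1/z_k<\frac{K}{c-\gamma/K}$. Combining the two pairs of bounds produces exactly \eqref{eqn:phi_K_cc_bounds_c1}, and letting $P_1\to\infty$ collapses both exponentials to one while the prefactors stay positive constants, i.e. $\Theta(1)$. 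The case $\gamma=0$ is immediate, since the constraint $\sum_k z_k>Kc$ is infeasible on $(0,c)^K$.

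The heart of the proof is the intermediate regime $kc\le\gamma<(k+1)c$ with $1\le k\le K-1$, which I would handle by induction on $K$ through the recursion \eqref{eqn:g_red_fin}, the base case $K=1$ being covered by the two regimes above. Assuming the claim for $\phi_{K-1}$, the lower half of \eqref{eqn:g_red_fin} gives $\phi_K(\gamma)\ge\frac{\bar\alpha_2 P_1}{c}e^{-1/(\bar\alpha_2 P_1)}\phi_{K-1}(\gamma-c)$ with $\gamma-c\in[(k-1)c,kc)$, so that $\Theta(P_1)\cdot\Theta(P_1^{k-1})=\Theta(P_1^{k})$; choosing $\Delta$ small enough that $\gamma-c+\Delta$ stays in $[(k-1)c,kc)$, the upper half gives $\frac{c-\Delta}{c\Delta}\,\Theta(P_1^{k})+\frac{\bar\alpha_2 P_1}{c}\,\Theta(P_1^{k-1})=\Theta(P_1^{k})$. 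The squeeze theorem then yields $\phi_K(\gamma)=\Theta(P_1^{k})$.

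The main obstacle I anticipate is the behaviour at the exact left endpoints $\gamma=kc$. There the lower half of \eqref{eqn:g_red_fin} degenerates: the argument $\gamma-c$ lands on $(k-1)c$, and for $k=1$ it becomes $\phi_{K-1}(0)=0$, so the recursion no longer delivers a matching lower bound. A direct estimate near a corner of $(0,c)^K$ shows that at such multiples the integral acquires a logarithmic factor (for instance $\phi_2(c)=\Theta(\ln P_1)$) rather than a clean power, so the $\Theta(P_1^{k})$ assertion is cleanest on the open intervals, with the endpoints sandwiched by monotonicity between adjacent intervals. For the quantity ultimately used, namely the diversity order in \eqref{eqn:diver_order_def}, these sub-polynomial corrections are immaterial because $\log(\ln P_1)/\log P_1\to 0$, so the resulting step-function order is unaffected; the remaining work is purely the bookkeeping of $\Delta$ and of the interval indices across the induction.
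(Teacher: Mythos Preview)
Your treatment of the three easy regimes matches the paper's Appendix~B almost verbatim: the paper also gets the $0<\gamma<c$ bounds by sandwiching the region $\{\sum z_k>Kc-\gamma\}$ between the cubes $\{z_k\ge c-\gamma\}$ and $\{z_k\ge c-\gamma/K\}$ (it phrases the exponential estimate via Lagrange's mean value theorem rather than your direct inequality, but the outputs are identical), and it uses the same recursion \eqref{eqn:g_red_fin} plus a choice of $\Delta$ for the open intervals $kc<\gamma<(k+1)c$. Where you and the paper part ways is at the endpoints $\gamma=kc$. The paper does \emph{not} leave these to monotonicity; it writes $\phi_\kappa(c)=\int_0^c e^{-c/(\bar\alpha_2P_1 z)}z^{-2}\phi_{\kappa-1}(z)\,dz$, invokes the integral mean value theorem to extract a factor $\phi_{\kappa-1}(\xi_{P_1})$, and then argues (its Theorem~1) by contradiction that this factor has a nonzero finite limit, concluding $\phi_\kappa(c)=\Theta(P_1)$ and hence $\phi_K(kc)=\Theta(P_1^k)$.

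Your instinct that this endpoint argument is suspect is correct, and your counterexample $\phi_2(c)=\Theta(\ln P_1)$ is genuine. One can check it directly: with $a=c/(\bar\alpha_2 P_1)$, the inner integral gives $\phi_1(z)=a^{-1}(e^{-a/c}-e^{-a/(c-z)})$, and then $\phi_2(c)=a^{-2}e^{-2a/c}-a^{-1}\int_0^c e^{-ac/(z(c-z))}z^{-2}\,dz$; the substitution $w=2\sinh\theta$ reduces the remaining integral to $2K_1(2a/c)$, and the expansion $K_1(x)=x^{-1}+(x/2)\ln(x/2)+O(x)$ makes the two $a^{-2}$ terms cancel, leaving $\phi_2(c)\sim (2/c^2)\ln P_1$. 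The flaw in the paper's Theorem~1 is the tacit interchange of limit and integral in its equation~(39): the pointwise limit $\lim_{P_1\to\infty}\phi_{\kappa-1}(z)\cdot z^{-2}$ is \emph{not} integrable near $z=0$ (for $\kappa=2$ it equals $1/(cz(c-z))$), so the contradiction argument breaks down. Your repair---prove $\Theta(P_1^k)$ only on the open intervals, trap the endpoints between $\Theta(P_1^{k-1})$ and $\Theta(P_1^k)$ by monotonicity, and note that sub-polynomial factors vanish in the diversity-order limit---is the right way to salvage the downstream conclusion \eqref{eqn:diversity_order_cc}.
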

\begin{proof}
  Please refer to Appendix \ref{app:proof_phi_gener}.
\end{proof}

By combining \eqref{eqn:out_prob_cc_phi_xp}, \eqref{eqn:g_red_fin} and \eqref{eqn:phi_K_cc_bounds_c1}, the lower and upper bounds of the outage probability ${p_{2,K}^{out,CC}}$ can be calculated. More importantly, by substituting \eqref{eqn:out_prob_cc_phi_xp} into \eqref{eqn:diver_order_def} and using Lemma \ref{the:cc_phi}, the diversity order of user 2 for HARQ-CC-aided NOMA systems is given by
\begin{equation}\label{eqn:diversity_order_cc}
d_2^{CC} = {\left[ {K - \left\lfloor {\frac{{{2^{R_2}} - 1}}{c}} \right\rfloor } \right]^ + }.
\end{equation}

%Finally, we have
%\begin{align}\label{eqn:g_red_approx2}
%{\phi_K}\left( \gamma  \right)& \le
%\frac{{\bar \alpha {P_1}}}{c}\left( {{e^{ - \frac{1}{{\bar \alpha {P_1}}}}} - {e^{ - \frac{1}{{\bar \alpha {P_1}}}\frac{c}{\Delta }}}} \right){\phi_{K - 1}}\left( \gamma  \right)+ \frac{{\bar \alpha {P_1}}}{c}{e^{ - \frac{c}{{\bar \alpha {P_1}}}c}}\phi_{K-1}\left( {\gamma  - c + \Delta } \right)
%\end{align}
\subsubsection{HARQ-IR}
The outage probability of user 2 for HARQ-IR-aided NOMA system is expressed by inserting \eqref{eqn:noma_harq_inf_i2} into \eqref{eqn:out_user2} as
\begin{align}\label{eqn:cdf_gamma_sum_ir}
{p_{2,K}^{out,IR}} = \Pr \left\{ {\underbrace{\prod\nolimits_{k = 1}^K {\left( {1 + \frac{{{\alpha _{2,k}}{P_2}}}{{{\alpha _{2,k}}{P_1} + 1}}} \right)}}_{\tilde \gamma}  < {2^{{R_2}}}} \right\}.%\Pr \left\{ {{{\log }_2}\left( {1 + \frac{{{\alpha _1}{P_2}}}{{{\alpha _1}{P_1} + 1}}} \right) +  \cdots  + {{\log }_2}\left( {1 + \frac{{{\alpha _K}{P_2}}}{{{\alpha _K}{P_1} + 1}}} \right) < R} \right\}
\end{align}
Hence, ${p_{2,K}^{out,IR}}$ is determined by the distribution of the product of multiple shifted fractional random variables, i.e., ${\tilde \gamma}$. Similarly to \eqref{eqn:cdf_gamma_sum2}, the Mellin transform can be applied to rewrite ${p_{2,K}^{out,IR}}$ into \eqref{eqn:cdf_gamma_sum_ir_fin}, as shown at the top of the next page,
\begin{figure*}[!t]
\begin{align}\label{eqn:cdf_gamma_sum_ir_fin}
{F_{\tilde \gamma }}\left( \gamma  \right) = {\left( {\frac{c}{{{{\bar \alpha }_2}{P_1}}}} \right)^K}{e^{\frac{K}{{{{\bar \alpha }_2}{P_1}}}}}\underbrace {\int\nolimits_0^c { \cdots \int\nolimits_0^c {u\left( {\gamma  - \prod\nolimits_{k = 1}^K {\left( {1 + c - {z_k}} \right)} } \right){e^{ - \frac{c}{{{{\bar \alpha }_2}{P_1}}}\sum\nolimits_{k = 1}^K {\frac{1}{{{z_k}}}} }}\prod\nolimits_{k = 1}^K {\frac{1}{{{z_k}^2}}} d{z_1} \cdots d{z_K}} } }_{{\psi _K}\left( \gamma  \right)},
\end{align}
\hrulefill
%\vspace*{4pt}
\end{figure*}
where the proof is detailed in Appendix \ref{app:proof_ir_out}.
By substituting \eqref{eqn:cdf_gamma_sum_ir_fin} into \eqref{eqn:cdf_gamma_sum_ir}, the outage probability of HARQ-IR-aided NOMA system is given by
\begin{equation}\label{eqn:out_prob_ir_phi_xp}
{p_{2,K}^{out,IR}} = {\left( {\frac{c}{{{P_1}{{\bar \alpha }_2}}}} \right)^K}{e^{\frac{K}{{{{\bar \alpha }_2}{P_1}}}}}{{\psi _K}\left( 2^{R_2}  \right)}.
\end{equation}
Likewise, it is vital to investigate the asymptotic characteristics of ${{\psi _K}\left( \gamma \right)}$ to gain helpful insights into the outage probability under high SNR. By applying integration domain partition approach to \eqref{eqn:cdf_gamma_sum_ir_fin}, the lower and upper bounds of ${\psi _K}\left( {{\gamma}} \right)$ are obtained as
%Defining ${\psi _K}\left( {{2^R}} \right) = \int\limits_0^c { \cdots \int\limits_0^c {u\left( {{2^R} - \prod\limits_{k = 1}^K {\left( {1 + c - {z_k}} \right)} } \right){e^{ - \frac{c}{{\bar \alpha {P_1}}}\sum\limits_{k = 1}^K {\frac{1}{{{z_k}}}} }}\prod\limits_{k = 1}^K {\frac{1}{{{z_k}^2}}} d{z_1} \cdots d{z_K}} } $,
\begin{align}\label{eqn:psi_lower_upper}
 \frac{{\bar \alpha_2 {P_1}}}{c}{e^{ - \frac{1}{{\bar \alpha_2 {P_1}}}}}{\psi _{K - 1}}\left( {\frac{{{\gamma}}}{{1 + c}}} \right) \le {\psi _K}\left( {{\gamma}} \right) \le {e^{ - \frac{1}{{{{\bar \alpha }_2}{P_1}}}}}\notag\\
 \times \left( {\frac{{c - \Delta }}{{c\Delta }}{\psi _{K - 1}}\left( \gamma  \right) + \frac{{\bar \alpha_2 {P_1}}}{c}{\psi _{K - 1}}\left( {\frac{\gamma }{{1 + c - \Delta }}} \right)} \right),
\end{align}
where the proof is provided in Appendix \ref{app:psi_lb}. Similarly to \eqref{eqn:g_red_fin}, \eqref{eqn:psi_lower_upper} also presents a recursive relationship for ${\psi _K}\left( {{\gamma}} \right)$.

\begin{lemma}\label{the:psi_asy}
${\psi _K}\left( {{\gamma}} \right)$ is an increasing function of $P_1$ and $\gamma$. If $1<\gamma<1+c$, the lower and upper bounds of ${\psi _K}\left( {{\gamma}} \right)$ are given by
\begin{align}\label{eqn:phi_K_lowerupper}
{\left( {\frac{1}{{1+c-\sqrt[K]{\gamma}}} - \frac{1}{c}} \right)^K}{e^{ - \frac{{Kc}}{{\bar \alpha_2 {P_1}\left( {1+c-\sqrt[K]{\gamma}} \right)}}}} \le {\psi _K}\left( \gamma \right) \notag\\
\le e^{ - \frac{K}{\bar \alpha_2 {P_1}} }{\left( {\frac{1}{{1+c - \gamma }} - \frac{1}{c}} \right)^K},
\end{align}
Additionally, ${\psi _K}\left( {{\gamma}} \right) = {{{\left( {{{\bar \alpha_2 }}{P_1}/c} \right)}^K}{\exp{\left( - {K}/{\left({\bar \alpha_2 {P_1}}\right)}\right)}}}$ for $\gamma  \ge (1+c)^K$. As $P_1$ approaches to $\infty$, ${\psi _K}\left( {{\gamma}} \right)$ can be asymptotically expressed as  %${\phi_K}\left( \gamma  \right) = \rm const$ if $\gamma \le c$, and ${\phi_K}\left( \gamma  \right) = 0$ if $\gamma < 0$.
\begin{align}\label{eqn:psi_K_gamma_ge}
&{\psi _K}\left( {{\gamma}} \right)= \notag\\
& \left\{ {\begin{array}{*{20}{c}}
0,&{{\gamma} = 1}\\
{{{\Theta(1)\ne 0}},}&{1 < {\gamma} < 1 + c}\\
{\Theta(P_1^k),}&{{{\left( {1 + c} \right)}^k} \le {\gamma} < {{\left( {1 + c} \right)}^{k+1}}}\\
{\Theta(P_1^K),}&{{\gamma} \ge {{\left( {1 + c} \right)}^K}}
\end{array}} \right.,
\end{align}
where $k\in[1,K-1]$.
\end{lemma}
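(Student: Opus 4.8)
The plan is to follow the blueprint of Lemma \ref{the:cc_phi}, since $\psi_K$ is the multiplicative analogue of $\phi_K$: wherever the HARQ-CC analysis shifts the argument additively by $c$ and uses the level boundaries $kc$, the HARQ-IR integral \eqref{eqn:cdf_gamma_sum_ir_fin} shifts multiplicatively by $1+c$ and uses the geometric boundaries $(1+c)^k$. I would first dispatch the monotonicity and the two extreme regimes directly from the integrand of \eqref{eqn:cdf_gamma_sum_ir_fin}. The step factor $u(\gamma-\prod_{k=1}^K(1+c-z_k))$ is nondecreasing in $\gamma$ and the factor $e^{-\frac{c}{\bar\alpha_2 P_1}\sum_k 1/z_k}$ is increasing in $P_1$, with all remaining factors nonnegative, so $\psi_K$ increases in both arguments. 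Because each $z_k\in(0,c)$ forces $1<1+c-z_k<1+c$, the product $\prod_k(1+c-z_k)$ lies strictly in $(1,(1+c)^K)$; hence the step factor vanishes identically when $\gamma=1$ (giving $\psi_K(1)=0$) and equals one throughout the cube when $\gamma\ge(1+c)^K$, in which case the integral factorizes and the substitution $w=1/z$ yields the closed form $(\bar\alpha_2 P_1/c)^K e^{-K/(\bar\alpha_2 P_1)}=\Theta(P_1^K)$.

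For the band $1<\gamma<1+c$ I would obtain \eqref{eqn:phi_K_lowerupper} by an integration-domain argument. The lower bound follows from the inclusion $\{z:\prod_k(1+c-z_k)\le\gamma\}\supseteq\{z_k\ge 1+c-\sqrt[K]{\gamma}\ \forall k\}$, on which the step factor equals one; bounding $1/z_k\le 1/(1+c-\sqrt[K]{\gamma})$ inside the exponential and integrating $\int z_k^{-2}dz_k$ produces the stated lower expression. For the upper bound, note that since every factor other than the $j$-th exceeds one, $\prod_k(1+c-z_k)\le\gamma$ forces $1+c-z_j\le\gamma$, i.e.\ $z_j\ge 1+c-\gamma$, for each $j$; confining the support to this product region and using $1/z_k\ge 1/c$ in the exponential gives the upper expression. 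Letting $P_1\to\infty$ sends both exponentials to one, so $\psi_K(\gamma)$ is squeezed to a positive constant, establishing the $\Theta(1)$ row of \eqref{eqn:psi_K_gamma_ge}; this band also serves as the base case of the induction below.

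The intermediate rows $(1+c)^k\le\gamma<(1+c)^{k+1}$ are where the recursion \eqref{eqn:psi_lower_upper} is exploited, and I would argue by induction on $K$. For the lower bound, $\gamma/(1+c)\in[(1+c)^{k-1},(1+c)^k)$, so the induction hypothesis gives $\psi_{K-1}(\gamma/(1+c))=\Theta(P_1^{k-1})$, and the prefactor $\frac{\bar\alpha_2 P_1}{c}e^{-1/(\bar\alpha_2 P_1)}=\Theta(P_1)$ lifts this to $\Theta(P_1^k)$. The first term of the upper bound, $\frac{c-\Delta}{c\Delta}\psi_{K-1}(\gamma)$, is $\Theta(P_1^k)$ because $\psi_{K-1}(\gamma)=\Theta(P_1^{\min(k,K-1)})=\Theta(P_1^k)$ for $k\le K-1$. \emph{The main obstacle is the second term} $\frac{\bar\alpha_2 P_1}{c}\psi_{K-1}(\gamma/(1+c-\Delta))$: since $\gamma/(1+c-\Delta)>\gamma/(1+c)$, a careless $\Delta$ can push the argument into the next level and spuriously generate a $\Theta(P_1^{k+1})$ contribution, overshooting the claimed order.

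The resolution, which I would emphasize, is that $\Delta$ is a free parameter in $(0,c]$ that may be fixed as a $P_1$-independent function of $\gamma$. Because $\gamma<(1+c)^{k+1}$, the quantity $1+c-\gamma/(1+c)^k$ is strictly positive, so choosing any $\Delta$ below it keeps $\gamma/(1+c-\Delta)<(1+c)^k$; monotonicity and the induction hypothesis then give $\psi_{K-1}(\gamma/(1+c-\Delta))=O(P_1^{k-1})$, whence the second term is $\Theta(P_1)\cdot O(P_1^{k-1})=O(P_1^k)$, while the coefficient $\frac{c-\Delta}{c\Delta}$ remains a finite constant. Combining the two bounds via the squeeze theorem yields $\psi_K(\gamma)=\Theta(P_1^k)$ on the open interval, completing the induction; the exact boundary points $\gamma=(1+c)^k$ are transitional, carrying only a logarithmic correction, and are immaterial to the diversity order subsequently extracted from \eqref{eqn:out_prob_ir_phi_xp}.
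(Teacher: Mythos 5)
Your treatment of Cases 1--3 ($\gamma=1$, $1<\gamma<1+c$, $\gamma\ge(1+c)^K$) and of the \emph{open} intervals $(1+c)^k<\gamma<(1+c)^{k+1}$ is correct and matches the paper's argument: the same subset inclusions yield \eqref{eqn:phi_K_lowerupper} (you bound the exponential directly where the paper invokes the mean value theorem, which is immaterial), and the same recursion \eqref{eqn:psi_lower_upper} with a $\gamma$-dependent, $P_1$-independent choice of $\Delta$ keeps the second upper-bound term from overshooting --- your requirement $\Delta<1+c-\gamma/(1+c)^k$ is precisely what the paper's prescription $\Delta = (1+c-\exp({\rm mod}(\ln\gamma,\ln(1+c))))/2$ enforces.

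The genuine gap is your dismissal of the boundary points $\gamma=(1+c)^k$ as ``transitional'' and ``immaterial.'' The lemma asserts $\Theta(P_1^k)$ on the left-closed interval $[(1+c)^k,(1+c)^{k+1})$, and at $\gamma=(1+c)^k$ your induction breaks: iterating the lower bound of \eqref{eqn:psi_lower_upper} drives the argument down to $\psi_{K-k}(1)=0$, so the recursive lower bound is vacuous, and monotonicity in $\gamma$ only recovers $\psi_{K-k+1}(1+c)\ge \Theta(1)$, i.e.\ $\psi_K\left((1+c)^k\right)\ge\Theta(P_1^{k-1})$ --- one full power of $P_1$ short, not a logarithmic correction. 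The paper closes this hole with a separate argument: it writes $\psi_\kappa(1+c)=\int_0^c e^{-\frac{c}{\bar\alpha_2 P_1 z}}z^{-2}\psi_{\kappa-1}\left(\tfrac{1+c}{1+c-z}\right)dz$ with $\kappa=K-k+1$, applies the intermediate value theorem to pull out a factor $\psi_{\kappa-1}\bigl((1+c)/(1+c-\varsigma_{P_1})\bigr)$, and proves by contradiction that this factor does not tend to zero as $P_1\to\infty$, whence $\psi_\kappa(1+c)=\Theta(P_1)$ and the missing power is recovered. Note also that the boundary is not immaterial to the downstream result: the diversity order \eqref{eqn:diversity_order_ir} is obtained by evaluating $\psi_K$ at $\gamma=2^{R_2}$, which lands exactly on $(1+c)^k$ whenever $R_2/\log_2(1+c)$ is an integer, so the claimed floor-function formula genuinely depends on the boundary case.
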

\begin{proof}
Please see Appendix \ref{app:proof_psi_gener}.
\end{proof}

By combining \eqref{eqn:out_prob_ir_phi_xp}, \eqref{eqn:psi_lower_upper} and \eqref{eqn:phi_K_lowerupper}, the calculations for the lower and upper bounds of ${p_{2,K}^{out,IR}}$ are enabled. By putting \eqref{eqn:out_prob_ir_phi_xp} into \eqref{eqn:diver_order_def} and using Lemma \ref{the:psi_asy}, the diversity order of user 2 for HARQ-IR-aided NOMA systems is obtained by
%Following the similar steps, we guess that
\begin{equation}\label{eqn:diversity_order_ir}
d_2^{IR} = {\left[ {K - \left\lfloor {\frac{R_2}{\log_2(1+c)}} \right\rfloor } \right]^ + }.
\end{equation}

\subsection{Diversity Order of User 1}
According to \eqref{eqn:out_user1}, it is almost intractable to obtain a closed-form expression for the outage probability of user 1 due to the correlation between ${I_{1\to 1,K}}$ and ${I_{1 \to 2,K}}$. Although the outage probability $p_{1,K}^{out}$ can not be easily derived, the associated diversity order can be obtained by using the following lemma.
%However, the lower and upper bounds of the outage probability $p_{1,K}^{out}$ can be obtained by using inclusion-exclusion principle. In particular, on the basis of the inclusion-exclusion principle, we have
%\begin{align}\label{eqn:out_user1_inex}
%p_{1,K}^{out} =& \Pr \left\{ {{{I_{1\to 1,K}}} < {R_1}} \right\}+\Pr \left\{ {{I_{1 \to 2,K}} < {R_2}} \right\}\notag\\
%& - \Pr \left\{ {{{I_{1\to 1,K}}} < {R_1},{I_{1 \to 2,K}} < {R_2}} \right\}.
%\end{align}
%Thus $p_{1,K}^{out}$ is lower and upper bounded as
%\begin{align}\label{eqn:out_user1_inex}
%\max\left\{\Pr \left\{ {{{I_{1\to 1,K}}} < {R_1}} \right\},\Pr \left\{ {{I_{1 \to 2,K}} < {R_2}} \right\}\right\} \le p_{1,K}^{out}\notag\\
% \le \Pr \left\{ {{{I_{1\to 1,K}}} < {R_1}} \right\}+\Pr \left\{ {{I_{1 \to 2,K}} < {R_2}} \right\}.
%\end{align}
\begin{lemma}\label{the:mul_events_diver}
If the probability of the event $A_j$ follows the asymptotic behaviour as $\Pr\{A_j\}=\Theta(P^{-d_{A_j}})$ as $P\to P_0$ and $j\in [1,J]$, the probability of the union of the events $\bigcup\nolimits_{j = 1}^J {{A_j}} $ is asymptotic to
  \begin{equation}\label{eqn:prob_union_asy}
   \Pr\left\{\bigcup\nolimits_{j = 1}^J {{A_j}}\right\} = \Theta(P^{-\min\{d_{A_j}:j\in [1,J]\}}).
  \end{equation}
We denote by $d_{\bigcup }$ the diversity order with regard to the union of events $A_j$ for $j=1,\cdots,J$, and $d_{\bigcup }= \min\{d_{A_j}:j\in [1,J]\}$.
\end{lemma}
\begin{proof}
By using the inclusion-exclusion principle, $\Pr\left\{\bigcup\nolimits_{j = 1}^J {{A_j}}\right\}$ is bounded as
\begin{align}\label{eqn:prob_union_inex}
 \max\{\Pr\left({{A_j}}\right):j\in [1,J]\} &\le \Pr\left\{\bigcup\nolimits_{j = 1}^J {{A_j}}\right\} \le \sum\nolimits_{j = 1}^J {\Pr\left({{A_j}}\right)}.
\end{align}
By applying squeeze theorem to \eqref{eqn:prob_union_inex}, the proof is completed.%\eqref{eqn:prob_union_asy} is justified.
\end{proof}

By applying Lemma \ref{the:mul_events_diver} to \eqref{eqn:out_user1}, we conclude that the diversity order of user 1 is determined by the diversity orders associated with $\Pr \left\{ {{{I_{1\to 1,K}}} < {R_1}} \right\}$ and $\Pr \left\{ {{I_{1 \to 2,K}} < {R_2}} \right\}$. Evidently, the asymptotic behavior of $\Pr \left\{ {{I_{1 \to 2,K}} < {R_2}} \right\}$ is the same as that of the outage probability of user 2 because the accumulated mutual information ${I_{1 \to 2,K}}$ has the similar form as ${I_{2 \to 2,K}}$. Hence, it follows that $\Pr \left\{ {{I_{1 \to 2,K}} < {R_2}} \right\}=\Theta({P_1}^{-d_2})$. In addition, with \eqref{eqn:noma_harq_inf_1}, the probabilities $\Pr \left\{ {{{I_{1\to 1,K}}} < {R_1}} \right\}$ for Type I HARQ, HARQ-CC and HARQ-IR are expressed as the distributions of the maximum, summation and product of $K$ independent exponential random variables, respectively, which degenerate to the key problems of the conventional HARQ schemes. Fortunately, it has been proved in \cite{shi2016optimal,shi2017asymptotic} that full diversity can be achieved by the conventional HARQ systems, that is, the diversity order is equal to $K$. Therefore, Lemma \ref{the:mul_events_diver} indicates $d_1 = \min\{d_2,K\}=d_2$ owing to $d_2 \le K$. Accordingly, the diversity order of the HARQ-aided NOMA system is restricted by the user with the first decoding order (the one with the worse channel quality), namely user 2. %shi2015outage
\subsection{Discussions}
\subsubsection{Comparison between the Three Types of HARQ-aided NOMA systems}
By comparing \eqref{eqn:diver_order_I} and \eqref{eqn:diversity_order_cc}, it is readily found that the diversity order of HARQ-CC-aided NOMA system is greater than that of Type I HARQ-aided NOMA system, i.e.,
\begin{equation}\label{eqn:d_I_cc_rel}
  d_2^I \le d_2^{CC}.
\end{equation}
Then by comparing \eqref{eqn:diversity_order_cc} to \eqref{eqn:diversity_order_ir} together with the finding that $\left({2^R-1}\right)/{c}\le n$ is a subset of ${R}/{\log_2(1+c)}\le n$ for $n\ge1$ because of $(1+c)^n\ge 1+cn$, we arrive at $\lfloor \left({2^R-1}\right)/{c} \rfloor \ge \lfloor {R}/{\log_2(1+c)} \rfloor$. By combining \eqref{eqn:diversity_order_cc} and \eqref{eqn:diversity_order_ir}, the relationship between $d_2^{CC}$ and $d_2^{IR}$ obeys % still has some problems.
\begin{equation}\label{eqn:d_cc_IR_rel}
  d_2^{CC} \le d_2^{IR}.
\end{equation}
From \eqref{eqn:d_I_cc_rel} and \eqref{eqn:d_cc_IR_rel}, it is found that HARQ-IR-aided NOMA system performs the best in terms of the diversity order. Whereas, the Type I HARQ-aided NOMA system has the lowest diversity order and HARQ-CC performs in between them. This is not beyond our expectation because HARQ-IR achieves the best performance at the price of its high computation complexity, while Type I HARQ and HARQ-CC attempt to moderately reduce the computation complexity of encodings and decodings for the sake of saving the costs of hardware and energy. The similar results as \eqref{eqn:d_I_cc_rel} and \eqref{eqn:d_cc_IR_rel} also apply to the relationship between the diversity orders of user 1 under different HARQ schemes, namely $d_1^I \le d_1^{CC} \le d_1^{IR}$.
\subsubsection{Extension to Multiple NOMA Users}\label{sec:ext}
Our analytical findings can be extended to more general scenarios with more than two users. To proceed, we assume that there are $M$ NOMA users. The definitions for the notations ${\bf x}_{i}$, $\alpha_{i,k}$, $\bar \alpha_i$, $R_i$ and $P_i$ in Section \ref{sec:sys_mod} are also applicable here, where $i\in[1,M]$ and $k\in[1,K]$. Suppose that the average channel gains are sorted as $\bar \alpha_1\ge \bar \alpha_2 \ge \cdots \ge \bar \alpha_M$, the NOMA principle suggests that the decoding order of users conforms to the ascending order of the average channel gains. The outage event occurs at user $i$ if user $i$ fails to decode any user $j$'s message ${\bf x}_{j}$ for $i \le j \le M$. By applying Gaussian codes and typical-set decoding, the failure of decoding user $j$'s message takes place if the accumulated mutual information after $K$ HARQ rounds is less than the preset target transmission rate for user $j$, i.e., ${I_{i \to j,K}} < R_j$. Similarly to \eqref{eqn:noma_harq_inf_i2}, the mutual information after $K$ HARQ rounds accumulated by user $i$ intended for decoding ${\bf x}_{j}$ is written as
\begin{align}\label{eqn:noma_harq_inf_i2m2}
&{I_{i \to j,K}} = \notag\\
&\left\{ {\begin{array}{*{20}{c}}
{\max \left\{ {{{\log }_2}\left( {1 + \frac{{{\alpha _{i,k}}{P_j}}}{{{\alpha _{i,k}}\sum\nolimits_{l = 1}^{j - 1} {{P_l}}  + 1}}} \right):k \in \left[ {1,K} \right]} \right\},}&{\rm{I}}\\
{{{\log }_2}\left( {1 + \sum\nolimits_{k = 1}^K {\frac{{{\alpha _{i,k}}{P_j}}}{{{\alpha _{i,k}}\sum\nolimits_{l = 1}^{j - 1} {{P_l}}  + 1}}} } \right),}&{{\rm{CC}}}\\
{\sum\nolimits_{k = 1}^K {{{\log }_2}\left( {1 + \frac{{{\alpha _{i,k}}{P_j}}}{{{\alpha _{i,k}}\sum\nolimits_{l = 1}^{j - 1} {{P_l}}  + 1}}} \right)} ,}&{{\rm{IR}}}
\end{array}} \right..
\end{align}
where the term ${\alpha _{i,k}}\sum\nolimits_{l = 1}^{j - 1} {{P_l}}$ exists in the denominator  because the undetected messages (i.e., ${\bf x}_1,\cdots,{\bf x}_{j-1}$) are treated as noise while decoding ${\bf x}_j$.

In analogous to \eqref{eqn:out_user1}, the outage probability of user $i$ is expressed as
\begin{equation}\label{eqn:out_prob_def_mum2}
p_{i,K}^{out} = \Pr \left\{ {\bigcup\nolimits_{j = i}^M {{I_{i \to j,K}} < {R_j}} } \right\}.
\end{equation}
By using Lemma \ref{the:mul_events_diver}, the asymptotic behaviours of $\Pr \left\{ {{{I_{i \to j,K}} < {R_j}} } \right\}$ for $j=i,\cdots,M$ determine the diversity order of user $i$. More precisely, the diversity order of user $i$ is given by
\begin{equation}\label{eqn:diver_order_userim2}
d_{i }= \min\{d_{i\to j}:j\in [i,M]\},
\end{equation}
where $d_{i\to j}$ denotes the diversity order associated with $\Pr \left\{ {{{I_{i \to j,K}} < {R_j}} } \right\}$. By recognizing that the expression of ${I_{i \to j,K}} $ in \eqref{eqn:noma_harq_inf_i2m2} takes the similar form as \eqref{eqn:noma_harq_inf_i2} except for $i=j=1$, it is clear that the analytical results in Section \ref{sec:diver_order_user2} apply to the derivation of $d_{i\to j}$. Moreover, if $i=j=1$, ${I_{i \to j,K}} $ degenerates to the expression of accumulated mutual information under conventional HARQ schemes as \eqref{eqn:noma_harq_inf_1}. Accordingly, the conclusion of full time diversity drawn in \cite{shi2016optimal,shi2017asymptotic} is applicable here, that is, $d_{1\to 1}=K$. As a consequence, the expressions of $d_{i\to j}$ in different types of HARQ-aided NOMA systems are generalized as %By combining the results in Section \ref{sec:diver_order_user2} and \cite{shi2015outage,shi2016optimal,shi2017asymptotic}shi2015outage
\begin{equation}\label{eqn:diversity_order_ij}
{d_{i \to j}} = \left\{ {\begin{array}{*{20}{c}}
{K{{\left[ {1 - \left\lfloor {\frac{{{2^{{R_j}}} - 1}}{{{c_j}}}} \right\rfloor } \right]}^ + },}&{\rm{I}}\\
{{{\left[ {K - \left\lfloor {\frac{{{2^{{R_j}}} - 1}}{{{c_j}}}} \right\rfloor } \right]}^ + },}&{{\rm{CC}}}\\
{{{\left[ {K - \left\lfloor {\frac{{{R_j}}}{{{{\log }_2}(1 + {c_j})}}} \right\rfloor } \right]}^ + },}&{{\rm{IR}}}
\end{array}} \right.,
\end{equation}
where ${c_j} = {{{P_j}}}/{{\sum\nolimits_{l = 1}^{j - 1} {{P_l}} }}$ for $j=2,\cdots,M$ and we stipulate ${c_1}=\infty$. Moreover, since ${d_{i \to j}}$ is independent of $i$, i.e., $d_{1\to j}=\cdots=d_{j\to j}$ for $j=[1,M]$, \eqref{eqn:diver_order_userim2} can be rewritten as%substituting \eqref{eqn:diversity_order_ij} into \eqref{eqn:diver_order_userim2} results in
\begin{equation}\label{eqn:diver_order_userim2rew}
d_{i }= \min\{d_{i\to i},d_{i+1}\},\,i\in [1,M-1],
\end{equation}
and $d_M = d_{M\to M}$ and $d_1=K$. From \eqref{eqn:diver_order_userim2rew}, the relationship between users' diversity order follows as $d_1 = d_2 \le \cdots \le d_M$. Clearly, the diversity order of the HARQ-aided NOMA system is constrained by the users with worse channel quality.%the diversity order of user $i$, i.e., $d_{i }$, is eventually obtained.
%by convention.

%---About power control... very important!!!!
%\subsubsection{Power Control Based on the Feedback of the Indices of Successfully Decoded Messages}\label{sec:pow}
%Due to the involvement of multiple transmissions into the HARQ-aided NOMA scheme, some of the messages ${\bf x}_1,\cdots,{\bf x}_M$ might be recovered before the others by all the users except the ones who have reconstructed their own messages. Suppose that the message $x_i$ has been successfully decoded by users $1,\cdots,i$. According to the principle of NOMA, this implies that the messages $x_{i+1},\cdots, x_{M}$ have already been decoded. Clearly, it is unnecessary to retransmit the messages $x_{i},\cdots, x_{M}$ which become redundant in the following HARQ rounds. If the indices of these messages are sent back to the source via feedback channels, these messages can be eliminated from the superposition coded signal in the subsequent retransmissions. It consequently results in the reduction of the power consumption.%, which can be quantified by the difference between

%the transmission powers of some users' messages can be saved as long as these message have been successfully decoded.
\subsubsection{Power-Efficient Transmission Strategy}\label{sec:power_eff}
As stated in Section \ref{sec:sys_mod}, we assume that the superposition signals are kept sent in all HARQ rounds to simplify the transmission scheme. Needless to say, this assumption results in the waste of the valuable power resource on the successfully delivered message. To address this issue, we can develop an power-efficient transmission strategy, in which only the unsuccessfully recovered messages need to be resent. By doing so, the power of the message that is unnecessary to resent can be saved. Bearing this idea in mind, we proceed to determine the diversity order of the HARQ-aided NOMA system based on the power-efficient transmission strategy. To further exemplify, we still consider the two-user HARQ-aided NOMA system. The diversity order of user 1 is analyzed first. To this end, we begin with the outage probability of user 1. The occurrence of the outage at user 1 implies that the message ${\bf x}_1$ will never be recovered before ${\bf x}_2$. As elaborated in Section \ref{sec:sys_mod}, ${\bf x}_2$ can be treated as an apparent message to user 1 in subsequent transmissions after the recovery of ${\bf x}_2$. Therefore, this amounts to insisting on resending the superposition signals in all HARQ rounds, no matter whether ${\bf x}_2$ is recovered by both users. Accordingly, the outage probability of user 1 for the simple transmission strategy is still applicable to the novel power-efficient one. As a result, the diversity order $d_1^{\rm eff}$ of user 1 for the novel strategy will keep unchanged as $d_1^{\rm eff} = d_1 = d_2$.

In contrast, it becomes much more challenging to tackle the diversity order of user 2. In a similar manner, the outage probability of user 2 is first obtained. The decoding failure at user 2 means that the message ${\bf x}_1$ could be recovered before ${\bf x}_2$. For notational convenience, we define the successful decoding of user 1 after $\ell$ HARQ rounds as event $E_\ell$, and denote by $\bar E$ as the outage event at user 1. By using the law of total probability, the outage probability of user 2 can be written as
\begin{align}\label{eqn:out_2_eff}
p_{2,K}^{{\rm eff},out} =& \sum\nolimits_{\ell {{ = }}1}^{K - 1} {\Pr \left\{ { {{I_{2 \to 2,K,\ell }} < {R_2}} } \right\}{\Pr \left\{ {{E_\ell }} \right\}}}  \notag\\
&+ \Pr \left\{ { {{I_{2 \to 2,K}} < {R_2}} } \right\}{\Pr \left\{ \bar E\bigcup {{E_K}} \right\}},
\end{align}
where ${{I_{2 \to 2,K,\ell }}}$ is referred to as the mutual information accumulated by user $2$ for decoding ${\bf x}_2$ after $K$ HARQ rounds conditioned on $E_\ell$. By utilizing different types of HARQ schemes, ${{I_{2 \to 2,K,\ell }}}$ is explicitly given by \eqref{eqn:noma_harq_inf_i2_eff}, as shown at the top of the next page.
\begin{figure*}[!t]
\begin{align}\label{eqn:noma_harq_inf_i2_eff}
{I_{2 \to 2,K,\ell }} = \left\{ {\begin{array}{*{20}{c}}
{\max \left\{ {\left\{ {{{\log }_2}\left( {1 + \frac{{{\alpha _{2,k}}{P_2}}}{{{\alpha _{2,k}}{P_1} + 1}}} \right):k \in \left[ {1,\ell } \right]} \right\},\left\{ {{{\log }_2}\left( {1 + {\alpha _{2,k}}{P_2}} \right):k \in \left[ {\ell  + 1,K} \right]} \right\}} \right\},}&{\rm{I}}\\
{{{\log }_2}\left( {1 + \sum\nolimits_{k = 1}^\ell  {\frac{{{\alpha _{2,k}}{P_2}}}{{{\alpha _{2,k}}{P_1} + 1}}}  + \sum\nolimits_{k = \ell  + 1}^K {{\alpha _{2,k}}{P_2}} } \right),}&{{\rm{CC}}}\\
{\sum\nolimits_{k = 1}^\ell  {{{\log }_2}\left( {1 + \frac{{{\alpha _{2,k}}{P_2}}}{{{\alpha _{2,k}}{P_1} + 1}}} \right)}  + \sum\nolimits_{k = \ell  + 1}^K {{{\log }_2}\left( {1 + {\alpha _{2,k}}{P_2}} \right)} ,}&{{\rm{IR}}}
\end{array}} \right.
\end{align}
\hrulefill
\end{figure*}
In \eqref{eqn:noma_harq_inf_i2_eff}, the terms ${\alpha _{2,k}}{P_1}$ in the denominator disappear for $k \in \left[ {\ell  + 1,K} \right]$ because of the power-efficient transmissions. Note that ${\Pr \left\{ {{E_\ell }} \right\}}=p_{1,\ell-1}^{out}-p_{1,\ell}^{out}$, ${\Pr \left\{ {{\bar E }} \right\}}=p_{1,K}^{out}$ \cite{caire2001throughput}, and the events $\bar E $ and ${{E_K }} $ are mutually exclusive, \eqref{eqn:out_2_eff} is reduced to
\begin{align}\label{eqn:out_i2_eff_red}
p_{2,K}^{{\rm{eff}},out} =& \sum\nolimits_{\ell {\rm{ = }}1}^{K - 1} {\Pr \left\{ {{I_{2 \to 2,K,\ell }} < {R_2}} \right\}\left( {p_{1,\ell  - 1}^{out} - p_{1,\ell }^{out}} \right)}\notag\\
 & + p_{2,K}^{out}p_{1,K - 1}^{out}.
\end{align}
Clearly from \eqref{eqn:out_i2_eff_red}, the power-efficient strategy performs better than the simple strategy in terms of the outage probability of user 2, i.e., $p_{2,K}^{{\rm{eff}},out} \le p_{2,K}^{out}$, because $\Pr \left\{ {{I_{2 \to 2,K,\ell }} < {R_2}} \right\} \le \Pr \left\{ {{I_{2 \to 2,K}} < {R_2}} \right\}$. Furthermore, to obtain the diversity order $d_2^{\rm eff}$, it is mandatory to study the asymptotic behavior of $p_{2,K}^{{\rm{eff}},out}$ against $P_2$. Recalling that the asymptotic scaling laws of ${p_{1,\ell }^{out}}$ and $p_{2,K}^{out}$ as
\begin{equation}\label{eqn:asy1}
{p_{1,\ell }^{out}} = \Theta({P_1}^{-{d_{1,\ell }}}) = \Theta({P_2}^{-{d_{1,\ell }}}),
\end{equation}
\begin{equation}\label{eqn:asy2}
p_{2,K}^{out} = \Theta({P_2}^{-d_2}),
\end{equation}
where ${d_{1,\ell }}$ can be obtained from \eqref{eqn:diver_order_I}, \eqref{eqn:diversity_order_cc} and \eqref{eqn:diversity_order_ir} by simply replacing $K$ with $\ell$. %\begin{equation}\label{eqn:d_1ell_def}
%{d_{1,\ell }} = \left\{ {\begin{array}{*{20}{c}}
%{\ell {{\left[ {1 - \left\lfloor {\frac{{{2^{{R_2}}} - 1}}{c}} \right\rfloor } \right]}^ + },}&{\rm{I}}\\
%{{{\left[ {\ell  - \left\lfloor {\frac{{{2^{{R_2}}} - 1}}{c}} \right\rfloor } \right]}^ + },}&{{\rm{CC}}}\\
%{{{\left[ {\ell  - \left\lfloor {\frac{{{R_2}}}{{{{\log }_2}(1 + c)}}} \right\rfloor } \right]}^ + },}&{{\rm{IR}}}
%\end{array}} \right..
%\end{equation}
In addition, Appendix \ref{app:div_eff} proves
\begin{equation}\label{eqn:asy3}
\Pr \left\{ {{I_{2 \to 2,K,\ell }} < {R_2}} \right\} = \Theta({P_2}^{-d_{2,\ell}}),
\end{equation}
where
\begin{equation}\label{eqn:d2_ell}
d_{2,\ell} = \left\{ {\begin{array}{*{20}{c}}
{\ell {{\left[ {1 - \left\lfloor {\frac{{{2^{{R_2}}} - 1}}{c}} \right\rfloor } \right]}^ + } + K - \ell ,}&{\rm{I}}\\
{{{\left[ {\ell  - \left\lfloor {\frac{{{2^{{R_2}}} - 1}}{c}} \right\rfloor } \right]}^ + } + K - \ell ,}&{{\rm{CC}}}\\
{{{\left[ {\ell  - \left\lfloor {\frac{{{R_2}}}{{{{\log }_2}(1 + c)}}} \right\rfloor } \right]}^ + } + K - \ell ,}&{{\rm{IR}}}
\end{array}} \right.,
\end{equation}
By plugging \eqref{eqn:asy1}-\eqref{eqn:asy3} into \eqref{eqn:out_i2_eff_red}, we arrive at
\begin{align}\label{eqn:outeff_d2}
&p_{2,K}^{{\rm{eff}},out} %=& \sum\limits_{\ell {{ = }}1}^{K - 1} {\Theta \left( {{P_2}^{ - {d_{2,\ell }}}} \right)\Theta \left( {{P_2}^{ - {d_{1,\ell  - 1}}}} \right)} + \Theta \left( {{P_2}^{ - {d_2}}} \right)\Theta \left( {{P_2}^{ - {d_{1,K - 1}}}} \right)\notag\\
=\sum\nolimits_{\ell {{ = }}1}^{K - 1} {\Theta \left( {{P_2}^{ - {d_{1,\ell  - 1}} - {d_{2,\ell }}}} \right)}  + \Theta \left( {{P_2}^{ - {d_{1,K - 1}} - {d_2}}} \right)\notag\\
&=\Theta \left( {{P_2}^{ - \min \left\{ {{d_{1,\ell  - 1}} + {d_{2,\ell }}:\ell  \in \left[ {1,K} \right]} \right\}}} \right)=\Theta \left( {{P_2}^{ - d_2}} \right),
\end{align}
Accordingly, the diversity order $d_2^{\rm eff}$ of user 2 is $d_2^{\rm eff}=d_2$.
%\begin{align}\label{eqn:eff_d2}
%d_2^{\rm eff} %=& \sum\limits_{\ell {{ = }}1}^{K - 1} {\Theta \left( {{P_2}^{ - {d_{2,\ell }}}} \right)\Theta \left( {{P_2}^{ - {d_{1,\ell  - 1}}}} \right)} + \Theta \left( {{P_2}^{ - {d_2}}} \right)\Theta \left( {{P_2}^{ - {d_{1,K - 1}}}} \right)\notag\\
%&=\Theta \left( {{P_2}^{ - \min \left\{ {{d_{1,\ell  - 1}} + {d_{2,\ell }}:\ell  \in \left[ {1,K} \right]} \right\}}} \right)=d_2,
%\end{align}
In a nutshell, the power-efficient transmission strategy can decrease the outage probability of the farther user, but the corresponding diversity order remains the same as under the simple strategy.

\subsubsection{Imperfect CSI}\label{sec:power_eff}
To account for the impact of imperfect CSI, the minimum mean square error (MMSE) channel estimation error model in \cite{yang2015performance} is assumed herein. We denote by $\hat h_{i,k}$ the estimate of the channel coefficient $ h_{i,k}$ between the source and user $i$ in the $k$-th HARQ round. Accordingly, the channel estimation error $\epsilon_{i,k}$ follows as $\epsilon_{i,k} = h_{i,k} - \hat h_{i,k}$, which is complex Gassuian distributed with mean $0$ and variance $\sigma_i^2$, i.e., $\epsilon_{i,k} \sim {\cal CN}(0,\sigma_i^2)$. By recalling the Rayleigh fading channels together with $\alpha_{i,k}= |h_{i,k}|^2$, $\hat h_{i,k}$ is also a complex Gaussian distributed with mean $0$ and variance $\bar \alpha_i - \sigma_i^2$, i.e., $\hat h_{i,k} \sim {\cal CN}(0,\bar \alpha_i - \sigma_i^2)$. Accordingly, the received signal at user $i$ in the $k$-th HARQ round is expressed as
\begin{equation}\label{eqn:imper_csi}
  {{\bf y}_{i,k}} = \hat h_{i,k} \sum\nolimits_{i=1}^M {\bf x}_i + \underbrace{\epsilon_{i,k} \sum\nolimits _{i=1}^M {\bf x}_i + {\bf w}_{i,k}}_{\triangleq\tilde{\bf w}_{i,k}}, %\sim {\cal CN}(0,1+\sigma_i^2\sum\nolimits_{i=1}^MP_i)
\end{equation}
where ${\bf w}_{i,k}$ denotes AWGN with unity variance. In general the term ${\tilde{\bf w}_{i,k}}$ in \eqref{eqn:imper_csi} can be treated as Gaussian noise with variance $1+\sigma_i^2\sum\nolimits_{i=1}^MP_i$. As a result, \eqref{eqn:imper_csi} collapses to the case of perfect CSI. Hence, the proposed approach is still applicable to the case of imperfect CSI. Besides, the resultant diversity orders keep unchanged because of the invariable ratios between transmit powers allocated to different users.

%By assuming imperfect CSI at the receiver,

\section{Numerical Results}\label{sec:num}
In this section, the numerical results are presented for verifications. Unless otherwise indicated, we take the two-user HARQ-aided NOMA system with $K=4$, $R_1=R_2=1$bps/Hz and $\bar \alpha_1=2 \bar\alpha_2=2$ as examples. The simulation values of the outage probability presented in Figs. \ref{fig:cc_ver}, \ref{fig:ir_ver} and \ref{fig:userm4} are computed by carrying out $10^7$ Monte Carlo runs. % in Figs. \ref{fig:I_ver}-\ref{fig:ord}, and a three-user HARQ-aided NOMA system is considered in Fig. \ref{fig:user}.

Fig. \ref {fig:I_ver} shows the comparison between the exact and simulation results in terms of the outage probability of user 2 in Type I HARQ-aided NOMA system. This figure demonstrates the excellent match between the exact and simulation results. Clearly, the outage event occurs if $c \le 2^{R_2}-1=1$ according to \eqref{eqn:diver_order_I}, which can be justified by Fig. \ref{fig:I_ver}. Otherwise, if $c> 1$, the diversity order of user 2 is equal to $K=4$, which can be observed from Fig. \ref {fig:I_ver}. Besides, it is not surprising that the outage probability decreases with the increase of $c$, because the transmission power allocated to user 2 rises.
\begin{figure}
  \centering
  \includegraphics[width=2in]{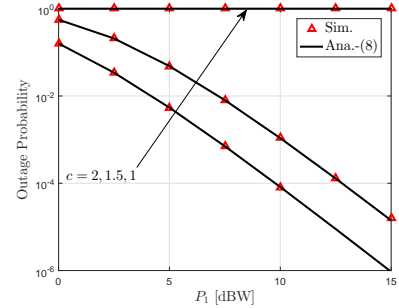}
  \caption{The outage probability of user 2 against $P_1$ in Type I HARQ-aided NOMA system.}\label{fig:I_ver}% under different values of $c$ with parameters $M=2$, $K=4$, and $R_1=R_2=1$bps/Hz
\end{figure}

In Fig. \ref{fig:cc_ver}, the outage probability of user 2 is plotted versus $P_1$ in HARQ-CC-aided NOMA system, where the lower and upper bounds of the outage probability are obtained by using \eqref{eqn:g_red_fin}. It is noteworthy that the value of $\Delta$ in \eqref{eqn:g_red_fin} is set in each recursive step as $\Delta = (c-{\rm mod}(\gamma,c))/2$, where ${\rm mod}(\cdot,\cdot)$ is the operator of modulo. It is easily seen that the simulation results lie in between the corresponding lower and upper bounds. In addition, it is observed that the curves of the simulation results become parallel to those of lower and upper bounds in high SNR regime. This confirms the validity of the analysis of diversity order. Moreover, it is seen from Fig. \ref{fig:cc_ver} that the outage curve declines faster as $c$ increases, because \eqref{eqn:diversity_order_cc} indicates that the diversity orders associated with $c=0.4$, $c=0.8$ and $c=1.2$ are $2$, $3$ and $4$, respectively.
\begin{figure}
  \centering
  \includegraphics[width=2in]{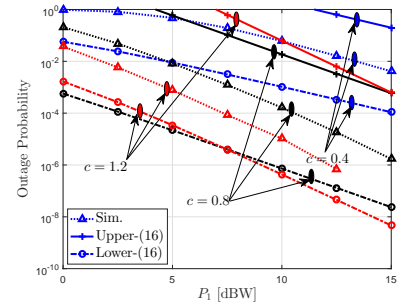}
  \caption{The outage probability of user 2 against $P_1$ in HARQ-CC-aided NOMA system.}\label{fig:cc_ver}% under different values of $c$ with parameters $M=2$, $K=4$, $\alpha_1=2\alpha_2=2$ and $R_1=R_2=1$bps/Hz
\end{figure}

Fig. \ref{fig:ir_ver} depicts the outage probability of user 2 against $P_1$ in HARQ-IR-aided NOMA system. The lower and upper bounds of the outage probability are obtained by utilizing \eqref{eqn:psi_lower_upper}. It is worth noting that the value of $\Delta$ in \eqref{eqn:psi_lower_upper} is set as $\Delta = (1+c-\exp({\rm mod}(\ln\gamma,\ln(1+c))))/2$. It is readily found that the simulation results are lower and upper bounded in Fig. \ref{fig:ir_ver}. Moreover, similarly to Fig. \ref{fig:cc_ver}, the numerical results with regard to the diversity order can be observed in Fig. \ref{fig:ir_ver}.
\begin{figure}% SNR extended to 50dB (K=3,M=2) to see the the coincidence between the bounds and analytical results.
  \centering
  \includegraphics[width=2in]{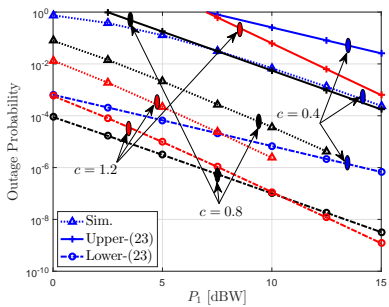}
  \caption{The outage probability of user 2 against $P_1$ in HARQ-IR-aided NOMA system.}\label{fig:ir_ver}% under different values of $c$  with parameters $M=2$, $K=4$, $\alpha_1=2\alpha_2=2$ and $R_1=R_2=1$bps/Hz
\end{figure}

Fig. \ref{fig:ord} illustrates the relationship between the diversity order and the transmission rate. It is observed that the diversity order is a decreasing step function of $R_2$, which is consistent with our analysis. In accordance with Fig. \ref{fig:ord}, the three types of HARQ-aided NOMA schemes are capable of achieving full time diversity (i.e., $d_2=K$) under the same condition, i.e., $R_2 < \log_2(1+c)$. Besides, by fixing the values of $c$ and $R_2$, HARQ-IR-aided NOMA performs the best among the three types of HARQ-aided NOMA in terms of diversity order. Besides, Type I HARQ-aided NOMA exhibits the lowest diversity order. It should be mentioned that higher diversity order is achieved by HARQ-IR in sacrifice of the computation complexity. In addition, the diversity order can also be improved by increasing $c$.

\begin{figure}
  \centering
  \includegraphics[width=2in]{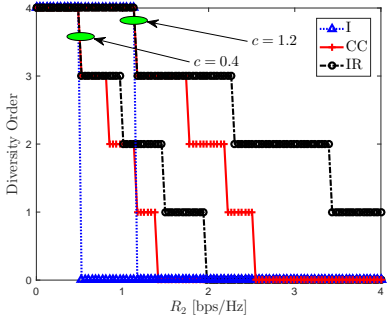}
  \caption{The diversity order of user 2 against $R_2$.}\label{fig:ord}% under different values of $c$  with parameters $M=2$ and $K=4$
\end{figure}

In Fig. \ref{fig:userm4}, the outage performance of four-user HARQ-aided NOMA systems is investigated. The curves of the outage probabilities for four users are obtained by conducting Monte Carlo simulations. By following \eqref{eqn:diver_order_userim2rew}, all the users except user 4 in the Type I HARQ-aided NOMA system obtain zero diversity order, because the diversity orders of users 1 to 3 are restricted by $d_{3\to 3} = K[1-\lfloor (2^{R_3}-1)/c_3\rfloor]^+=0$. This result can be observed in Fig. \ref{fig:userm4}, where the outage event takes place at users 1 to 3. Moreover, full diversity order is achievable by user 4 in any type of HARQ-aided NOMA system, because $R_4 < \log_2(1+c_4)$ is satisfied. For instance, by taking the HARQ-CC-aided NOMA scheme in Fig. \ref{fig:userm4} as an example, the outage probabilities of user 4 are around $2\times 10^{-2}$ and $2\times 10^{-5}$ by fixing $P_1=0$dBW and $10$dBW, respectively. According to the definition of ${\tilde d_i}$ in Footnote \ref{ft:def_d}, the diversity order is about $\tilde d_i = 10 (\log_{10}2\times 10^{-2} - \log_{10}2\times 10^{-5})/(10-0)=3$. This further corroborates the correctness of our analysis. In addition, the diversity orders of users 1, 2 and 3 in the HARQ-CC-aided NOMA system are all equal to $1$, which can be justified by Fig. \ref{fig:userm4}. Furthermore, the diversity orders of users 1, 2 and 3 in the HARQ-IR-aided NOMA system are all equal to $2$, which are in perfect agreement with the simulation results in Fig. \ref{fig:userm4}. Unsurprisingly, the HARQ-IR-aided NOMA system outperforms the other two types of HARQ-aided NOMA systems owing to the exploitation of additional coding gain. %advanced coding technique.
%\begin{figure}
%  \centering
%  \includegraphics[width=2in]{./fig/mul_user}
%  \caption{The outage probability of the HARQ-aided NOMA system with parameters $M=3$, $K=3$, $c_2=2$, $c_3 =1.4$, $\bar\alpha_1=2\bar\alpha_2=4\bar\alpha_3=2$ and $R_1=R_2=R_3=2$bps/Hz.}\label{fig:user}% under different values of $c$
%\end{figure}

\begin{figure}
  \centering
  \includegraphics[width=2in]{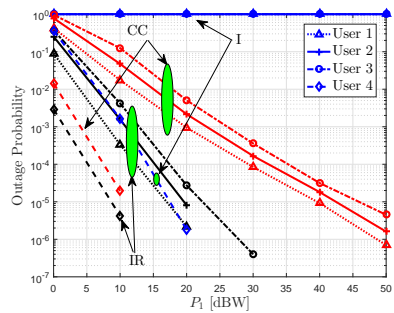}
  \caption{The outage probability of the HARQ-aided NOMA system with parameters $M=4$, $K=3$, $c_2=2$, $c_3 =1.4$, $c_4 =4$, $\bar\alpha_1=2\bar\alpha_2=4\bar\alpha_3=6\bar\alpha_4=2$ and $R_1=R_2=R_3=R_4=2$bps/Hz.}\label{fig:userm4}% under different values of $c$
\end{figure}
\section{Conclusions}\label{sec:con}
The diversity order is frequently used to capture the asymptotic behaviour of the outage probability under high SNR. This paper has thoroughly investigated the diversity orders of three types of HARQ-aided NOMA systems, i.e., Type I HARQ, HARQ-CC and HARQ-IR. The diversity orders of three different HARQ-aided downlink NOMA systems have been derived in closed-form, wherein the integration domain partition trick has been proposed to provide outage bounds especially for HARQ-CC and HARQ-IR-aided NOMA systems. The analytical results have verified that the diversity order is a decreasing step function of the transmission rate given the ratio between the transmit powers allocated to the two users. It has been proved that HARQ-IR-aided NOMA systems achieve the largest diversity order, followed by HARQ-CC-aided and then Type I HARQ-aided NOMA systems. Besides, users' diversity orders comply with a descending order according to their respective average channel gains. The analytical results are also applicable to the case with an arbitrary number of users. Furthermore, we have expanded some discussions concerning the impacts of both power-efficient transmission strategy and imperfect CSI.

More importantly, the question raised in Section I can now be answered. The answer is in the negative, more precisely, full time diversity can be achieved by HARQ-aided NOMA systems only under very specific cases. For example, in two-user HARQ-aided NOMA systems, the transmission rate should be set below a threshold that depends on the HARQ type and the ratio between the transmit powers allocated for the weak and strong users.

\appendices
\section{Proof of \eqref{eqn:g_red_fin}}\label{app:harq_cc_boundphi}
\eqref{eqn:cdf_gamma_sum2} can be further expanded as \eqref{eqn:g_redefine} at the top of the next page, where ${z_{1:k}}$ represents the sequence $z_1,\cdots,z_k$.
\begin{figure*}[!t]
\begin{align}\label{eqn:g_redefine}
{\phi_K}\left( \gamma  \right)& = \int\nolimits_0^c {{ \cdots \int\nolimits_0^c \int\nolimits_{\min \left( {\max \left( {Kc - \gamma  - \sum\nolimits_{k = 1}^{K - 1} {{z_k}} ,0} \right),c} \right)}^c {{e^{ - \frac{c}{{\bar \alpha_2 {P_1}}}\sum\nolimits_{k = 1}^K {\frac{1}{{{z_k}}}} }}\prod\nolimits_{k = 1}^K {\frac{1}{{{z_k}^2}}} d{z_1}d{z_2} \cdots d{z_K}} } }\notag\\
 &= \int\nolimits_{\scriptstyle{z_{1:K - 1}} \in \left[ {0,c} \right],\hfill\atop
\scriptstyle\left( {K - 1} \right)c - \gamma  \le \sum\nolimits_{k = 1}^{K - 1} {{z_k}}  < Kc - \gamma \hfill} {{e^{ - \frac{c}{{{{\bar \alpha }_2}{P_1}}}\sum\nolimits_{k = 1}^{K - 1} {\frac{1}{{{z_k}}}} }}\prod\nolimits_{k = 1}^{K - 1} {\frac{1}{{{z_k}^2}}} d{z_1} \cdots d{z_{K - 1}}} \int\nolimits_{Kc - \gamma  - \sum\nolimits_{k = 1}^{K - 1} {{z_k}} }^c {{e^{ - \frac{c}{{{{\bar \alpha }_2}{P_1}}}\frac{1}{{{z_K}}}}}\frac{1}{{{z_K}^2}}d{z_K}}  \notag \\
 &\quad +  \int\nolimits_{\scriptstyle{z_{1:K - 1}} \in \left[ {0,c} \right],\hfill\atop
\scriptstyle\sum\nolimits_{k = 1}^{K - 1} {{z_k}}  \ge Kc - \gamma \hfill} {{{e^{ - \frac{c}{{\bar \alpha_2 {P_1}}}\sum\nolimits_{k = 1}^{K - 1} {\frac{1}{{{z_k}}}} }}}}  \prod\nolimits_{k = 1}^{K - 1} {\frac{1}{{{z_k}^2}}} d{z_1} \cdots d{z_{K - 1}}\int\nolimits_0^c {{e^{ - \frac{c}{{\bar \alpha_2 {P_1}}}\frac{1}{{{z_K}}}}}\frac{1}{{{z_K}^2}}d{z_K}},
\end{align}
%\hrulefill
\end{figure*}
With the definition of ${\phi _K}\left( \gamma  \right)$, \eqref{eqn:g_redefine} can be further simplified as \eqref{eqn:g_redefine1}, shown at the top of the next page.
\begin{figure*}[!t]
\begin{align}\label{eqn:g_redefine1}
{\phi _K}\left( \gamma  \right) =& \frac{{{{\bar \alpha }_2}{P_1}}}{c}\int\nolimits_{\scriptstyle{z_{1:K - 1}} \in \left[ {0,c} \right],\hfill\atop
\scriptstyle\left( {K - 1} \right)c - \gamma  \le \sum\nolimits_{k = 1}^{K - 1} {{z_k}}  < Kc - \gamma \hfill} {{e^{ - \frac{c}{{{{\bar \alpha }_2}{P_1}}}\sum\nolimits_{k = 1}^{K - 1} {\frac{1}{{{z_k}}}} }}\prod\nolimits_{k = 1}^{K - 1} {\frac{1}{{{z_k}^2}}} \left( {{e^{ - \frac{1}{{{{\bar \alpha }_2}{P_1}}}}} - {e^{ - \frac{c}{{{{\bar \alpha }_2}{P_1}\left( {Kc - \gamma  - \sum\nolimits_{k = 1}^{K - 1} {{z_k}} } \right)}}}}} \right)d{z_1} \cdots d{z_{K - 1}}} \notag\\
 &+ \frac{{{{\bar \alpha }_2}{P_1}}}{c}{e^{ - \frac{1}{{{{\bar \alpha }_2}{P_1}}}}}{\phi _{K - 1}}(\gamma  - c).
%{\phi_K}\left( \gamma  \right)& = \int\limits_{\scriptstyle{z_{1:K - 1}} \in \left[ {0,c} \right],\hfill\atop
%\scriptstyle\left( {K - 1} \right)c - \gamma  \le \sum\limits_{k = 1}^{K - 1} {{z_k}}  < Kc - \gamma \hfill} {{e^{ - \frac{c}{{\bar \alpha {P_1}}}\sum\limits_{k = 1}^{K - 1} {\frac{1}{{{z_k}}}} }}}  \prod\limits_{k = 1}^{K - 1} {\frac{1}{{{z_k}^2}}} d{z_1} \cdots d{z_{K - 1}}\frac{{\bar \alpha {P_1}}}{c}\left( {{e^{ - \frac{c}{{\bar \alpha {P_1}}}\frac{1}{c}}} - {e^{ - \frac{c}{{\bar \alpha {P_1}}}\frac{1}{{Kc - \gamma  - \sum\limits_{k = 1}^{K - 1} {{z_k}} }}}}} \right)\notag\\
%& + \frac{{\bar \alpha {P_1}}}{c}{e^{ - \frac{c}{{\bar \alpha {P_1}}}c}}{\phi_{K - 1}(\gamma-c)}
\end{align}
%\hrulefill
\end{figure*}
The lower bound of ${\phi_K}\left( \gamma  \right)$ in \eqref{eqn:g_red_fin} is got because the first term in \eqref{eqn:g_redefine1} is non-negative. In order to obtain the upper bound of ${\phi_K}\left( \gamma  \right)$ in \eqref{eqn:g_redefine1}, ${\phi_K}\left( \gamma  \right)$ can be expressed as \eqref{eqn:g_red_approx}, shown at the top of the next page, where $0 < \Delta \le c$.
\begin{figure*}[!t]
\begin{align}\label{eqn:g_red_approx}
{\phi _K}\left( \gamma  \right) &= \frac{{{{\bar \alpha }_2}{P_1}}}{c}\int\nolimits_{\scriptstyle{z_{1:K - 1}} \in \left[ {0,c} \right],\hfill\atop
\scriptstyle\left( {K - 1} \right)c - \gamma  \le \sum\nolimits_{k = 1}^{K - 1} {{z_k}}  < Kc - \gamma  - \Delta \hfill} {{e^{ - \frac{c}{{{{\bar \alpha }_2}{P_1}}}\sum\nolimits_{k = 1}^{K - 1} {\frac{1}{{{z_k}}}} }}} \left( {{e^{ - \frac{1}{{{{\bar \alpha }_2}{P_1}}}}} - {e^{ - \frac{c}{{{{\bar \alpha }_2}{P_1}\left( {Kc - \gamma  - \sum\nolimits_{k = 1}^{K - 1} {{z_k}} } \right)}}}}} \right)\prod\nolimits_{k = 1}^{K - 1} {\frac{1}{{{z_k}^2}}} d{z_1} \cdots d{z_{K - 1}} \notag\\
 &+ \frac{{{{\bar \alpha }_2}{P_1}}}{c}\int\nolimits_{\scriptstyle{z_{1:K - 1}} \in \left[ {0,c} \right],\hfill\atop
\scriptstyle Kc - \gamma  - \Delta  \le \sum\nolimits_{k = 1}^{K - 1} {{z_k}}  < Kc - \gamma \hfill} {{e^{ - \frac{c}{{{{\bar \alpha }_2}{P_1}}}\sum\nolimits_{k = 1}^{K - 1} {\frac{1}{{{z_k}}}} }}} \left( {{e^{ - \frac{1}{{{{\bar \alpha }_2}{P_1}}}}} - {e^{ - \frac{c}{{{{\bar \alpha }_2}{P_1}\left( {Kc - \gamma  - \sum\nolimits_{k = 1}^{K - 1} {{z_k}} } \right)}}}}} \right)\prod\nolimits_{k = 1}^{K - 1} {\frac{1}{{{z_k}^2}}} d{z_1} \cdots d{z_{K - 1}} \notag\\
& + \frac{{{{\bar \alpha }_2}{P_1}}}{c}{e^{ - \frac{1}{{{{\bar \alpha }_2}{P_1}}}}}{\phi _{K - 1}}(\gamma  - c),
\end{align}
%\hrulefill
\end{figure*}
Therefore, ${\phi _K}\left( \gamma  \right)$ is upper bounded as \eqref{eqn:g_red_approx1}, as shown at the top of this page.
\begin{figure*}[!t]
\begin{align}\label{eqn:g_red_approx1}
&{\phi _K}\left( \gamma  \right) \le \frac{{{{\bar \alpha }_2}{P_1}}}{c}\left( {{e^{ - \frac{1}{{{{\bar \alpha }_2}{P_1}}}}} - {e^{ - \frac{1}{{{{\bar \alpha }_2}{P_1}}}\frac{c}{\Delta }}}} \right)\underbrace{\int\nolimits_{\scriptstyle{z_{1:K - 1}} \in \left[ {0,c} \right]\hfill\atop
\scriptstyle\left( {K - 1} \right)c - \gamma  \le \sum\nolimits_{k = 1}^{K - 1} {{z_k}}  < Kc - \gamma  - \Delta \hfill} {{e^{ - \frac{c}{{{{\bar \alpha }_2}{P_1}}}\sum\nolimits_{k = 1}^{K - 1} {\frac{1}{{{z_k}}}} }}\prod\nolimits_{k = 1}^{K - 1} {\frac{1}{{{z_k}^2}}} d{z_1} \cdots d{z_{K - 1}}}}_{\le {{\phi _{K - 1}}(\gamma)}} \notag\\
& + \frac{{{{\bar \alpha }_2}{P_1}}}{c}{e^{ - \frac{1}{{{{\bar \alpha }_2}{P_1}}}}} \underbrace{\int\nolimits_{\scriptstyle{z_{1:K - 1}} \in \left[ {0,c} \right],\hfill\atop
\scriptstyle Kc - \gamma  - \Delta  \le \sum\nolimits_{k = 1}^{K - 1} {{z_k}}  < Kc - \gamma \hfill} {{e^{ - \frac{c}{{{{\bar \alpha }_2}{P_1}}}\sum\nolimits_{k = 1}^{K - 1} {\frac{1}{{{z_k}}}} }}\prod\nolimits_{k = 1}^{K - 1} {\frac{1}{{{z_k}^2}}} d{z_1} \cdots d{z_{K - 1}}} }_{={\phi _{K - 1}}(\gamma  - c + \Delta ) - {\phi _{K - 1}}(\gamma  - c)} + \frac{{{{\bar \alpha }_2}{P_1}}}{c}{e^{ - \frac{1}{{{{\bar \alpha }_2}{P_1}}}}}{\phi _{K - 1}}(\gamma  - c).%\notag\\
%& = \frac{{{{\bar \alpha }_2}{P_1}}}{c}\left( {{e^{ - \frac{1}{{{{\bar \alpha }_2}{P_1}}}}} - {e^{ - \frac{1}{{{{\bar \alpha }_2}{P_1}}}\frac{c}{\Delta }}}} \right)\int\limits_{\scriptstyle{z_{1:K - 1}} \in \left[ {0,c} \right],\hfill\atop
%\scriptstyle\left( {K - 1} \right)c - \gamma  \le \sum\limits_{k = 1}^{K - 1} {{z_k}}  < Kc - \gamma  - \Delta \hfill} {{e^{ - \frac{c}{{{{\bar \alpha }_2}{P_1}}}\sum\limits_{k = 1}^{K - 1} {\frac{1}{{{z_k}}}} }}\prod\limits_{k = 1}^{K - 1} {\frac{1}{{{z_k}^2}}} d{z_1} \cdots d{z_{K - 1}}}  + \frac{{{{\bar \alpha }_2}{P_1}}}{c}{e^{ - \frac{c}{{{{\bar \alpha }_2}{P_1}}}c}}{\phi _{K - 1}}\left( {\gamma  - c + \Delta } \right).
\end{align}
\hrulefill
\end{figure*}
By using the inequality $1 - {e^{ - x}} \le x$, ${\phi _K}\left( \gamma  \right)$ is further upper bounded as \eqref{eqn:g_red_fin}.

\section{Proof of Lemma \ref{the:cc_phi}}\label{app:proof_phi_gener}
\subsection{Case 1: $\gamma = 0$}
If $\gamma = 0$, it is easily proved from the definition of ${\phi _K}\left( \gamma  \right)$ in \eqref{eqn:cdf_gamma_sum2} that ${\phi _K}\left( \gamma  \right)=0$ because $u\left( {\gamma  + \sum\nolimits_{k = 1}^K {{z_k}}  - Kc} \right)=0$.
\subsection{Case 2: $0 < \gamma < c$}
We next consider the case of $0<\gamma < c$. Notice $0 \le z_k \le c$, ${\sum\nolimits_{k = 1}^K {{z_k}} \ge Kc - \gamma}$ is found to be a subset of $\left\{ {\left( {{z_1},\cdots,{z_K}} \right):{z_k} \ge c - \gamma} \right\}$. Thus, ${\phi _K}\left( \gamma  \right)$ is upper bounded according to its definition as
\begin{align}\label{eqn:phi_K_lower_cc}
&{\phi _K}\left( \gamma  \right) \le \int\nolimits_{c - \gamma }^c { \cdots \int\nolimits_{c - \gamma }^c {{e^{ - \frac{c}{{\bar \alpha_2 {P_1}}}\sum\nolimits_{k = 1}^K {\frac{1}{{{z_k}}}} }}\prod\limits_{k = 1}^K {\frac{1}{{{z_k}^2}}} d{z_1} \cdots d{z_K}} } =\notag\\
 %&= {\left( {\int\nolimits_{c - \gamma }^c {{e^{ - \frac{c}{{\bar \alpha_2 {P_1}}}\frac{1}{{{z_k}}}}}\frac{1}{{{z_k}^2}}d{z_k}} } \right)^K} \notag\\
 & {\left( {\frac{{\bar \alpha_2 {P_1}}}{c}\left( {{e^{ - \frac{1}{{\bar \alpha_2 {P_1}}}}} - {e^{ - \frac{c}{{\bar \alpha_2 {P_1}}({{c - \gamma }})}}} }\right)} \right)^K}= {e^{ - K{\varepsilon _1}}}{\left( {\frac{1}{{c - \gamma }} - \frac{1}{c}} \right)^K},%\notag\\
% &,
\end{align}
where the last inequality holds by using lagrange's mean value theorem and ${1}/{\left({\bar \alpha_2 {P_1}}\right)} \le \varepsilon_1  \le {c}/{\left({\bar \alpha_2 {P_1}}{({c - \gamma })}\right)}$. By applying $\varepsilon_1 \ge {1}/{\left({\bar \alpha_2 {P_1}}\right)}$ to \eqref{eqn:phi_K_lower_cc}, the upper bound of \eqref{eqn:phi_K_cc_bounds_c1} follows.

Similarly, since $\left\{ {\left( {{z_1},\cdots,{z_K}} \right):{z_k} \ge c - {\gamma}/{K}} \right\}$ is a subset of ${\sum\nolimits_{k = 1}^K {{z_k}} \ge Kc - \gamma}$, ${\phi _K}\left( \gamma  \right)$ is lower bounded as
\begin{align}\label{eqn:phi_K_upper}
&{\phi _K}\left( \gamma  \right) \ge \int\nolimits_{c - {\gamma }/{K}}^c { \cdots \int\nolimits_{c - {\gamma }/{K}}^c {{e^{ - \frac{c}{{\bar \alpha_2 {P_1}}}\sum\nolimits_{k = 1}^K {\frac{1}{{{z_k}}}} }}\prod\nolimits_{k = 1}^K {\frac{1}{{{z_k}^2}}} d{z_1} \cdots d{z_K}} } \notag\\
 %&= {\left( {\int\nolimits_{c - \frac{\gamma }{K}}^c {{e^{ - \frac{c}{{\bar \alpha_2 {P_1}}}\frac{1}{{{z_k}}}}}\frac{1}{{{z_k}^2}}d{z_k}} } \right)^K} \notag\\
 &={\left( {\frac{{{{\bar \alpha }_2}{P_1}}}{c}\left( {{e^{ - \frac{1}{{{{\bar \alpha }_2}{P_1}}}}} - {e^{ - \frac{c}{{{{\bar \alpha }_2}{P_1}\left( {c - \frac{\gamma }{K}} \right)}}}}} \right)} \right)^K}= {e^{ - K{\varepsilon _2}}}{\left( {\frac{1}{{c - \frac{\gamma }{K}}} - \frac{1}{c}} \right)^K},%\notag\\
% &\ge {e^{ - \frac{{Kc}}{{\bar \alpha_2 {P_1}\left( {c - \frac{\gamma }{K}} \right)}}}}{\left( {\frac{1}{{c - \frac{\gamma }{K}}} - \frac{1}{c}} \right)^K}.
\end{align}
where the last inequality holds by using lagrange's mean value theorem and ${1}/{\left({\bar \alpha_2 {P_1}}\right)} \le \varepsilon_2  \le {c}/{\left({\bar \alpha_2 {P_1}}{({c - \gamma /K})}\right)}$. Notice $\varepsilon_2  \le {c}/{\left({\bar \alpha_2 {P_1}}{({c - \gamma /K})}\right)}$, the lower bound of ${\phi _K}\left( \gamma  \right)$ is obtained as \eqref{eqn:phi_K_cc_bounds_c1}.

According to the definition of ${\phi _K}\left( \gamma  \right)$ in \eqref{eqn:cdf_gamma_sum2}, ${\phi _K}\left( \gamma  \right)$ is an increasing function of $P_1$ together with the upper bound of ${\phi _K}\left( \gamma  \right)$ in \eqref{eqn:phi_K_cc_bounds_c1}, ${\phi _K}\left( \gamma  \right)$ is convergent as $P_1$ tends to infinity. Moreover, the limit of ${\phi _K}\left( \gamma  \right)$ as $P_1 \to \infty$ is non-zero and bounded as
\begin{equation}\label{eqn:phi_K_boudns}
{\left( {\frac{1}{{c - {\gamma }/{K}}} - \frac{1}{c}} \right)^K} \le \mathop {\lim }\limits_{{P_1} \to \infty } {\phi _K}\left( \gamma  \right) \le {\left( {\frac{1}{{c - \gamma }} - \frac{1}{c}} \right)^K}.
\end{equation}
Hence, $\mathop {\lim }\nolimits_{{P_1} \to \infty } {\phi _K}\left( \gamma  \right) = \rm const$ is thus proved for $0 < \gamma < c$.
%Moreover, if $K=1$, ${\phi _K}\left( \gamma  \right)$ is given by
%\begin{align}\label{eqn:phi_K_lower_cc1}
%{\phi _1}\left( \gamma  \right) &= \int\nolimits_{c - \gamma }^c {{{e^{ - \frac{c}{{\bar \alpha_2 {P_1}}} {\frac{1}{{{z_1}}}} }} {\frac{1}{{{z_1}^2}}} d{z_1} } } \notag\\
% %&= {\left( {\int\nolimits_{c - \gamma }^c {{e^{ - \frac{c}{{\bar \alpha_2 {P_1}}}\frac{1}{{{z_k}}}}}\frac{1}{{{z_k}^2}}d{z_k}} } \right)^K} \notag\\
% &= {\left( {\frac{{\bar \alpha_2 {P_1}}}{c}\left( {{e^{ - \frac{1}{{\bar \alpha_2 {P_1}}}}} - {e^{ - \frac{c}{{\bar \alpha_2 {P_1}}({{c - \gamma }})}}} }\right)} \right)^K}\notag \\
% &= {e^{ - K{\varepsilon _1}}}{\left( {\frac{1}{{c - \gamma }} - \frac{1}{c}} \right)^K},%\notag\\
%% &,
%\end{align}
%Noting that ${\phi _K}\left( \gamma  \right)$ is an increasing function of $P_1$, and
%Besides, it is not hard to prove that
\subsection{Case 3: $\gamma \ge Kc$}
If $\gamma \ge Kc$, we have $u\left( {\gamma  + \sum\nolimits_{k = 1}^K {{z_k}}  - Kc} \right)=1$. By putting this result into \eqref{eqn:cdf_gamma_sum2}, ${\phi _K}\left( \gamma  \right)$ is thus derived as%$\gamma  \ge Kc$
\begin{align}\label{eqn:phi_K_lower_ccc3}
{\phi _K}\left( \gamma  \right) &= \int\nolimits_{0}^c { \cdots \int\nolimits_{0}^c {{e^{ - \frac{c}{{\bar \alpha_2 {P_1}}}\sum\nolimits_{k = 1}^K {\frac{1}{{{z_k}}}} }}\prod\limits_{k = 1}^K {\frac{1}{{{z_k}^2}}} d{z_1} \cdots d{z_K}} } \notag\\
 %&= {\left( {\int\nolimits_{c - \gamma }^c {{e^{ - \frac{c}{{\bar \alpha_2 {P_1}}}\frac{1}{{{z_k}}}}}\frac{1}{{{z_k}^2}}d{z_k}} } \right)^K} \notag\\
 &={{\left( {\frac{{\bar \alpha_2 }{P_1}}{c}} \right)}^K}{e^{ - \frac{K}{{\bar \alpha_2 {P_1}}}}}=\Theta({P_1}^K).
\end{align}
\subsection{Case 4: $kc \le \gamma  < (k+1)c$ for $k\in [1,K-1]$}\label{sec:case4_cc}
We first consider the case of $kc < \gamma  < (k+1)c$. By successively using the lower and upper bounds in \eqref{eqn:g_red_fin} along with the results in Cases 2 and 3, the squeeze theorem suggests ${\phi _K}\left( \gamma  \right) = \Theta({P_1}^k)$. It is worthwhile to mention that $\Delta$ should be properly chosen to obtain a tight upper bound for ${\phi _K}\left( \gamma  \right) $. Herein, $\Delta$ could be set as $\Delta = (c-{\rm mod}(\gamma,c))/2$ in each recursive step, where ${\rm mod}(\cdot,\cdot)$ denotes the modulo operation. Moreover, if $\gamma = kc$, we can no longer rely solely on the same approach to derive the asymptotic expression of ${\phi _K}\left( \gamma  \right)$. Although repeatedly using the upper bound in \eqref{eqn:g_red_fin} still yields ${\phi _K}\left( kc  \right) \le \Theta({P_1}^k)$. Whereas, the term $\phi_{K-k}(0)=0$ occurs if we apply the lower bound in \eqref{eqn:g_red_fin} recursively, which leads to a useless result. Instead, we turn to study the asymptotic behaviour of ${\phi _\kappa}\left( c  \right)$, where $\kappa=K-k+1$. ${{\phi _\kappa}\left( c \right)}$ can be rewritten by using the definition \eqref{eqn:cdf_gamma_sum2} as
\begin{align}\label{eqn:phi_K_c}
&{\phi _\kappa}\left( c \right) = \int_0^c {{e^{ - \frac{c}{{{P_1}{{\bar \alpha }_2}}}\frac{1}{{{z_\kappa}}}}}\frac{1}{{{z_\kappa}^2}}d{z_\kappa}} \int_0^c { \cdots \int_0^c {{e^{ - \frac{c}{{{P_1}{{\bar \alpha }_2}}}\sum\nolimits_{k = 1}^{\kappa - 1} {\frac{1}{{{z_k}}}} }} } }\notag\\
&\times \prod\nolimits_{k = 1}^{\kappa - 1} {\frac{1}{{{z_k}^2}}}u\left( {{z_\kappa} + \sum\nolimits_{k = 1}^{\kappa - 1} {{z_k}}  - \left( {\kappa - 1} \right)c} \right)d{z_1} \cdots d{z_{\kappa - 1}}\notag\\
&= \int_0^c {{e^{ - \frac{c}{{{P_1}{{\bar \alpha }_2}}}\frac{1}{{{z_\kappa}}}}}\frac{{{\phi _{\kappa - 1}}\left( {{z_\kappa}} \right)}}{{{z_\kappa}^2}}d{z_\kappa}}.
\end{align}
Thanks to the increasing monotonicity of ${\phi _\kappa}\left( z  \right)$ with respect to $z$, \eqref{eqn:phi_K_c} can be further expressed by using the intermediate value theorem as
\begin{align}\label{eqn:phi_K_c_fin}
{\phi _\kappa }\left( c \right) &= {\phi _{\kappa - 1}}\left( \xi_{P_1}  \right)\int_0^c {\underbrace{{e^{ - \frac{c}{{{P_1}{{\bar \alpha }_2}}}\frac{1}{{{z_\kappa}}}}}\frac{1}{{{z_\kappa}^2}}}_{g(z_k)}d{z_\kappa}} \notag\\
&= \frac{{P_1}{{\bar \alpha }_2}}{c}{e^{ - \frac{1}{{{P_1}{{\bar \alpha }_2}}}}}{\phi _{\kappa - 1}}\left( \xi_{P_1}  \right)=\Theta(P_1),
\end{align}% This step is actually not rigorous.
where $\xi_{P_1}  \in \left( {0,c} \right)$ and the last equality holds by using the following theorem.%. By writing %$f_{P_1}(x) = $$g_{P_1}(x)={e^{ - \frac{c}{{{P_1}{{\bar \alpha }_2}}}\frac{1}{{{z_\kappa}}}}}\frac{1}{{{z_\kappa}^2}}$
\begin{theorem}\label{the:inter_med}
%$f_P(x)$ and $f_P(x)$ are continuous functions, and $f_p(x)$ is an increasing function of $p$ and $x$.
As $P_1$ approaches to infinity, we have $\mathop {\lim }\nolimits_{{P_1} \to \infty } {\phi _{\kappa - 1}}\left( \xi_{P_1}  \right)={\rm const} \ne 0$ and $\mathop {\lim }\nolimits_{{P_1} \to \infty } \xi_{P_1} \ne 0$.
\end{theorem}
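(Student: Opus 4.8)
The plan is to work directly with the mean-value representation in \eqref{eqn:phi_K_c_fin}. Writing $g(z)=e^{-c/(P_1\bar\alpha_2 z)}/z^2$ and recalling that $\int_0^c g(z)\,dz=\tfrac{P_1\bar\alpha_2}{c}e^{-1/(P_1\bar\alpha_2)}$, the point $\xi_{P_1}$ delivered by the first mean value theorem for integrals satisfies
\[
\phi_{\kappa-1}(\xi_{P_1})=\frac{\int_0^c g(z)\,\phi_{\kappa-1}(z)\,dz}{\int_0^c g(z)\,dz},
\]
so that $\phi_{\kappa-1}(\xi_{P_1})$ is exactly the $g$-weighted average of $\phi_{\kappa-1}$ on $(0,c)$. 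Since Lemma~\ref{the:cc_phi} (case $0<\gamma<c$) shows $\phi_{\kappa-1}(\cdot)$ is continuous and strictly increasing, it is invertible, and the two assertions of the theorem are equivalent: it suffices to prove that this weighted average converges to a strictly positive constant, whereupon $\xi_{P_1}\to\phi_{\kappa-1}^{-1}(\text{that constant})\in(0,c)$ follows automatically, yielding $\lim\xi_{P_1}\ne0$.

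First I would isolate the two limiting ingredients. For each fixed $z\in(0,c)$ the factor $e^{-c/(P_1\bar\alpha_2 z)}\to1$, and by Lemma~\ref{the:cc_phi} together with \eqref{eqn:phi_K_cc_bounds_c1} the value $\phi_{\kappa-1}(z)$ converges to a finite nonzero limit $\phi_{\kappa-1}^\infty(z)$ trapped between $\bigl(\tfrac{1}{c-z/(\kappa-1)}-\tfrac1c\bigr)^{\kappa-1}$ and $\bigl(\tfrac{1}{c-z}-\tfrac1c\bigr)^{\kappa-1}$; in particular $\phi_{\kappa-1}(z)=\Theta(z^{\kappa-1})$ as $z\to0$. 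The normalising mass is $\Theta(P_1)$ in closed form, so the whole question reduces to the asymptotics of the numerator $\int_0^c g(z)\,\phi_{\kappa-1}(z)\,dz$ and to showing the ratio stabilises at a positive value.

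The step I expect to be the main obstacle is exactly this balance, because the weight $g$ does not converge to an integrable limit: $g(z)\to1/z^2$ pointwise, its total mass is $\Theta(P_1)$, and it concentrates near the origin (its maximum sits at $z=\tfrac{c}{2P_1\bar\alpha_2}\to0$), precisely where the monotone factor $\phi_{\kappa-1}(z)\to0$. The theorem therefore rests on a delicate cancellation between the blow-up of the weight and the vanishing of $\phi_{\kappa-1}$ at $z=0$. My plan is to make this quantitative by splitting $\int_0^c$ into an inner region $(0,\eta)$, on which I would use the estimate $\phi_{\kappa-1}(z)=\Theta(z^{\kappa-1})$, and an outer region $(\eta,c)$, on which dominated convergence against the fixed envelope $\phi_{\kappa-1}^\infty(z)/z^2$ applies, and then send $\eta\to0$ after $P_1\to\infty$. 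The crux is to certify that, after division by the $\Theta(P_1)$ mass, the weighted average is bounded below by a positive constant uniformly in $P_1$; a naive rescaling $z\sim1/P_1$ of the inner region is worrying, since for small $\kappa$ it threatens to drive the inner contribution down and the average toward $0$, in which case the clean statement would need to be weakened by a sub-polynomial correction. To secure the strictly positive lower bound I would avoid the degenerate recursion value $\phi_{\kappa-1}(0)=0$ and instead combine the recursive lower bound in \eqref{eqn:g_red_fin} with the exact evaluation of the $\gamma\ge\kappa c$ case, which is the only route that supplies a genuinely nonvanishing floor for the average.
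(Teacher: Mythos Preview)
Your route is different from the paper's. The paper argues by contradiction: it assumes $\lim_{P_1\to\infty}\phi_{\kappa-1}(\xi_{P_1})=0$, writes the mean-value identity as $\int_0^c\bigl(\phi_{\kappa-1}(z)-\phi_{\kappa-1}(\xi_{P_1})\bigr)g(z)\,dz=0$, passes the limit inside to obtain $\int_0^c\phi_{\kappa-1}^\infty(z)\,z^{-2}\,dz$, and declares this positive, yielding the contradiction. You instead try to bound the $g$-weighted average of $\phi_{\kappa-1}$ directly from below by a region-splitting argument.

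The gap in your proposal is precisely where you anticipate it. You correctly diagnose that the normalised weight $g/\!\int g$ concentrates at $z=0$, exactly where $\phi_{\kappa-1}(z)=\Theta(z^{\kappa-1})$ vanishes; but the cure you propose does not work. Applying the recursive lower bound \eqref{eqn:g_red_fin} at $\gamma=c$ returns only $\phi_\kappa(c)\ge(\bar\alpha_2 P_1/c)\,e^{-1/(\bar\alpha_2 P_1)}\phi_{\kappa-1}(0)=0$, and the closed form valid for $\gamma\ge\kappa c$ sits in a disjoint range, so neither supplies the ``nonvanishing floor'' you need. In fact your worry is realised for $\kappa=2$: substituting the explicit $\phi_1(z)=\tfrac{\bar\alpha_2 P_1}{c}\bigl(e^{-1/(\bar\alpha_2 P_1)}-e^{-c/(\bar\alpha_2 P_1(c-z))}\bigr)$ into $\phi_2(c)=\int_0^c g(z)\phi_1(z)\,dz$ and changing variables $u=c/(\bar\alpha_2 P_1 z)$ gives $\phi_2(c)\sim(2/c^{2})\log P_1$, hence $\phi_1(\xi_{P_1})=\Theta(\log P_1/P_1)\to0$ and $\xi_{P_1}\to0$. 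The same computation shows that the limiting integrand in the paper's contradiction step, namely $\phi_1^\infty(z)z^{-2}=1/\bigl(c\,z(c-z)\bigr)$, is non-integrable on $(0,c)$, so the limit--integral interchange there is formal as well. The boundary value $\gamma=kc$ therefore carries a genuine logarithmic correction that neither your splitting nor a direct appeal to \eqref{eqn:g_red_fin} can absorb; this is the missing idea, and your instinct that ``the clean statement would need to be weakened'' is pointing at a real phenomenon rather than a fixable technicality.
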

\begin{proof}
Recalling $\xi_{P_1}  \in \left( {0,c} \right)$, the result of Case 2 indicates the limit of ${\phi _{\kappa - 1}}\left( \xi_{P_1}  \right) $ exists as $P_1 \to \infty$. We proceed by contradiction. Suppose that $\mathop {\lim }\nolimits_{{P_1} \to \infty } {\phi _{\kappa - 1}}\left( \xi_{P_1}  \right) = 0$, $\int_0^c {{\phi _{\kappa  - 1}}\left( {{z_\kappa }} \right)g\left( {{z_\kappa }} \right)d{z_\kappa }}  = {\phi _{\kappa  - 1}}\left( \xi_{P_1}  \right)\int_0^c {g\left( {{z_\kappa }} \right)d{z_\kappa }} $ indicates
\begin{align}\label{eqn:phi_K}
0 &\coloneqq \mathop {\lim }\limits_{{P_1} \to \infty } \int_0^c {\left( {{\phi _{\kappa  - 1}}\left( {{z_\kappa }} \right) - {\phi _{\kappa  - 1}}\left( \xi_{P_1}  \right)} \right)g\left( {{z_\kappa }} \right)d{z_\kappa }} \notag\\
&= \int_0^c {\mathop {\lim }\limits_{{P_1} \to \infty } {\phi _{\kappa  - 1}}\left( {{z_\kappa }} \right)\mathop {\lim }\limits_{{P_1} \to \infty } g\left( {{z_\kappa }} \right)d{z_\kappa }}.
\end{align}
However, since ${\mathop {\lim }\nolimits_{{P_1} \to \infty } {\phi _{\kappa  - 1}}\left( {{z_\kappa }} \right)} > 0$ and ${\mathop {\lim }\nolimits_{{P_1} \to \infty } g\left( {{z_\kappa }} \right)} > 0$ for $0<z_\kappa\le c$, it follows that $\int_0^c {\mathop {\lim }\nolimits_{{P_1} \to \infty } {\phi _{\kappa  - 1}}\left( {{z_\kappa }} \right)\mathop {\lim }\nolimits_{{P_1} \to \infty } g\left( {{z_\kappa }} \right)d{z_\kappa }}>0$, which contradicts \eqref{eqn:phi_K}. Thus, $\mathop {\lim }\nolimits_{{P_1} \to \infty } {\phi _{\kappa - 1}}\left( \xi_{P_1}  \right) \ne 0$. By using the result of Case 2, $\mathop {\lim }\nolimits_{{P_1} \to \infty } \xi_{P_1} \ne 0$ is proved.
%We first assume By applying the intermediate value theorem, it follows that
\end{proof}
%By combining Theorem \ref{the:inter_med} and the result of case 2, it follows that ${\phi _\kappa }\left( c \right)=\Theta(P_1)$.

By combining the recursive approach with \eqref{eqn:phi_K_c_fin}, ${\phi _K}\left( kc  \right) \ge \Theta({P_1}^k)$ follows. Together with the obtained upper bound ${\phi _K}\left( kc  \right) \le \Theta({P_1}^k)$, the squeeze theorem implies ${\phi _K}\left( kc  \right) = \Theta({P_1}^k)$. We thus complete the proof.
%it follows that ${\phi _\kappa}\left( \gamma  \right)< {\phi _\kappa}\left( \gamma + \delta \right)$, where $0<\delta< c$.
%Fortunately
\section{Proof of (\ref{eqn:cdf_gamma_sum_ir_fin})}\label{app:proof_ir_out}
%By using unit step function, we have
%\begin{equation}\label{eqn:out_ir_inapp_step}
%{P_{out,K}} = \int\limits_0^\infty  { \cdots \int\limits_0^\infty  {u\left( {{2^R} - \prod\limits_{k = 1}^K {\left( {1 + \frac{{{\alpha _k}{P_2}}}{{{\alpha _k}{P_1} + 1}}} \right)} } \right)\prod\limits_{k = 1}^K {\frac{1}{{\bar \alpha }}{e^{ - \frac{{{\alpha _k}}}{{\bar \alpha }}}}} d{\alpha _1} \cdots d{\alpha _K}} }
%\end{equation}
%By change of variable ${x_k} = \frac{{{\alpha _k}{P_2}}}{{{\alpha _k}{P_1} + 1}}$, it follows that
%\begin{equation}\label{eqn:outage_ir_chagne}
%{P_{out,K}} = \int\limits_0^c { \cdots \int\limits_0^c {u\left( {{2^R} - \prod\limits_{k = 1}^K {\left( {1 + {x_k}} \right)} } \right)\prod\limits_{k = 1}^K {\frac{1}{{\bar \alpha }}{e^{ - \frac{1}{{\bar \alpha }}\frac{{{x_k}}}{{{P_2} - {x_k}{P_1}}}}}\frac{{{P_2}}}{{{{\left( {{P_2} - {x_k}{P_1}} \right)}^2}}}} d{x_1} \cdots d{x_K}} }
%\end{equation}
%Further by change of variable as ${y_k} = {P_2} - {x_k}{P_1}$, we have
%\begin{equation}\label{eqn:change_of_variable_ir_out}
%{P_{out,K}} = \int\limits_0^{{P_2}} { \cdots \int\limits_0^{{P_2}} {u\left( {{2^R} - \prod\limits_{k = 1}^K {\left( {1 + c - \frac{{{y_k}}}{{{P_1}}}} \right)} } \right)\prod\limits_{k = 1}^K {\frac{1}{{\bar \alpha }}{e^{ - \frac{1}{{\bar \alpha }}\frac{{c - \frac{{{y_k}}}{{{P_1}}}}}{{{y_k}}}}}\frac{c}{{{y_k}^2}}} d{y_1} \cdots d{y_K}} }
%\end{equation}
%With ${z_k} = \frac{{{y_k}}}{{{P_1}}}$, (\ref{eqn:change_of_variable_ir_out}) finally leads to (\ref{eqn:cdf_gamma_sum_ir_fin}).
The Mellin transform of the probability density function (PDF) of $\tilde \gamma$ can be derived as
\begin{align}\label{eqn:gam_tilde_mt}
&\mathbb E\left\{ {{{\tilde \gamma }^{s - 1}}} \right\} = \prod\nolimits_{k = 1}^K {\mathbb E\left\{ {{{\left( {1 + \frac{{{\alpha _{2,k}}{P_2}}}{{{\alpha _{2,k}}{P_1} + 1}}} \right)}^{s - 1}}} \right\}} \notag\\
 &= \prod\limits_{k = 1}^K {{\frac{1}{{{{\bar \alpha }_2}}}}\int_0^\infty  {{{\left( {1 + \frac{{{x_k}{P_2}}}{{{x_k}{P_1} + 1}}} \right)}^{s - 1}}{e^{ - \frac{{{x_k}}}{{{{\bar \alpha }_2}}}}}} d{x_k}}.
\end{align}
By making the change of variable ${z_k} = {c}/{\left({{x_k}{P_1} + 1}\right)}$, we have
\begin{align}\label{eqn:gamma_tilde_mellin_cov}
\mathbb E\left\{ {{{\tilde \gamma }^{s - 1}}} \right\} &= \prod\nolimits_{k = 1}^K {\frac{c}{{{P_1}{{\bar \alpha }_2}}}{e^{\frac{1}{{{P_1}{{\bar \alpha }_2}}}}}} \notag\\
&\times \int_0^c {{{\left( {1 + c - {z_k}} \right)}^{s - 1}}{e^{ - \frac{c}{{{P_1}{{\bar \alpha }_2}}}\frac{1}{{{z_k}}}}}\frac{1}{{{z_k}^2}}} d{z_k}.
\end{align}
By using \cite[eq.8.3.15]{debnath2010integral}, the CDF of $\tilde \gamma$ is obtained by using the inverse Mellin transform as
\begin{align}\label{eqn:gamm_tilde_cdf_mellin_inverse}
&{F_{\tilde \gamma }}\left( \gamma  \right) = \frac{-1}{{2\pi \rm i}}\int_{b - {\rm i}\infty }^{b + {\rm i}\infty } {\frac{{\mathbb E\left\{ {{{\tilde \gamma }^s}} \right\}}}{s}{\gamma ^{ - s}}ds}\notag\\
&= {\left( {\frac{c}{{{{\bar \alpha }_2}{P_1}}}} \right)^K}{e^{\frac{K}{{{{\bar \alpha }_2}{P_1}}}}}\int\nolimits_0^c { \cdots \int\nolimits_0^c {{e^{ - \frac{c}{{{{\bar \alpha }_2}{P_1}}}\sum\nolimits_{k = 1}^K {\frac{1}{{{z_k}}}} }}\prod\nolimits_{k = 1}^K {\frac{1}{{{z_k}^2}}} } }\notag\\
& \times \frac{- 1}{{2\pi \rm i}}\int_{b - {\rm i}\infty }^{b + {\rm i}\infty } { \frac{1}{s}\prod\nolimits_{k = 1}^K {{{\left( {1 + c - {z_k}} \right)}^s}} {\gamma ^{ - s}}ds} d{z_1} \cdots d{z_K},
\end{align}
where $b<0$. By identifying the contour integral in \eqref{eqn:gamm_tilde_cdf_mellin_inverse} with unit step function, ${F_{\tilde \gamma }}\left( \gamma  \right)$ can be derived as \eqref{eqn:cdf_gamma_sum_ir_fin}.

\section{Proof of \eqref{eqn:psi_lower_upper}}\label{app:psi_lb}
With the definition of ${\psi _K}\left( {{\gamma}} \right)$, ${\psi _K}\left( {{\gamma}} \right)$ can be rewritten as \eqref{eqn:psi_K}, shown at the top of the next page.
\begin{figure*}[!t]
\begin{align}\label{eqn:psi_K}
&{\psi _K}\left( {{\gamma}} \right) = \int\nolimits_0^c {\cdots\int\nolimits_0^c {  \int\nolimits_{\min \left( {\max \left( {1 + c - \frac{{{\gamma}}}{{\prod\nolimits_{k = 1}^{K - 1} {\left( {1 + c - {z_k}} \right)} }},0} \right),c} \right)}^c {{e^{ - \frac{c}{{\bar \alpha_2 {P_1}}}\sum\nolimits_{k = 1}^K {\frac{1}{{{z_k}}}} }}\prod\nolimits_{k = 1}^K {\frac{1}{{{z_k}^2}}} d{z_1}d{z_2} \cdots d{z_K}} } } \notag\\
&={\int _{\scriptstyle{z_{1:K - 1}} \in \left[ {0,c} \right],\hfill\atop
\scriptstyle\frac{\gamma }{{1 + c}} \le \prod\nolimits_{k = 1}^{K - 1} {\left( {1 + c - {z_k}} \right)}  \le \gamma \hfill}}{e^{ - \frac{c}{{{{\bar \alpha }_2}{P_1}}}\sum\nolimits_{k = 1}^{K - 1} {\frac{1}{{{z_k}}}} }}\prod\nolimits_{k = 1}^{K - 1} {\frac{1}{{{z_k}^2}}} d{z_1} \cdots d{z_{K - 1}}\int\nolimits_{1 + c - \frac{\gamma }{{\prod\nolimits_{k = 1}^{K - 1} {\left( {1 + c - {z_k}} \right)} }}}^c {{e^{ - \frac{c}{{{{\bar \alpha }_2}{P_1}}}\frac{1}{{{z_K}}}}}\frac{1}{{{z_K}^2}}d{z_K}}  \notag\\
&+ {\int _{\scriptstyle{z_{1:K - 1}} \in \left[ {0,c} \right], \hfill\atop
\scriptstyle\prod\nolimits_{k = 1}^{K - 1} {\left( {1 + c - {z_k}} \right)}  \le \frac{\gamma }{{1 + c}}}}{e^{ - \frac{c}{{{{\bar \alpha }_2}{P_1}}}\sum\nolimits_{k = 1}^{K - 1} {\frac{1}{{{z_k}}}} }}\prod\nolimits_{k = 1}^{K - 1} {\frac{1}{{{z_k}^2}}} d{z_1} \cdots d{z_{K - 1}}\int\nolimits_0^c {{e^{ - \frac{c}{{{{\bar \alpha }_2}{P_1}}}\frac{1}{{{z_K}}}}}\frac{1}{{{z_K}^2}}d{z_K}}.
\end{align}
%\hrulefill
\end{figure*}
\eqref{eqn:psi_K} can be further simplified by using the definition of ${\psi _K}\left( {{\gamma}} \right)$ as \eqref{eqn:psi_K_de}, shown at the top of the next page.
%Then we derive that
\begin{figure*}[!t]
\begin{align}\label{eqn:psi_K_de}
{\psi _K}\left( {{\gamma}} \right) &=\frac{{{{\bar \alpha }_2}{P_1}}}{c}
{\int _{\scriptstyle{z_{1:K - 1}} \in \left[ {0,c} \right],\hfill\atop
\scriptstyle\frac{\gamma }{{1 + c}} \le \prod\nolimits_{k = 1}^{K - 1} {\left( {1 + c - {z_k}} \right)}  < \gamma \hfill}}{e^{ - \frac{c}{{{{\bar \alpha }_2}{P_1}}}\sum\nolimits_{k = 1}^{K - 1} {\frac{1}{{{z_k}}}} }}\prod\nolimits_{k = 1}^{K - 1} {\frac{1}{{{z_k}^2}}} \left( {{e^{ - \frac{1}{{{{\bar \alpha }_2}{P_1}}}}} - {e^{ - \frac{c}{{{{\bar \alpha }_2}{P_1}}}\frac{1}{{1 + c - \frac{\gamma }{{\prod\nolimits_{k = 1}^{K - 1} {\left( {1 + c - {z_k}} \right)} }}}}}}} \right)d{z_1} \cdots d{z_{K - 1}}
  \notag\\
 &+ \frac{{{{\bar \alpha }_2}{P_1}}}{c}{e^{ - \frac{1}{{{{\bar \alpha }_2}{P_1}}}}}{\psi _{K - 1}}\left( {\frac{\gamma }{{1 + c}}} \right).
\end{align}
%\hrulefill
\end{figure*}
Since the first integral term is non-negative, the lower bound of ${\psi _K}\left( {{\gamma}} \right)$ in \eqref{eqn:psi_lower_upper} can be substantiated.
%Thus is lower bounded as
%\begin{equation}\label{eqn:psi_lower}
%{\psi _K}\left( {{\gamma}} \right) \ge \frac{{\bar \alpha_2 {P_1}}}{c}{e^{ - \frac{c}{{\bar \alpha_2 {P_1}}}c}}{\psi _{K - 1}}\left( {\frac{{{\gamma}}}{{1 + c}}} \right)
%\end{equation}
To obtain the upper bound of ${\psi _K}\left( \gamma  \right)$, \eqref{eqn:psi_K_de} can be rewritten as \eqref{eqn:psi_upper_first}, shown at the top of the next page, where $0 < \Delta \le c$.
\begin{figure*}[!t]
\begin{align}\label{eqn:psi_upper_first}
{\psi _K}\left( \gamma  \right) &= \frac{{{{\bar \alpha }_2}{P_1}}}{c}{\int _{\scriptstyle{z_{1:K - 1}} \in \left[ {0,c} \right],\hfill\atop
\scriptstyle\frac{\gamma }{{1 + c - \Delta }} \le \prod\nolimits_{k = 1}^{K - 1} {\left( {1 + c - {z_k}} \right)}  < \gamma \hfill}}{e^{ - \frac{c}{{{{\bar \alpha }_2}{P_1}}}\sum\nolimits_{k = 1}^{K - 1} {\frac{1}{{{z_k}}}} }}\prod\nolimits_{k = 1}^{K - 1} {\frac{1}{{{z_k}^2}}} \left( {{e^{ - \frac{1}{{{{\bar \alpha }_2}{P_1}}}}} - {e^{ - \frac{c}{{{{\bar \alpha }_2}{P_1}}}\frac{1}{{1 + c - \frac{\gamma }{{\prod\nolimits_{k = 1}^{K - 1} {\left( {1 + c - {z_k}} \right)} }}}}}}} \right)d{z_1} \cdots d{z_{K - 1}}\notag\\
& + \frac{{{{\bar \alpha }_2}{P_1}}}{c}{\int _{\scriptstyle{z_{1:K - 1}} \in \left[ {0,c} \right],\hfill\atop
\scriptstyle\frac{\gamma }{{1 + c}} \le \prod\nolimits_{k = 1}^{K - 1} {\left( {1 + c - {z_k}} \right)}  < \frac{\gamma }{{1 + c - \Delta }}\hfill}}{e^{ - \frac{c}{{{{\bar \alpha }_2}{P_1}}}\sum\nolimits_{k = 1}^{K - 1} {\frac{1}{{{z_k}}}} }}\prod\nolimits_{k = 1}^{K - 1} {\frac{1}{{{z_k}^2}}} \left( {{e^{ - \frac{1}{{{{\bar \alpha }_2}{P_1}}}}} - {e^{ - \frac{c}{{{{\bar \alpha }_2}{P_1}}}\frac{1}{{1 + c - \frac{\gamma }{{\prod\nolimits_{k = 1}^{K - 1} {\left( {1 + c - {z_k}} \right)} }}}}}}} \right)d{z_1} \cdots d{z_{K - 1}}\notag\\
& + \frac{{{{\bar \alpha }_2}{P_1}}}{c}{e^{ - \frac{1}{{{{\bar \alpha }_2}{P_1}}}}}{\psi _{K - 1}}\left( {\frac{\gamma }{{1 + c}}} \right).
\end{align}
%\hrulefill
\end{figure*}
%${\psi _K}\left( {{\gamma}} \right)$ is upper bounded as
Similarly to \eqref{eqn:g_red_approx1}, ${\psi _K}\left( \gamma  \right)$ is upper bounded as \eqref{eqn:psi_upper}, shown at the top of the next page.
\begin{figure*}[!t]
\begin{align}\label{eqn:psi_upper}
&{\psi _K}\left( \gamma  \right) \le \frac{{{{\bar \alpha }_2}{P_1}}}{c}\left( {{e^{ - \frac{1}{{{{\bar \alpha }_2}{P_1}}}}} - {e^{ - \frac{1}{{{{\bar \alpha }_2}{P_1}}}\frac{c}{\Delta }}}} \right)\underbrace {{{\int _{\scriptstyle{z_{1:K - 1}} \in \left[ {0,c} \right],\hfill\atop
\scriptstyle\frac{\gamma }{{1 + c - \Delta }} \le \prod\nolimits_{k = 1}^{K - 1} {\left( {1 + c - {z_k}} \right)}  < \gamma \hfill}}{e^{ - \frac{c}{{{{\bar \alpha }_2}{P_1}}}\sum\nolimits_{k = 1}^{K - 1} {\frac{1}{{{z_k}}}} }}\prod\nolimits_{k = 1}^{K - 1} {\frac{1}{{{z_k}^2}}} d{z_1} \cdots d{z_{K - 1}}}}_{ \le {\psi _K}\left( \gamma  \right)}\notag\\
& + \frac{{{{\bar \alpha }_2}{P_1}}}{c}{e^{ - \frac{1}{{{{\bar \alpha }_2}{P_1}}}}}\underbrace {{{\int _{\scriptstyle{z_{1:K - 1}} \in \left[ {0,c} \right],\hfill\atop
\scriptstyle\frac{\gamma }{{1 + c}} \le \prod\nolimits_{k = 1}^{K - 1} {\left( {1 + c - {z_k}} \right)}  < \frac{\gamma }{{1 + c - \Delta }}\hfill}}{e^{ - \frac{c}{{{{\bar \alpha }_2}{P_1}}}\sum\nolimits_{k = 1}^{K - 1} {\frac{1}{{{z_k}}}} }}\prod\nolimits_{k = 1}^{K - 1} {\frac{1}{{{z_k}^2}}} d{z_1} \cdots d{z_{K - 1}}}}_{ = {\psi _{K - 1}}\left( {\frac{\gamma }{{1 + c - \Delta }}} \right) - {\psi _K}\left( {\frac{\gamma }{{1 + c}}} \right)} + \frac{{{{\bar \alpha }_2}{P_1}}}{c}{e^{ - \frac{1}{{{{\bar \alpha }_2}{P_1}}}}}{\psi _{K - 1}}\left( {\frac{\gamma }{{1 + c}}} \right).
\end{align}
\hrulefill
\end{figure*}
By applying $1 - {e^{ - x}} \le x$ to \eqref{eqn:psi_upper}, ${\phi _K}\left( \gamma  \right)$ is thus upper bounded as \eqref{eqn:psi_lower_upper}.
%Finally, ${\psi _K}\left( {{\gamma}} \right)$ is upper bounded as
%\begin{equation}\label{eqn:psi_upper}
%{\psi _K}\left( {{\gamma}} \right) \le \frac{{\bar \alpha_2 {P_1}}}{c}\left( {{e^{ - \frac{c}{{\bar \alpha_2 {P_1}}}\frac{1}{c}}} - {e^{ - \frac{c}{{\bar \alpha_2 {P_1}}}\frac{1}{\Delta }}}} \right){\psi _{K - 1}}\left( {{\gamma}} \right) + \frac{{\bar \alpha_2 {P_1}}}{c}{e^{ - \frac{c}{{\bar \alpha_2 {P_1}}}c}}{\psi _{K - 1}}\left( {\frac{{{\gamma}}}{{1 + c - \Delta }}} \right)
%\end{equation}

\section{Proof of Lemma \ref{the:psi_asy}}\label{app:proof_psi_gener}
\subsection{Case 1: $\gamma=1$}
If $\gamma =1$, ${\psi _K}\left( \gamma \right)$ is founded to be zero from its definition in \eqref{eqn:cdf_gamma_sum_ir_fin} because $u\left( {\gamma  - \prod\nolimits_{k = 1}^K {\left( {1 + c - {z_k}} \right)} } \right)=0$, where $z_k \in [0,c]$.
\subsection{Case 2: $1<\gamma<1+c$}
For the case of $1<\gamma<1+c$, the set of $\{\left( {{z_1},\cdots,{z_K}} \right)\}$ constituted by $\prod\nolimits_{k = 1}^K {\left( {1 + c - {z_k}} \right)}  \le {\gamma}$ is a subset of $\left\{ {\left( {{z_1},\cdots,{z_K}} \right):{z_k} \ge 1+c- \gamma} \right\}$ because $0 \le z_k \le c$. Accordingly, ${\psi _K}\left( {{\gamma}}  \right)$ is upper bounded as
\begin{align}\label{eqn:phi_K_lower}
&{\psi _K}\left( \gamma \right)\le \int\nolimits_{1+c- \gamma}^c { \cdots \int\nolimits_{1+c- \gamma }^c {{e^{ - \frac{c}{{\bar \alpha_2 {P_1}}}\sum\nolimits_{k = 1}^K {\frac{1}{{{z_k}}}} }}\prod\nolimits_{k = 1}^K {\frac{1}{{{z_k}^2}}} d{z_1} \cdots d{z_K}} } \notag\\
 %&= {\left( {\int\nolimits_{1+c- \gamma}^c {{e^{ - \frac{c}{{\bar \alpha_2 {P_1}}}\frac{1}{{{z_k}}}}}\frac{1}{{{z_k}^2}}d{z_k}} } \right)^K} \notag\\
 &= {\left( {\frac{{\bar \alpha_2 {P_1}}}{c}\left( {{e^{ - \frac{1}{{\bar \alpha_2 {P_1}}}}} - {e^{ - \frac{c}{{\bar \alpha_2 {P_1}\left({1+c - \gamma }\right)}}}}} \right)} \right)^K}\notag \\
 &=e^{ - K\varepsilon_3 }{\left( {\frac{1}{{1+c - \gamma }} - \frac{1}{c}} \right)^K},%\notag\\
 %&\le {e^{ - K\varepsilon_3 }}{\left( {\frac{1}{{c - (\gamma-1) }} - \frac{1}{c}} \right)^K}
\end{align}
where the last second step holds by using Lagrange's mean value theorem and ${1}/{\left({\bar \alpha_2 {P_1}}\right)} \le \varepsilon_3  \le {c}/{\left({\bar \alpha_2 {P_1}}{(1+c-\gamma) }\right)}$. Since $\varepsilon_3 \ge {1}/{\left({\bar \alpha_2 {P_1}}\right)} $, the upper bound of \eqref{eqn:phi_K_lowerupper} is thus demonstrated.

Moreover, $\left\{ {\left( {{z_1},\cdots,{z_K}} \right):{z_k} \ge 1 + c - \sqrt[K]{\gamma}} \right\}$ is readily found to be a subset of $\prod\nolimits_{k = 1}^K {\left( {1 + c - {z_k}} \right)}  \le {\gamma}$, we reach
\begin{align}\label{eqn:phi_K_upper}
&{\psi _K}\left( \gamma  \right) \notag\\
&\ge \int\nolimits_{1+c-\sqrt[K]{\gamma}}^c { \cdots \int\nolimits_{1+c-\sqrt[K]{\gamma}}^c {{e^{ - \frac{c}{{\bar \alpha_2 {P_1}}}\sum\nolimits_{k = 1}^K {\frac{1}{{{z_k}}}} }}\prod\nolimits_{k = 1}^K {\frac{1}{{{z_k}^2}}} d{z_1} \cdots d{z_K}} } \notag\\
% &= {\left( {\int\limits_{1+c-\sqrt[K]{\gamma}}^c {{e^{ - \frac{c}{{\bar \alpha_2 {P_1}}}\frac{1}{{{z_k}}}}}\frac{1}{{{z_k}^2}}d{z_k}} } \right)^K} \notag\\
 &= {\left( {\frac{{\bar \alpha_2 {P_1}}}{c}\left( {{e^{ - \frac{1}{{\bar \alpha_2 {P_1}}}}} - {e^{ - \frac{c}{{\bar \alpha_2 {P_1}}{\left(1+c-\sqrt[K]{\gamma}\right)}}}}} \right)} \right)^K}\notag\\
 & = {e^{ - K{\varepsilon _4}}}{\left( {\frac{1}{{1 + c - \sqrt[K]{\gamma }}} - \frac{1}{c}} \right)^K},
\end{align}
where the last step holds in analogous to \eqref{eqn:phi_K_lower} and ${1}/{\left({\bar \alpha_2 {P_1}}\right)} \le \varepsilon_4  \le {c}/{\left({\bar \alpha_2 {P_1}}{(1+c-\sqrt[K]{\gamma}) }\right)}$. Considering $\varepsilon_4  \le {c}/{\left({\bar \alpha_2 {P_1}}{(1+c-\sqrt[K]{\gamma}) }\right)}$, ${\psi _K}\left( \gamma  \right) $ is lower bounded as \eqref{eqn:phi_K_lowerupper}. Since ${\psi _K}\left( \gamma  \right) $ is an increasing function of $P_1$ and it is also upper bounded, the limit of $ {\psi _K}\left( \gamma  \right)$ exists as $P_1 \to \infty$. With \eqref{eqn:phi_K_lowerupper},
$\mathop {\lim }\nolimits_{{P_1} \to \infty } {\psi _K}\left( \gamma  \right)$ is bounded as
%$\frac{c}{{\bar \alpha_2 {P_1}}}\frac{1}{c} \le \varepsilon  \le \frac{c}{{\bar \alpha_2 {P_1}}}\frac{1}{{1+c-\sqrt[K]{\gamma}}}$
%Noting that ${\phi _K}\left( \gamma  \right)$ is an increasing function of $P_1$, and ${\phi _K}\left( \gamma  \right)$ is upper bounded as (\ref{eqn:phi_K_lower}), the limit of ${\psi _K}\left( \gamma  \right)$  exists as $P_1$ increases. Besides, it is not hard to prove that
\begin{align}\label{eqn:phi_K_boudns}
{\left( {\frac{1}{{1+c-\sqrt[K]{\gamma}}} - \frac{1}{c}} \right)^K} &\le \mathop {\lim }\limits_{{P_1} \to \infty } {\psi _K}\left( \gamma  \right)\notag\\
&\le {\left( {\frac{1}{{1+c-\gamma}} - \frac{1}{c}} \right)^K}.
\end{align}
Hence, the limit of $ {\psi _K}\left( \gamma  \right)$ as $P_1 \to \infty$ is a non-zero constant if $1<\gamma<1+c$.
\subsection{Case 3: $\gamma\ge(1+c)^{K}$}
If $\gamma\ge(1+c)^{K}$, $u\left( {\gamma  - \prod\nolimits_{k = 1}^K {\left( {1 + c - {z_k}} \right)} } \right)=1$ follows. In analogous to  \eqref{eqn:phi_K_lower_ccc3}, $ {\psi _K}\left( \gamma  \right)$ reduces to
\begin{align}\label{eqn:psi_K_lower_ccc3}
{\psi _K}\left( \gamma  \right) %&= \int\nolimits_{0}^c { \cdots \int\nolimits_{0}^c {{e^{ - \frac{c}{{\bar \alpha_2 {P_1}}}\sum\nolimits_{k = 1}^K {\frac{1}{{{z_k}}}} }}\prod\limits_{k = 1}^K {\frac{1}{{{z_k}^2}}} d{z_1} \cdots d{z_K}} } \notag\\
 %&= {\left( {\int\nolimits_{c - \gamma }^c {{e^{ - \frac{c}{{\bar \alpha_2 {P_1}}}\frac{1}{{{z_k}}}}}\frac{1}{{{z_k}^2}}d{z_k}} } \right)^K} \notag\\
 &={{\left( {\frac{{\bar \alpha_2 }{P_1}}{c}} \right)}^K}{e^{ - \frac{K}{{\bar \alpha_2 {P_1}}}}}=\Theta({P_1}^K).
\end{align}
\subsection{Case 4: $(1+c)^k\le\gamma<(1+c)^{k+1}$ for $k \in [1,K-1]$}
Similarly to Appendix \ref{sec:case4_cc}, the case of $(1+c)^k<\gamma<(1+c)^{k+1}$ is first considered and that of $\gamma = (1+c)^k$ is specially treated. By repeatedly using \eqref{eqn:psi_lower_upper} together with the results in cases 2 and 3, the lower and upper bounds of ${\psi _K}\left( \gamma  \right)$ show ${\psi _K}\left( \gamma  \right)=\Theta({P_1}^k)$. Similarly, it is noteworthy that $\Delta$ should be properly chosen to obtain a tight upper bound for ${\psi _K}\left( \gamma  \right) $. Specifically, $\Delta$ could be assigned with $\Delta = (1+c-\exp({\rm mod}(\ln\gamma,\ln(1+c))))/2$ in each recursive step.%where ${\rm mod}(\cdot,\cdot)$ denotes the modulo operation

On the other hand, if $\gamma = (1+c)^k$, by successively applying the upper bound in \eqref{eqn:psi_lower_upper}, we have ${\phi _K}\left( (1+c)^k  \right) \le \Theta({P_1}^k)$. By repeatedly using the lower bound of ${\phi _\kappa}\left( c  \right)$, it follows that
\begin{equation}\label{eqn:psi_lowerck}
{\psi _K}\left( (1+c)^k   \right) \ge  \left(\frac{{\bar \alpha_2 {P_1}}}{c}\right)^{k-1}{e^{ - \frac{k-1}{{\bar \alpha_2 {P_1}}}}}{\psi _{\kappa}}\left( {{{1 + c}}} \right),
\end{equation}
where $\kappa=K-k+1$. With the definition of ${\psi _K}\left( \gamma   \right)$ in \eqref{eqn:cdf_gamma_sum_ir_fin}, ${{\psi _\kappa}\left( 1+c \right)}$ can be rewritten similarly to \eqref{eqn:phi_K_c} as
\begin{align}\label{eqn:psi_K_c}
&{\psi _\kappa }\left( {1 + c} \right) = \int_0^c {{e^{ - \frac{c}{{{{\bar \alpha }_2}{P_1}}}\frac{1}{{{z_\kappa }}}}}\frac{1}{{{z_\kappa }^2}}d{z_\kappa }} \int_0^c { \cdots \int_0^c {{e^{ - \frac{c}{{{{\bar \alpha }_2}{P_1}}}\sum\nolimits_{k = 1}^\kappa  {\frac{1}{{{z_k}}}} }}} } \notag\\
&\times \prod\limits_{k = 1}^\kappa  {\frac{1}{{{z_k}^2}}} u\left( {\frac{{1 + c}}{{1 + c - {z_\kappa }}} - \prod\nolimits_{k = 1}^{\kappa  - 1} {\left( {1 + c - {z_k}} \right)} } \right)d{z_1} \cdots d{z_{\kappa  - 1}}\notag\\
& = \int_0^c {{e^{ - \frac{c}{{{{\bar \alpha }_2}{P_1}}}\frac{1}{{{z_\kappa }}}}}\frac{1}{{{z_\kappa }^2}}{\psi _{\kappa  - 1}}\left( {\frac{{1 + c}}{{1 + c - {z_\kappa }}}} \right)d{z_\kappa }}.
\end{align}
Thanks to the increasing monotonicity of ${\psi _\kappa}\left( z  \right)$ with respect to $z$, \eqref{eqn:psi_K_c} can be further expressed by using the intermediate value theorem as
\begin{align}\label{eqn:psi_K_c_fin}
&{\psi _\kappa }\left( 1+c \right) = {\psi _{\kappa - 1}}\left( {\frac{{1 + c}}{{1 + c - \varsigma _{P_1}}}}  \right)\int_0^c {{{e^{ - \frac{c}{{{P_1}{{\bar \alpha }_2}}}\frac{1}{{{z_\kappa}}}}}\frac{1}{{{z_\kappa}^2}}}d{z_\kappa}} \notag\\
&= \frac{{P_1}{{\bar \alpha }_2}}{c}{e^{ - \frac{1}{{{P_1}{{\bar \alpha }_2}}}}}{\psi _{\kappa - 1}}\left( {\frac{{1 + c}}{{1 + c - \varsigma _{P_1}}}}  \right),
\end{align}% This step is actually not rigorous.
where $\varsigma_{P_1}  \in \left( {0,c} \right)$. Proceeding as in the proof of Theorem \ref{the:inter_med}, with the increase of $P_1$ and the condition $1 < {\left({{1 + c}}\right)/\left({{1 + c - \varsigma _{P_1}}}\right)} < 1+c$, $\mathop {\lim }\nolimits_{{P_1} \to \infty } {\psi _{\kappa - 1}}\left( {{\left({1 + c}\right)}/{\left({1 + c - \varsigma _{P_1}}\right)}}  \right) = {\rm const} \ne 0$ and $\mathop {\lim }\nolimits_{{P_1} \to \infty } \varsigma_{P_1} \ne 0$ can be proved by contradiction. Therefore, by applying the result of Case 2 to \eqref{eqn:psi_K_c_fin}, we arrive at ${\psi _\kappa }\left( {1 + c} \right) =\Theta(P_1)$. By combining this result and \eqref{eqn:psi_lowerck}, ${\psi _K}\left( (1+c)^k  \right) \ge \Theta({P_1}^k)$ follows. Together with the upper bounds of ${\psi _K}\left( (1+c)^k  \right)$, applying the squeeze theorem leads to ${\psi _K}\left( (1+c)^k   \right) = \Theta({P_1}^k)$. We thus accomplish the proof.

\section{Proof of \eqref{eqn:asy3}}\label{app:div_eff}
Based on whether the accumulated mutual information in \eqref{eqn:noma_harq_inf_i2_eff} is a maximization or summation of multiple random variables, the proof of \eqref{eqn:asy3} is divided into the following two cases.
\subsection{Type I HARQ-aided NOMA system}
By using \eqref{eqn:noma_harq_inf_i2_eff} along with the independence between channel gains, we have
\begin{align}\label{eqn:out2_I_eff}
&\Pr \left\{ {{I_{2 \to 2,K,\ell }} < {R_2}} \right\}\notag \\
 &=\Pr \left\{ {\max \left\{ {{{\log }_2}\left( {1 + \frac{{{\alpha _{2,k}}{P_2}}}{{{\alpha _{2,k}}{P_1} + 1}}} \right):k \in \left[ {1,\ell } \right]} \right\} < {R_2}} \right\}\notag\\
 &\times \Pr \left\{ {\max \left\{ {{{\log }_2}\left( {1 + {\alpha _{2,k}}{P_2}} \right):k \in \left[ {\ell  + 1,K} \right]} \right\} < {R_2}} \right\}\notag\\
 &= \Theta \left( {{P_2}^{ - \ell {{\left[ {1 - \left\lfloor {\frac{{{2^{{R_2}}} - 1}}{c}} \right\rfloor } \right]}^ + }}} \right)\Theta \left( {{P_2}^{ - \left( {K - \ell } \right)}} \right),%=\Theta \left( {{P_2}^{ - {d_{2,\ell }}}} \right)
\end{align}
where the last step holds by using \eqref{eqn:diver_order_I} and \cite{shi2016optimal}. Applying the property of the product of big-Theta notation \eqref{eqn:out2_I_eff} finally gives rise to \eqref{eqn:asy3}.
\subsection{HARQ-CC and -IR-aided NOMA systems}
We first derive the asymptotic outage behavior for the HARQ-CC-aided NOMA system. For simplicity, we define $Y = {\sum\nolimits_{k = 1}^\ell  {{{{\alpha _{2,k}}{P_2}}}/{({{\alpha _{2,k}}{P_1} + 1})}} }$, $Z={\sum\nolimits_{k = \ell  + 1}^K {{\alpha _{2,k}}{P_2}} }$, $\Pr \left\{ {{I_{2 \to 2,K,\ell }} < {R_2}} \right\}$ can be rewritten by using \eqref{eqn:noma_harq_inf_i2_eff} as
\begin{align}\label{eqn:out2_cc_eff}
 &\Pr \left\{ {{I_{2 \to 2,K,\ell }} < {R_2}} \right\}= \Pr \left\{ {Y + Z < {2^{{R_2}}} - 1} \right\}\notag\\
%&= {\mathbb E_z}\left\{ {\Pr \left\{ {\left. {Y < {2^{{R_2}}} - 1 - Z} \right|Z = z} \right\}} \right\}\notag\\
& = \int_0^{{2^{{R_2}}} - 1} {\Pr \left\{ {Y < {2^{{R_2}}} - 1 - z} \right\}{f_Z}\left( z \right)dz}.
\end{align}
By using \eqref{eqn:diversity_order_cc}, \eqref{eqn:out2_cc_eff} can be simplified as
\begin{align}\label{eqn:out2_cc_effsim}
 &\Pr \left\{ {{I_{2 \to 2,K,\ell }} < {R_2}} \right\}\notag\\
 &= \int_0^{{2^{{R_2}}} - 1} {\Theta \left( {{P_2}^{ - {{\left[ {\ell  - \left\lfloor {\frac{{{2^{{R_2}}} - 1 - z}}{c}} \right\rfloor } \right]}^ + }}} \right){f_Z}\left( z \right)dz} \notag\\
&= \sum\nolimits_{k = 0}^{\left\lfloor {\frac{{{2^{{R_2}}} - 1}}{c}} \right\rfloor } {\Theta \left( {{P_2}^{ - {{\left[ {\ell  - \left\lfloor {\frac{{{2^{{R_2}}} - 1}}{c}} \right\rfloor  + k} \right]}^ + }}} \right)}\notag\\
&\quad \times \left( {{F_Z}\left( {{z_{k + 1}}} \right) - {F_Z}\left( {{z_k}} \right)} \right).
\end{align}
where we stipulate ${z_0}=0$ and ${z_k} = {2^{{R_2}}} - 1 + c\left( {k - \left\lfloor {{({{2^{{R_2}}} - 1})}/{c}} \right\rfloor } \right)$ for $k\ge 1$. It has been substantiated in \cite{shi2016optimal} that ${{F_Z}\left( {{z_{k}}} \right)}={\Theta \left( {{P_2}^{ - \left( {K - \ell } \right)}} \right)}$. Thus, it follows that
\begin{align}\label{eqn:out2_cc_effsimfin}
 &\Pr \left\{ {{I_{2 \to 2,K,\ell }} < {R_2}} \right\} \notag\\
 &= \sum\nolimits_{k = 0}^{\left\lfloor {\frac{{{2^{{R_2}}} - 1}}{c}} \right\rfloor } {\Theta \left( {{P_2}^{ - {{\left[ {\ell  - \left\lfloor {\frac{{{2^{{R_2}}} - 1}}{c}} \right\rfloor  + k} \right]}^ + } - \left( {K - \ell } \right)}} \right)} \notag\\
 &= \Theta \left( {{P_2}^{ - \min \left\{ {{{\left[ {\ell  - \left\lfloor {\frac{{{2^{{R_2}}} - 1}}{c}} \right\rfloor  + k} \right]}^ + } + K - \ell :k \in \left[ {0,\left\lfloor {\frac{{{2^{{R_2}}} - 1}}{c}} \right\rfloor } \right]} \right\}}} \right).
\end{align}
\eqref{eqn:asy3} thus follows for the HARQ-CC-aided NOMA system. Likewise, by using the results in \cite{shi2017asymptotic}, we can obtain the asymptotic scaling law of $\Pr \left\{ {{I_{2 \to 2,K,\ell }} < {R_2}} \right\}$ for the HARQ-IR-aided NOMA system as \eqref{eqn:asy3}.

\bibliographystyle{ieeetran}
\bibliography{manuscript_1}

% Generated by IEEEtran.bst, version: 1.13 (2008/09/30)
\begin{thebibliography}{10}
\providecommand{\url}[1]{#1}
\csname url@samestyle\endcsname
\providecommand{\newblock}{\relax}
\providecommand{\bibinfo}[2]{#2}
\providecommand{\BIBentrySTDinterwordspacing}{\spaceskip=0pt\relax}
\providecommand{\BIBentryALTinterwordstretchfactor}{4}
\providecommand{\BIBentryALTinterwordspacing}{\spaceskip=\fontdimen2\font plus
\BIBentryALTinterwordstretchfactor\fontdimen3\font minus
  \fontdimen4\font\relax}
\providecommand{\BIBforeignlanguage}[2]{{%
\expandafter\ifx\csname l@#1\endcsname\relax
\typeout{** WARNING: IEEEtran.bst: No hyphenation pattern has been}%
\typeout{** loaded for the language `#1'. Using the pattern for}%
\typeout{** the default language instead.}%
\else
\language=\csname l@#1\endcsname
\fi
#2}}
\providecommand{\BIBdecl}{\relax}
\BIBdecl

\bibitem{ding2017application}
Z.~{Ding}, Y.~{Liu}, J.~{Choi}, Q.~{Sun}, M.~{Elkashlan}, C.~{I}, and H.~V.
  {Poor}, ``Application of non-orthogonal multiple access in {LTE} and {5G}
  networks,'' \emph{IEEE Commun. Mag.}, vol.~55, no.~2, pp. 185--191, Feb.
  2017.

\bibitem{fu2019mode}
Y.~Fu, Y.~Liu, H.~Wang, Z.~Shi, and Y.~Liu, ``Mode selection between index
  coding and superposition coding in cache-based {NOMA} networks,''
  \emph{{IEEE} Commun. Lett.}, vol.~23, no.~3, pp. 478--481, Mar. 2019.

\bibitem{wang2018precoding}
H.~{Wang}, S.~{Leung}, and R.~{Song}, ``Precoding design for two-cell
  {MIMO}-{NOMA} uplink with {CoMP} reception,'' \emph{{IEEE} Commun. Lett.},
  vol.~22, no.~12, pp. 2607--2610, Dec. 2018.

\bibitem{do2018improving}
T.~N. Do, D.~B. Da~Costa, T.~Q. Duong, and B.~An, ``Improving the performance
  of cell-edge users in {NOMA} systems using cooperative relaying,''
  \emph{{IEEE} Trans. Commun.}, vol.~66, no.~5, pp. 1883--1901, May 2018.

\bibitem{zhao2019joint}
N.~{Zhao}, X.~{Pang}, Z.~{Li}, Y.~{Chen}, F.~{Li}, Z.~{Ding}, and M.~{Alouini},
  ``Joint trajectory and precoding optimization for {UAV}-assisted {NOMA}
  networks,'' \emph{IEEE Trans. Commun.}, vol.~67, no.~5, pp. 3723--3735, May
  2019.

\bibitem{an2017achieving}
J.~An, K.~Yang, J.~Wu, N.~Ye, S.~Guo, and Z.~Liao, ``Achieving sustainable
  ultra-dense heterogeneous networks for {5G},'' \emph{IEEE Commun. Mag.},
  vol.~55, no.~12, pp. 84--90, Dec. 2017.

\bibitem{gao2019auction}
X.~Gao, P.~Wang, D.~Niyato, K.~Yang, and J.~An, ``Auction-based time scheduling
  for backscatter-aided {RF}-powered cognitive radio networks,'' \emph{{IEEE}
  Trans. Wireless Commun.}, vol.~18, no.~3, pp. 1684--1697, Mar. 2019.

\bibitem{yang2018non}
K.~Yang, N.~Yang, N.~Ye, M.~Jia, Z.~Gao, and R.~Fan, ``Non-orthogonal multiple
  access: achieving sustainable future radio access,'' \emph{IEEE Commun.
  Mag.}, vol.~57, no.~2, pp. 116--121, Feb. 2018.

\bibitem{dai2015non}
L.~Dai, B.~Wang, Y.~Yuan, S.~Han, I.~Chih-Lin, and Z.~Wang, ``Non-orthogonal
  multiple access for {5G}: solutions, challenges, opportunities, and future
  research trends,'' \emph{{IEEE} Commun. Mag.}, vol.~53, no.~9, pp. 74--81,
  Sep. 2015.

\bibitem{islam2016power}
S.~R. Islam, N.~Avazov, O.~A. Dobre, and K.-S. Kwak, ``Power-domain
  non-orthogonal multiple access ({NOMA}) in {5G} systems: potentials and
  challenges,'' \emph{{IEEE} Commun. Surveys Tuts.}, vol.~19, no.~2, pp.
  721--742, second quarter 2017.

\bibitem{Yaru_TVT}
Y.~Fu, L.~Sala\"un, C.~W. Sung, and C.~S. Chen, ``Subcarrier and power
  allocation for the downlink of multicarrier {NOMA} systems,'' \emph{{IEEE}
  Trans. Veh. Technol.}, vol.~67, no.~12, pp. 11\,833--11\,847, Dec. 2018.

\bibitem{masaracchia2019pso}
A.~Masaracchia, D.~B. Da~Costa, T.~Q. Duong, M.-N. Nguyen, and M.~T. Nguyen,
  ``A {PSO}-based approach for user-pairing schemes in {NOMA} systems: Theory
  and applications,'' \emph{IEEE Access}, vol.~7, pp. 90\,550--90\,564, Jul.
  2019.

\bibitem{zhao2019beamforming}
N.~{Zhao}, W.~{Wang}, J.~{Wang}, Y.~{Chen}, Y.~{Lin}, Z.~{Ding}, and N.~C.
  {Beaulieu}, ``Joint beamforming and jamming optimization for secure
  transmission in {MISO-NOMA} networks,'' \emph{IEEE Trans. Commun.}, vol.~67,
  no.~3, pp. 2294--2305, Mar. 2019.

\bibitem{Tullberg2016METIS}
H.~{Tullberg}, P.~{Popovski}, Z.~{Li}, M.~A. {Uusitalo}, A.~{Hoglund},
  O.~{Bulakci}, M.~{Fallgren}, and J.~F. {Monserrat}, ``The {METIS} {5G} system
  concept: Meeting the {5G} requirements,'' \emph{IEEE Commun. Mag.}, vol.~54,
  no.~12, pp. 132--139, Dec. 2016.

\bibitem{wang2016throughput}
H.~Wang, R.~Song, and S.-H. Leung, ``Throughput analysis of interference
  alignment for a general centralized limited feedback model,'' \emph{{IEEE}
  Trans. Veh. Technol.}, vol.~65, no.~10, pp. 8775--8781, Oct. 2016.

\bibitem{chen2018ultra}
H.~{Chen}, R.~{Abbas}, P.~{Cheng}, M.~{Shirvanimoghaddam}, W.~{Hardjawana},
  W.~{Bao}, Y.~{Li}, and B.~{Vucetic}, ``Ultra-reliable low latency cellular
  networks: Use cases, challenges and approaches,'' \emph{IEEE Commun. Mag.},
  vol.~56, no.~12, pp. 119--125, Dec. 2018.

\bibitem{choi2016re}
J.~Choi, ``Re-transmission diversity multiple access based on {SIC} and
  {HARQ-IR},'' \emph{{IEEE} Trans. Commun.}, vol.~64, no.~11, pp. 4695--4705,
  Sep. 2016.

\bibitem{caire2001throughput}
G.~Caire and D.~Tuninetti, ``The throughput of hybrid-{ARQ} protocols for the
  {Gaussian} collision channel,'' \emph{{IEEE} Trans. Inf. Theory}, vol.~47,
  no.~5, pp. 1971--1988, Jul. 2001.

\bibitem{saito2013system}
Y.~Saito, A.~Benjebbour, Y.~Kishiyama, and T.~Nakamura, ``System-level
  performance evaluation of downlink non-orthogonal multiple access ({NOMA}),''
  in \emph{Proc. IEEE Annu. Symp. Pers. Indoor Mobile Radio Commun.
  (PIMRC'13)}, London, UK, 2013, pp. 611--615.

\bibitem{li2015investigation}
A.~Li, A.~Benjebbour, X.~Chen, H.~Jiang, and H.~Kayama, ``Investigation on
  hybrid automatic repeat request ({HARQ}) design for {NOMA} with {SU-MIMO},''
  in \emph{Proc. IEEE Annu. Symp. Pers. Indoor Mobile Radio Commun.
  (PIMRC'15)}, Hong Kong, 2015, pp. 590--594.

\bibitem{roh2016improvement}
D.~Roh, M.~Kim, and D.-H. Cho, ``Improvement of {HARQ} based on redundant data
  of near user in non-orthogonal multiple access,'' in \emph{Proc. IEEE Annu.
  Symp. Pers. Indoor Mobile Radio Commun. (PIMRC'16)}, 2016, pp. 1--6.

\bibitem{zheng2018cooperative}
Z.~{Shi}, S.~{Ma}, H.~{ElSawy}, G.~{Yang}, and M.~{Alouini}, ``Cooperative
  {HARQ}-assisted {NOMA} scheme in large-scale {D2D} networks,'' \emph{{IEEE}
  Trans. Commun.}, vol.~66, no.~9, pp. 4286--4302, Sep. 2018.

\bibitem{choi2016harq}
J.~Choi, ``On {HARQ-IR} for downlink {NOMA} systems,'' \emph{{IEEE} Trans.
  Commun.}, vol.~64, no.~8, pp. 3576--3584, Aug. 2016.

\bibitem{cai2018performance}
D.~{Cai}, Z.~{Ding}, P.~{Fan}, and Z.~{Yang}, ``On the performance of {NOMA}
  with {Hybrid} {ARQ},'' \emph{{IEEE} Trans. Veh. Technol.}, vol.~67, no.~10,
  pp. 10\,033--10\,038, Oct. 2018.

\bibitem{cai2019impact}
D.~Cai, Y.~Xu, F.~Fang, Z.~Ding, and P.~Fan, ``On the impact of time-correlated
  fading for downlink {NOMA},'' \emph{{IEEE} Trans. Commun.}, vol.~67, no.~6,
  pp. 4491--4504, Jun. 2019.

\bibitem{shi2016optimal}
Z.~Shi, S.~Ma, F.~Hou, K.-W. Tam, and Y.-C. Wu, ``Optimal power allocation for
  {HARQ} schemes over time-correlated {Nakagami}-m fading channels,'' in
  \emph{Proc. IEEE Int. Conf. Commun. Syst. (ICCS'16)}, Dec. 2016, pp. 1--6.

\bibitem{shi2017asymptotic}
Z.~Shi, S.~Ma, G.~Yang, K.~W. Tam, and M.~Xia, ``Asymptotic outage analysis of
  {HARQ-IR} over time-correlated {Nakagami-m} fading channels,'' \emph{IEEE
  Trans. Wireless Commun.}, vol.~16, no.~9, pp. 6119--6134, Sep. 2017.

\bibitem{choi2001class}
S.~Choi and K.~G. Shin, ``A class of adaptive hybrid {ARQ} schemes for wireless
  links,'' \emph{{IEEE} Trans. Veh. Technol.}, vol.~50, no.~3, pp. 777--790,
  May 2001.

\bibitem{chen2013survey}
H.~{Chen}, R.~G. {Maunder}, and L.~{Hanzo}, ``A survey and tutorial on
  low-complexity {Turbo} coding techniques and a holistic hybrid {ARQ} design
  example,'' \emph{{IEEE} Commun. Surveys Tuts.}, vol.~15, no.~4, pp.
  1546--1566, fourth quarter 2013.

\bibitem{chen2013hybrid}
K.~{Chen}, K.~{Niu}, and J.~{Lin}, ``A hybrid {ARQ} scheme based on {Polar}
  codes,'' \emph{{IEEE} Commun. Lett.}, vol.~17, no.~10, pp. 1996--1999, Oct.
  2013.

\bibitem{chen2014polar}
K.~{Chen}, K.~{Niu}, Z.~{He}, and J.~{Lin}, ``Polar coded {HARQ} scheme with
  {Chase} combining,'' in \emph{Proc. IEEE Wireless Communications and
  Networking Conference (WCNC'14)}, Istanbul, Turkey, Apr. 2014, pp. 474--479.

\bibitem{trifonov2012efficient}
P.~{Trifonov}, ``Efficient design and decoding of polar codes,'' \emph{{IEEE}
  Trans. Commun.}, vol.~60, no.~11, pp. 3221--3227, Nov. 2012.

\bibitem{shi2018energy}
Z.~Shi, S.~Ma, G.~Yang, and M.-S. Alouini, ``Energy-efficient optimization for
  {HARQ} schemes over time-correlated fading channels,'' \emph{{IEEE} Trans.
  Veh. Technol.}, vol.~67, no.~6, pp. 4939--4953, Jun. 2018.

\bibitem{tse2005fundamentals}
D.~Tse, \emph{Fundamentals of wireless communication}.\hskip 1em plus 0.5em
  minus 0.4em\relax Cambridge university press, 2005.

\bibitem{wang2003simple}
Z.~Wang and G.~B. Giannakis, ``A simple and general parameterization
  quantifying performance in fading channels,'' \emph{IEEE Trans. Commun.},
  vol.~51, no.~8, pp. 1389--1398, Aug. 2003.

\bibitem{yilmaz2009productshifted}
F.~Yilmaz and M.-S. Alouini, ``Product of shifted exponential variates and
  outage capacity of multicarrier systems,'' in \emph{Proc. European Wireless
  Conf. (EW'09)}, May 2009, pp. 282--286.

\bibitem{yang2015performance}
Z.~Yang, Z.~Ding, P.~Fan, and G.~K. Karagiannidis, ``On the performance of
  non-orthogonal multiple access systems with partial channel information,''
  \emph{{IEEE} Trans. Commun.}, vol.~64, no.~2, pp. 654--667, Feb. 2015.

\bibitem{debnath2010integral}
L.~Debnath and D.~Bhatta, \emph{Integral transforms and their
  applications}.\hskip 1em plus 0.5em minus 0.4em\relax CRC press, 2010.

\end{thebibliography}

\end{document}